\documentclass[11pt]{article}
\usepackage{microtype}
\usepackage{makeidx}
\usepackage[backref=page]{hyperref}

\usepackage{euscript}
\usepackage{amsmath,amsfonts,amsthm}
\usepackage{dsfont}
\usepackage{latexsym}
\usepackage{subfigure}
\usepackage{graphicx}
\usepackage{fancybox}
\usepackage{fullpage}
\usepackage{paralist}
\usepackage{color}
\usepackage{wrapfig}
\usepackage{tikz}
\usetikzlibrary{decorations.pathreplacing}
\usepackage{setspace}
\usepackage{algorithm}
\usepackage[noend]{algpseudocode}
\usepackage[framemethod=tikz]{mdframed}
\usepackage{xspace}
\usepackage{framed}
\usepackage{thmtools}
\usepackage{thm-restate}
\usepackage{tabu}

\newtheorem{theorem}{Theorem}[section]
\newtheorem{corollary}[theorem]{Corollary}
\newtheorem{lemma}[theorem]{Lemma}

\newtheorem{definition}[theorem]{Definition}
\newtheorem{remark}[theorem]{Remark}

\newtheorem{fact}[theorem]{Fact}

\newtheorem{framework}[theorem]{Framework}

\newenvironment{proofof}[1]{\begin{trivlist} \item {\bf Proof
#1:~~}}
  {\qed\end{trivlist}}

\newcommand{\namedref}[2]{\hyperref[#2]{#1~\ref*{#2}}}
\newcommand{\thmlab}[1]{\label{thm:#1}}
\newcommand{\thmref}[1]{\namedref{Theorem}{thm:#1}}
\newcommand{\lemlab}[1]{\label{lem:#1}}
\newcommand{\lemref}[1]{\namedref{Lemma}{lem:#1}}

\newcommand{\corlab}[1]{\label{cor:#1}}
\newcommand{\corref}[1]{\namedref{Corollary}{cor:#1}}
\newcommand{\seclab}[1]{\label{sec:#1}}

\newcommand{\applab}[1]{\label{app:#1}}
\newcommand{\appref}[1]{\namedref{Appendix}{app:#1}}
\newcommand{\factlab}[1]{\label{fact:#1}}
\newcommand{\factref}[1]{\namedref{Fact}{fact:#1}}

\newcommand{\figlab}[1]{\label{fig:#1}}
\newcommand{\figref}[1]{\namedref{Figure}{fig:#1}}
\newcommand{\alglab}[1]{\label{alg:#1}}
\renewcommand{\algref}[1]{\namedref{Algorithm}{alg:#1}}

\newcommand{\frameref}[1]{\namedref{Framework}{frame:#1}}
\newcommand{\framelab}[1]{\label{frame:#1}}
\newcommand{\equ}{\textsc{equality}}
\newcommand{\eq}{\textsc{eq}}
\newcommand{\fail}{\textbf{FAIL}}

\def \H    {\mdef{\mathcal{H}}}
\def \v    {\mdef{\mathbf{v}}}
\def \x    {\mdef{\mathbf{x}}}
\def \M    {\mdef{\mathbf{M}}}
\def \m    {\mdef{\mathbf{m}}}
\def \A    {\mdef{\mathcal{A}}}

\def \R    {\mdef{\mathbb{R}}}
\def \S    {\mdef{\mathcal{S}}}
\def \T    {\mdef{\mathcal{T}}}

\def \sampler    {\mdef{\mathsf{Sampler}}}

\def \estimate    {\mdef{\mathsf{Estimate}}}
\def \misragries    {\mdef{\mathsf{MisraGries}}}
\def \zsampler    {\mdef{\mathsf{L0Sampler}}}



\newcommand\norm[1]{\left\lVert#1\right\rVert}
\newcommand{\PPr}[1]{\ensuremath{\mathbf{Pr}\left[#1\right]}}
\newcommand{\Ex}[1]{\ensuremath{\mathbb{E}\left[#1\right]}}

\renewcommand{\O}[1]{\ensuremath{\mathcal{O}\left(#1\right)}}
\newcommand{\tO}[1]{\ensuremath{\tilde{\mathcal{O}}\left(#1\right)}}
\newcommand{\eps}{\epsilon}
\newcommand{\cP}{\mathcal{P}}

\newcommand{\mdef}[1]{{\ensuremath{#1}}\xspace}  

\DeclareMathOperator*{\polylog}{polylog}
\DeclareMathOperator*{\poly}{poly}
\DeclareMathOperator*{\cost}{cost}
\DeclareMathOperator*{\out}{out}
\DeclareMathOperator*{\reff}{ref}
\DeclareMathOperator*{\rcost}{rcost}
\DeclareMathOperator*{\rerr}{rerr}
\DeclareMathOperator*{\verr}{verr}
\newcommand{\superscript}[1]{\ensuremath{^{\mbox{\tiny{\textit{#1}}}}}\xspace}
\def \th {\superscript{th}}     


\newcommand{\ignore}[1]{}

\newif\ifnotes\notestrue 
\ifnotes
\newcommand{\samson}[1]{\textcolor{purple}{{\bf (Samson:} {#1}{\bf ) }} \marginpar{\tiny\bf
             \begin{minipage}[t]{0.5in}
               \raggedright S:
            \end{minipage}}}            							
\else
\newcommand{\samson}[1]{}
\fi

\hypersetup{
     colorlinks   = true,
     citecolor    = blue,
		 linkcolor		= red
}

\makeatletter
\renewcommand*{\@fnsymbol}[1]{\textcolor{mahogany}{\ensuremath{\ifcase#1\or *\or \dagger\or \ddagger\or
 \mathsection\or \triangledown\or \mathparagraph\or \|\or **\or \dagger\dagger
   \or \ddagger\ddagger \else\@ctrerr\fi}}}
\makeatother

\providecommand{\email}[1]{\href{mailto:#1}{\nolinkurl{#1}\xspace}}

\definecolor{mahogany}{rgb}{0.75, 0.25, 0.0}
\definecolor{darkblue}{rgb}{0.0, 0.0, 0.55}
\definecolor{darkpastelgreen}{rgb}{0.01, 0.75, 0.24}
\definecolor{darkgreen}{rgb}{0.0, 0.2, 0.13}
\definecolor{darkgoldenrod}{rgb}{0.72, 0.53, 0.04}
\definecolor{darkred}{rgb}{0.55, 0.0, 0.0}
\definecolor{forestgreen}{rgb}{0.13, 0.55, 0.13}
\hypersetup{
     colorlinks   = true,
     citecolor    = mahogany,
		 linkcolor		= forestgreen
}

\begin{document}
\title{Truly Perfect Samplers for Data Streams and Sliding Windows}
\author{
Rajesh Jayaram\thanks{Carnegie Mellon University and Google Research NYC. 
E-mail: \email{rkjayara@google.com}}\\
\and
David P. Woodruff\thanks{Carnegie Mellon University. 
E-mail: \email{dwoodruf@cs.cmu.edu}}\\
\and
Samson Zhou\thanks{Carnegie Mellon University. 
E-mail: \email{samsonzhou@gmail.com}}
}

\maketitle
\begin{abstract}
In the $G$-sampling problem, the goal is to output an index $i$ of a vector $f \in\mathbb{R}^n$, such that for all coordinates $j \in [n]$,
\[\textbf{Pr}[i=j] = (1 \pm \eps) \frac{G(f_j)}{\sum_{k\in[n]} G(f_k)} + \gamma,\]
where $G:\mathbb{R} \to \mathbb{R}_{\geq 0}$ is some non-negative function. If $\eps = 0$ and $\gamma = 1/\poly(n)$, the sampler is called \textit{perfect}. In the data stream model, $f$ is defined implicitly by a sequence of updates to its coordinates, and the goal is to design such a sampler in small space. Jayaram and Woodruff (FOCS 2018) gave the first perfect $L_p$ samplers in turnstile streams, where $G(x)=|x|^p$, using $\text{polylog}(n)$ space for $p\in(0,2]$. 
However, to date all known sampling algorithms are not \textit{truly perfect}, since their output distribution is only point-wise $\gamma = 1/\poly(n)$ close to the true distribution. This small error can be significant when samplers are run many times on successive portions of a stream, and leak potentially sensitive information about the data stream.
	 
In this work, we initiate the study of \textit{truly perfect} samplers, with $\eps = \gamma = 0$, and comprehensively investigate their complexity in the data stream and sliding window models. We begin by showing that sublinear space truly perfect sampling is impossible in the turnstile model, by proving a lower bound of $\Omega\left(\min\left\{n,\log \frac{1}{\gamma}\right\}\right)$ for any $G$-sampler with point-wise error $\gamma$ from the true distribution. We then give a general time-efficient sublinear-space framework for developing truly perfect samplers in the insertion-only streaming and sliding window models. As specific applications, our framework addresses $L_p$ sampling for all $p>0$, e.g., $\tO{n^{1-1/p}}$ space for $p\ge 1$, concave functions, and a large number of measure functions, including the $L_1-L_2$, Fair, Huber, and Tukey estimators. The update time of our truly perfect $L_p$-samplers is $\O{1}$, which is an exponential improvement over the running time of previous perfect $L_p$-samplers. 
\end{abstract}

\section{Introduction}
The streaming model of computation has emerged as an increasingly popular paradigm for analyzing massive data sets, such as network traffic monitoring logs, IoT sensor logs, financial market updates, e-commerce transaction logs, and scientific observations, as in computational biology, astronomy, or particle physics. 
In the (one-pass) streaming model, an underlying data set is implicitly defined through sequential updates that arrive one-by-one and can only be observed once, and the proposed algorithms are required to use space that is sublinear in the size of the input.

Sampling has proven to be a fundamental and flexible technique for the analysis of massive data.
A significant line of active work has studied sampling techniques~\cite{Vitter85, GemullaLH08, CohenCD11, CohenDKLT11, CohenCD12, CohenDKLT14, Haas16, Cohen18} in big data applications such as network traffic analysis~\cite{GilbertKMS01, EstanV03, MaiCSYZ06, HuangNGHJJT07, ThottanLJ10}, database analysis~\cite{LiptonNS90, HaasS92, LiptonN95, HaasNSS96, GibbonsM98, Haas16, CohenG19}, distributed computing~\cite{CormodeMYZ10, TirthapuraW11, CormodeMYZ12, WoodruffZ16, JayaramSTW19}, and data summarization~\cite{FriezeKV04, DeshpandeV06, DeshpandeRVW06, DeshpandeV07, AggarwalDK09, MahabadiRWZ20}. 
Given a non-negative weight function $G: \mathbb{R} \to \mathbb{R}_{\geq 0}$ and a vector $f \in \R^n$, the goal of a $G$-sampler is to return an index $i \in \{1,2,\dots , n\}$ with probability proportional to $G(f_i)$. In the data stream setting, the vector $f$, called the \emph{frequency vector}, is given by a sequence of $m$ updates (referred to as insertions or deletions) to its coordinates. More formally, in a data stream the vector $f$ is initialized to $0^n$, and then receives a stream of updates of the form $(i,\Delta) \in [n] \times \{-M,\dots,M\}$ for some integer $M > 0$. The update $(i,\Delta)$ causes the change $f_{i} \leftarrow f_{i} + \Delta$. This is known as the \textit{turnstile} model of streaming, as opposed to the \textit{insertion-only} model where $\Delta \geq 0$ for all updates. A $1$-pass $G$-sampler must return an index given only one pass through the updates of the stream.
 
The most well-studied weight functions are when $G(x) = |x|^p$ for some $p >0$. Such samplers in their generality, known as \textit{$L_p$ samplers}, were introduced by \cite{MonemizadehW10}. $L_p$ samplers have been used to develop efficient streaming algorithms for heavy hitters, $L_p$ estimation, cascaded norm approximation, and finding duplicates \cite{MonemizadehW10, AndoniKO11, JowhariST11, BravermanOZ12, JayaramW18,cohen2020wor}. 
Formally, a $G$-sampler is defined as follows. 
\begin{definition}[$G$ sampler]\label{def:lpsamp}
Let $f\in\mathbb{R}^n$, $0\le\eps,\gamma<1$, $0<\delta<1$ and $G: \mathbb{R} \to \mathbb{R}_{\geq 0}$ be a non-negative function satisfying $G(0) = 0$.  An $(\eps,\gamma,\delta)$-\emph{approximate $G$-sampler} is an algorithm that outputs an index $i\in[n]$ such that if $f \neq \vec{0}$, for each $j\in[n]$,
\begin{equation}\label{eqn:gsamp}
\PPr{i=j}=(1\pm\eps)\frac{G(f_j)}{\sum_{k \in [n]} G(f_k)} \pm \gamma
\end{equation}
and if $f =  \vec{0}$, then the algorithm outputs a symbol $\bot$ with probability at least $1-\delta$. 
The sampler is allowed to output $\fail$ with some probability $\delta$, in which case it returns nothing.

If $\eps=0$ and $\gamma = n^{-c}$, where $c>1$ is an arbitrarily large constant, then the sampler is called \emph{perfect}.
  If $\eps = 0$ and $\gamma=0$, then the sampler is called \textit{truly perfect}.
\end{definition}
In general, if $\eps>0$ and $\gamma = n^{-c}$, where $c>1$ is an arbitrarily large constant, an $(\eps,\gamma,\delta)$-sampler is commonly referred to as an $\eps$-relative error \textit{approximate sampler}.  Notice that the guarantees on the distribution of a sampler are all conditioned on the sampler not outputting $\fail$. In other words, conditioned on not outputting $\fail$, the sampler must output a value in $[n] \cup \{\bot\}$ from a distribution which satisfies the stated requirements. When $f = \vec{0}$, the distribution in equation \ref{eqn:gsamp} is undefined; therefore the special symbol $\bot$ is needed to indicate this possibility. 

In the case of $L_p$ samplers with $p > 0$, the underlying distribution is given by $|f_j|^p/\|f\|_p^p$. Such samplers are particularly useful as subroutines for other streaming and data-analytic tasks.  In the insertion-only model, the classical reservoir sampling technique of \cite{Vitter85} gives an $\O{\log n}$ space truly perfect $L_1$ sampling algorithm. However,  
when $p \neq 1$, or when negative updates are also allowed (i.e., the turnstile model), the problem becomes substantially more challenging. In fact, the question of whether such $L_p$ samplers even exist using sublinear space was posed by Cormode, Murthukrishnan, and Rozenbaum~\cite{CormodeMR05}. 

Monemizadeh and Woodruff partially answered this question by showing the existence of an $(\eps,n^{-c},1/2)$-approximate $L_p$ sampler for $p \in [1,2]$ using $\poly\left(\frac{c}{\eps},\log n\right)$ bits of space in the turnstile model~\cite{MonemizadehW10}.  
The space bounds were improved by Andoni, Krauthgamer, and Onak \cite{AndoniKO11}, and then later by Jowhari, Saglam, and Tardos \cite{JowhariST11}, to roughly $\O{c \eps^{-\max(1,p)}\log^2 n}$ for $p\in(0,2)$ and $\O{c \eps^{-2} \log^3 n}$ for $p=2$. This matched the lower bound of $\Omega(\log^2 n)$ in terms of $\log n$ factors for $p<2$, as shown in the same paper~\cite{JowhariST11}, but was loose in terms of $\eps$. This gap was explained by Jayaram and Woodruff~\cite{JayaramW18}, who gave the first \emph{perfect} $(0,n^{-c},1/2)$-$F_p$ samplers in the streaming model, using $\O{c \log^2 n}$ bits of space for $p\in(0,2)$ and $\O{c \log^3 n}$ bits of space for $p=2$.  For a further discussion on the development of $L_p$ samplers in the streaming model, we direct the reader to the survey~\cite{cormode2019p}. 
In addition to $L_p$ sampling, \cite{CohenG19} also considered samplers for certain classes of concave functions $G$ in the insertion-only model of streaming. 

\paragraph{Truly Perfect Sampling.}
Unfortunately, none of the aforementioned perfect samplers are \textit{truly perfect}.  Specifically, they have an additive error of $\gamma =n^{-c}$, and space depending linearly on $c$. While this may be acceptable for some purposes where only a small number of samples are required, this error can have significant downstream consequences when many samplers are run independently. For instance, a common usage of sampling for network monitoring and event detection is to generate samples on successive portions of the stream, which are reset periodically (e.g., minute by minute). Additionally, in a large, distributed database, many independent samplers can be run locally on disjoint portions of the dataset. These samples can be used as compact summaries of the database, providing informative statistics on the distribution of data across machines. While the samples generated on a single portion of the data may be accurate enough for that portion, the $1/\poly(n)$ variation distance between the samples and true distribution accumulates over many portions. For large databases containing $s$ distributed machines with $s\gg\poly(n)$, or for samplers run on short portions of high throughput streams, the resulting gap in variation distance between the joint distributions of the samples and the true distribution can blow up to a \textit{constant}. This results in large accuracy issues for sensitive tests run on the data. 

Moreover, this creates large issues for \textit{privacy}, even when the identities of samples are anonymized. 
For instance, a non-truly perfect sampler may positively bias a certain subset $S \subset [n]$ of coordinates when a given entry is in the dataset (i.e, $x_i \neq 0$), and may negatively bias $S$ if that entry is not present (i.e., $x_i = 0$). 
Given sufficiently many samples, an onlooker would be able to easily distinguish between these cases.
Our truly perfect samplers achieve perfect security~\cite{Data16}, whereas in previous work a streaming algorithm that has input $X$ could have an output that depends on arbitrary other information in $X$ and thus could in principle reveal information about every other entry in $X$. 
If the entries of $X$ correspond to sensitive data records, revealing as little as possible about $X$ is crucial, and is the basis for studying perfect security. 

Another important application of truly perfect sampling is in situations where samples from previous portions of the stream influence future portions of the stream, causing cascading blow-ups in error. For instance, samples and sketches can be used to approximate gradient updates for gradient descent~\cite{Johnson013,ZhaoZ15,NeedellSW16,ivkin2019communication}, where a large number of future gradients naturally depend on the samples generated from prior ones.  Unbiasedness is also important for interior point methods, since bias in estimates of the gradients can result in large drift, and therefore error, in the algorithm (see, e.g., Theorem 2 of ~\cite{hu2016bandit}).
Beyond non-adversarial adaptivity, we may also have a malicious attacker who uses adaptivity in an uncontrolled manner. For example, a malicious adversary can adaptively query a database for samples, with future queries depending on past samples. Such streams with adaptive updates are the focus of the field of \textit{adversarial robust streaming} \cite{MitrovicBNTC17,AvdiukhinMYZ19,ben2020adversarial,ben2020framework,hassidim2020adversarially,Woodruff20Tight,BravermanHMSSZ21,AttiasCSS21}. Due to this adaptivity, the variation distance between the joint distributions can increase \textit{exponentially}, causing large accuracy issues after only a small number of adaptive portions of the stream. Thus, even a perfect sampler would not avoid significant information leakage in such settings, and instead only a truly perfect sampler would be robust to drifts in the output distribution. Finally, truly perfect samplers are of fundamental importance in information-theoretic security. 

\paragraph{The Sliding Window Model.}
While applicable for many situations in data analysis, the standard streaming model does not capture situations in which the data is considered time-sensitive. In applications such as network monitoring~\cite{CormodeM05a,CormodeG08,Cormode13}, event detection in social media~\cite{OsborneEtAl2014}, and data summarization~\cite{ChenNZ16,EpastoLVZ17}, recent data is considered more accurate and important than data that arrived prior to a certain time window. To address such
settings, \cite{DatarGIM02} introduced the \emph{sliding window model}, where only the $W$ most recent updates to the stream induce the underlying input data, for some window size parameter $W>0$.
The most recent $W$ updates form the \emph{active data}, whereas updates previous to the $W$ most recent updates are \emph{expired}. 
The goal is to aggregate information about the active data using space sublinear in $W$. We remark that, generally speaking, the sliding window model is insertion-only by definition. 
Hence, the sliding window model is a strict generalization of the standard insertion-only streaming model.
 
The sliding window model is more appropriate than the unbounded streaming model in a number of applications~\cite{BabcockBDMW02, BravermanOZ12, MankuM12, PapapetrouGD15, WeiLLSDW16} and has been subsequently studied in a number of additional settings~\cite{LeeT06, LeeT06b, BravermanO07, DatarM07, BravermanOZ12, BravermanLLM15, BravermanLLM16, BravermanGLWZ18, BravermanDMMUWZ18}. To date, however, no truly perfect, perfect, or even approximate $L_p$ samplers for the sliding window model are known, leaving a substantive gap in our understanding of sampling for these models.

\subsection{Our Contributions}
In this work, we initiate the study of \textit{truly perfect samplers}, for general weight functions $G$ in the data stream and sliding window models. We begin by studying the problem of truly perfect sampling in the \textit{turnstile} model of streaming, where both positive and negative updates can be made to the coordinates in $f$. We demonstrate that in the turnstile model, the additive $1/\poly(n)$ error in previous approximate and perfect $L_p$ samplers is inherent \cite{MonemizadehW10,AndoniKO11,JowhariST11,JayaramW18}. 

\begin{restatable}{theorem}{thmtrulyperflb}
\thmlab{thm:trulyperflb}
Fix constant $\eps_0 < 1$, integer $r\ge 1$, and let $2^{-n/2} \leq \gamma < \frac{1}{4}$. Let $G : \R \to \R_{\geq 0}$ be any function satisfying $G(x) > 0 $ for $x \neq 0$, and $G(0) =0$. Then any $(\eps_0,\gamma,\frac{1}{2})$-approximate $G$-sampler $\mathcal{A}$ in the $r$-pass turnstile streaming model must use $\Omega\left(\min \left\{n,\log \frac{1}{\gamma}\right\}\right)$ bits of space.
\end{restatable}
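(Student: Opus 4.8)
The plan is to reduce from the \emph{Augmented Index} problem $\augind$ on $t$ bits: Alice holds $a\in\{0,1\}^t$, Bob holds an index $k\in[t]$ together with the prefix $a_1,\dots,a_{k-1}$, and Bob must output $a_k$. Its one-way (Alice-to-Bob) randomized communication complexity is $\Omega(t)$, and this survives even when one only asks for success probability bounded away from $1/2$; I would instantiate it with $t=\Theta\!\big(\min\{n,\log(1/\gamma)\}\big)$. Fix a nonzero integer $v_0$ --- so $G(v_0)>0$ by hypothesis --- and a constant $c\ge 2$. For each level $j\in[t]$ we build a gadget that, when ``switched on'', contributes $G$-weight $\Theta(c^{\,t-j})$ to the frequency vector, so that smaller indices are heavier. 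When $G$ attains values spanning a geometric range --- e.g.\ $G(x)=|x|^p$ --- level $j$ is realized by a single coordinate carrying an integer of $G$-weight about $c^{\,t-j}$, so $t$ coordinates suffice, at the price of updates of magnitude up to roughly $c^{\,t/p}$ (which the model permits, as no bound on the update size is imposed). For a completely arbitrary $G$ one instead realizes level $j$ by a block of $\Theta(c^{\,t-j})$ fresh coordinates each set to $v_0$; here the coordinate count and the per-coordinate additive error must both be kept under control, and this is what makes $t=\Theta(\min\{n,\log(1/\gamma)\})$ the largest affordable choice. On input $a$, Alice streams the ``switch-on'' updates for each $j$ with $a_j=1$, runs $\mathcal{A}$, and forwards its memory to Bob.

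The turnstile model is essential in the next step: knowing $a_1,\dots,a_{k-1}$, Bob streams the \emph{negations} of Alice's updates for all levels $j<k$, so that the surviving frequency vector is supported precisely on the on-levels $j\ge k$. These form a geometric sequence topped by level $k$, so if $a_k=1$ then level $k$ owns at least a $\tfrac{c-1}{c}$ fraction of the remaining weight, whereas if $a_k=0$ it owns none. Bob finishes running $\mathcal{A}$ and declares $a_k=1$ exactly when $\mathcal{A}$ outputs (a coordinate of) level $k$. By the sampler's guarantee this happens with probability at least $\tfrac{1}{2}\big((1-\eps_0)\tfrac{c-1}{c}-\gamma\big)$ when $a_k=1$, the $\tfrac{1}{2}$ accounting for the failure probability $\delta=\tfrac{1}{2}$, and with probability at most $\gamma$ when $a_k=0$. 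For $c$ a large constant and $\gamma$ below a threshold depending on $\eps_0$ --- which for $\eps_0$ bounded away from $1$ can be taken to be $\tfrac{1}{4}$, matching the statement --- these two quantities differ by a positive constant, so Bob solves $\augind$ with constant advantage. Finally, an $r$-pass streaming algorithm using space $s$ yields a $\Theta(r)$-round protocol of total cost $O(rs)$ (Alice sends her state, Bob runs and replays his input-dependent portion of the stream and returns the state, and so on), whence $rs=\Omega(t)$, i.e.\ $s=\Omega\!\big(\tfrac{1}{r}\min\{n,\log(1/\gamma)\}\big)$ --- the claimed bound for constant $r$.

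The crux, I expect, is the distinguishing step in the presence of the relative error $\eps_0$ and the failure probability $\delta$: forcing a constant gap between the $a_k=1$ and $a_k=0$ cases dictates how dominant level $k$ must be made, which pins down the geometric weight profile and hence how many levels can be packed in. Realizing this profile for a completely arbitrary $G$ within the coordinate budget --- so that the bound reaches $\min\{n,\log(1/\gamma)\}$ and not merely $\min\{\log n,\log(1/\gamma)\}$ --- is the delicate part, as is keeping the per-coordinate errors of the block construction small. Minor further points: checking the reduction is insensitive to the large update magnitudes involved, confirming the $\Theta(r)$-round simulation costs only the stated $r$ factor, and noting that the hard instances never make $f=\vec 0$, so the $\bot$ output of $\mathcal{A}$ is irrelevant.
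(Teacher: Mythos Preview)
Your reduction via $\augind$ has two genuine gaps, and the paper avoids both by taking a quite different route.

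\textbf{Coordinate budget for arbitrary $G$.} The theorem assumes only $G(0)=0$ and $G(x)>0$ for $x\neq 0$; take $G(x)=\mathbf{1}[x\neq 0]$. Then the only way to give level $j$ weight $\Theta(c^{t-j})$ is a block of $\Theta(c^{t-j})$ distinct coordinates, so the total coordinate count is $\Theta(c^t)$, and fitting this into $[n]$ forces $t=O(\log n)$. You flag this as the ``delicate part'', but it is a hard barrier, not a technicality: to make level $k$ dominate all later levels regardless of which are on, the weights must decay geometrically, and with a $\{0,1\}$-valued $G$ that decay must be realized by coordinate count. Your construction therefore yields only $\Omega(\min\{\log n,\log(1/\gamma)\})$ for general $G$, exponentially weaker than the claim when $\gamma$ is near $2^{-n/2}$.

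\textbf{Multi-pass via $\augind$.} An $r$-pass algorithm yields a $(2r{-}1)$-message protocol, but $\augind$ collapses once Bob can speak: he sends $k$ in $O(\log t)$ bits and Alice replies with $a_k$. Hence for $r\ge 2$ the multi-round complexity of $\augind$ is $O(\log t)$, and your conclusion $rs=\Omega(t)$ does not follow. The argument as written is valid only for $r=1$.

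\textbf{The paper's approach.} The paper reduces from $\equ$ and, crucially, exploits the sampler's zero-detection output $\bot$ rather than the sampled index. Alice streams $x\in\{0,1\}^n$, Bob streams $-y$, so $f=x-y$; Bob declares equality iff the sampler outputs $\bot$. Conditioned on not failing, the sampler outputs $\bot$ with probability at least $1-\gamma$ when $f=\vec 0$ and at most $\gamma$ otherwise, so this is a protocol for $\eq_n$ with refutation error at most $\gamma$ and verification error at most $\tfrac12+\gamma$. The fine-grained refutation-complexity lower bound of \thmref{thm:BCKlb} then gives $\Omega(\min\{n,\log(1/\gamma)\})$ directly, and that bound holds for every number of rounds $r$. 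Both of your obstacles disappear: no structure on $G$ is used beyond the hypotheses, the hard instances use only $\pm 1$ updates, and $\eq_n$'s refutation complexity does not degrade with interaction.
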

\thmref{thm:trulyperflb} explains why all prior approximate and perfect samplers developed for the turnstile sliding window model paid a $1/\poly(n)$ additive error in their variation distance. In particular, when $\gamma = n^{-c}$, our lower bound of $\Omega(c \log n)$ for a $(\eps, \gamma, \frac{1}{2})$-$F_p$sampler for $p \in (0,2]$ is nearly tight, given the upper bound of $\O{c \log^2 n}$ of \cite{JayaramW18} for $p \in (0,2)$ and $\O{c \log^3 n}$ for $p=2$, which achieve perfect sampling ($\eps = 0)$. This demonstrates that $\log \frac{1}{\gamma}\polylog n$ is the correct complexity of $(0,\gamma,\frac{1}{2})$-$L_p$ sampling. 
Our lower bound is based on the fine-grained hardness of the $\equ$ problem from two-party communication complexity, demonstrating that a $(\eps,\gamma,1/2)$-$G$ sampler yields a communication protocol which solves $\equ$ with $\gamma$ advantage.

Given the strong impossibility results for designing truly perfect samplers in the turnstile model, we shift our attention to the fundamental insertion-only model. 
Given a measure function $G:\mathbb{R}\to\mathbb{R}^{\ge0}$ such that $G(x)=G(-x)$, $G(0)=0$ and $G$ is non-decreasing in $|x|$, we define $F_G=\sum_{i=1}^n G(f_i)$. 
We design a general framework for designing truly perfect $G$-samplers for a large number of useful functions $G$ in insertion-only streams and sliding windows with insertion-only updates. 
The framework is developed in Section \ref{sec:framework}, wherein several instantiations of the framework are given for specific functions $G$. 
Our theorem in its most general form is as follows, although we remark that for several applications, such as for $F_p$ estimation, significant additional work is needed to apply the theorem.

\begin{framework}
\framelab{frame:G:sampler}
Let $G$ be a function such that $0\le G(x)-G(x-1)\le\zeta$ for all $x\ge 1$. 
For insertion-only streams, there exists a perfect $G$ sampler that succeeds with probability at least $1-\delta$ and uses $\O{\frac{\zeta m}{F_G}\log n\log\frac{1}{\delta}}$ bits of space. 
For the sliding window model with insertion-only updates, there exists a truly perfect $G$ sampler that succeeds with probability at least $1-\delta$ and uses $\O{\frac{\zeta W}{F_G}\log^2 n\log\frac{1}{\delta}}$ bits of space. 
Further, the time to process each update is $\O{1}$ in expectation. 
(See \thmref{thm:mestimator:framework}.)
\end{framework}
The main barrier to applying \frameref{frame:G:sampler} to any arbitrary measure function $G$ is obtaining a ``good'' lower bound $\widehat{F_G}$ to $F_G=\sum_{i\in[n]}G(f_i)$. Moreover, this lower bound must be obtained correctly with probability $1$, as any possibility of failure of a randomized algorithm would necessarily contribute to additive error to the distribution of the samples, resulting in only a perfect, but not truly perfect, sampler. 


Interestingly, our samplers utilize a timestamp-based reservoir sampling scheme, as opposed to the common \textit{precision sampling} framework used for other $L_p$ samplers~\cite{AndoniKO11,JowhariST11,JayaramW18,JayaramW18b,JayaramSTW19,chen2020improved}. This property of our samplers makes our framework particularly versatile, and for instance allows it to be applied to the sliding window model of streaming.   

As specific applications of our framework, we obtain the first truly perfect samplers for many fundamental sampling problems, including $L_p$ sampling, concave functions, and a large number of measure functions, including the $L_1-L_2$, Fair, Huber, and Tukey estimators. For $p \geq 1$, our results for $L_p$ sampling are as follows; we defer $p\in(0,1)$ to \thmref{thm:framework:general}. 

\begin{restatable}{theorem}{thmperfectlplarge}
\thmlab{thm:perfect:lp:large}
For the insertion-only streaming model and $p\ge 1$, there exists a truly perfect $L_p$ sampler that uses $\O{1}$ update time and $\O{n^{1-1/p}\,\polylog(n)}$ bits of space.
\end{restatable}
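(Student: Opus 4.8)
The plan is to instantiate \frameref{frame:G:sampler} with the (truncated) $p$-th power function, running its sampling phase throughout the stream while maintaining a \emph{deterministic} Misra--Gries summary on the side; the truncation level and the parameters $\zeta,\widehat{F_p}$ fed to the framework are then fixed at the end of the stream from that summary. Two elementary facts drive everything, both using $p\ge 1$: Hölder's inequality gives $F_p=\|f\|_p^p\ge \|f\|_1^p/n^{p-1}=m^p/n^{p-1}$, and $F_p\ge\|f\|_\infty^p$. As the discussion after the framework stresses, the only missing ingredients to turn \frameref{frame:G:sampler} into a sampler are (i) a lower bound $\widehat{F_p}\le F_p$ correct with probability $1$, and (ii) a valid increment bound $\zeta$; the space is then $\O{(\zeta m/\widehat{F_p})\,\polylog n}$, and since $\widehat{F_p}$ never exceeds $F_p$ the output distribution is \emph{exactly} $|f_j|^p/\|f\|_p^p$, i.e.\ truly perfect (invoking \thmref{thm:mestimator:framework}, which gives this guarantee in the insertion-only model once a probability-$1$ lower bound is supplied).

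To obtain (i) and (ii) deterministically, I would run Misra--Gries with $k=\Theta(n^{1-1/p})$ counters. With probability $1$ its estimates satisfy $f_i-m/k\le\tilde f_i\le f_i$ for every $i$, so $M^\star:=\max_i\tilde f_i$ is a certified \emph{lower} bound on $\|f\|_\infty$ and $\widehat M:=M^\star+m/k$ is a certified \emph{upper} bound on $\|f\|_\infty$. I then take $\theta:=\widehat M$ and work with $G_\theta(x):=\min(|x|,\theta)^p$, so that $G_\theta(f_i)=f_i^p$ for every $i$ (hence $F_{G_\theta}=F_p$) with probability $1$; I set $\zeta:=p\,\widehat M^{\,p-1}$, which bounds $G_\theta(x)-G_\theta(x-1)$ for all $x\ge 1$; and I set $\widehat{F_p}:=\max\{\,m^p/n^{p-1},\,(M^\star)^p\,\}$, both of whose terms are genuine probability-$1$ lower bounds on $F_p$. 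Here $m=\|f\|_1$ is known exactly, and since the framework's sampling phase does not depend on $G$, fixing $\theta,\zeta,\widehat{F_p}$ at the end of the stream is fine.

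It remains to check $\zeta m/\widehat{F_p}=\O{n^{1-1/p}}$ by a two-case split. If $M^\star\le m/n^{1-1/p}$, then $\widehat M\le 2m/n^{1-1/p}$ and $\widehat{F_p}\ge m^p/n^{p-1}$, so $\zeta m/\widehat{F_p}\le p\,(2m/n^{1-1/p})^{p-1}\cdot(m n^{p-1}/m^p)=\O{n^{1-1/p}}$. If instead $M^\star> m/n^{1-1/p}$, then $\widehat M<2M^\star$ and $\widehat{F_p}\ge (M^\star)^p$, so $\zeta m/\widehat{F_p}\le p\,(2M^\star)^{p-1}\cdot(m/(M^\star)^p)=\O{m/M^\star}=\O{n^{1-1/p}}$; the $\poly(p)$ factors are absorbed into the $\polylog$ in the constant-$p$ regime. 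Plugging this into the framework's bound gives $\O{n^{1-1/p}\polylog n}$ bits, to which the Misra--Gries summary contributes only $\O{n^{1-1/p}\log n}$. For the running time, Misra--Gries admits an $\O{1}$-time implementation and the framework processes each update in $\O{1}$ expected time, so the overall update time is $\O{1}$.

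The step I expect to be the main obstacle is precisely (i)--(ii): the insistence that the quantities sizing the sampler be correct with probability $1$ rules out any randomized $F_p$- or $\ell_\infty$-estimator and forces a deterministic summary. The delicate point is that Misra--Gries gives only \emph{one-sided} (under-)estimates, yet I must extract from them \emph{simultaneously} a lower bound on $\|f\|_\infty$ (to keep $\widehat{F_p}$ large enough when $f$ is concentrated) and an upper bound on $\|f\|_\infty$ (to keep $\zeta$, equivalently the truncation level, small enough when $f$ is spread out); verifying that these necessarily crude bounds still combine, via $F_p\ge m^p/n^{p-1}$, to $\O{n^{1-1/p}}$ in \emph{all} regimes is the heart of the argument. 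A secondary point to get right is that the framework's reservoir is maintained at the worst-case size $\Theta(n^{1-1/p}\polylog n)$ during the stream with $\theta,\zeta,\widehat{F_p}$ applied only in an $\O{1}$-time post-processing step, so that both the claimed space and the $\O{1}$ update time survive.
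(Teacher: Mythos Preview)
Your proposal is correct and takes essentially the same approach as the paper (see \thmref{thm:framework3}): both run a deterministic Misra--Gries sketch with $\Theta(n^{1-1/p})$ counters to bound $\|f\|_\infty$, set $\zeta$ proportional to that bound raised to the $(p-1)$-th power, and carry out the identical two-case split on whether $\|f\|_\infty$ exceeds $m/n^{1-1/p}$ to conclude $\zeta m/F_p=\O{n^{1-1/p}}$. Your truncation to $G_\theta$ is unnecessary (the counter never exceeds $\|f\|_\infty\le\theta$) and your explicit $\widehat{F_p}=\max\{m^p/n^{p-1},(M^\star)^p\}$ just makes transparent what the paper leaves implicit in its success-probability bound, but the core argument is the same.
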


Together, \thmref{thm:trulyperflb} and \thmref{thm:perfect:lp:large} show a strong separation between turnstile and insertion-only truly perfect $L_p$ samplers; surprisingly, for every $p > 1$, a truly perfect $L_p$ sampler exists with $\O{n^{1-1/p}\,\polylog(n)}$ space in the insertion-only model, while in the turnstile model this requires $\Omega(n)$ space. 

Another interesting feature of \thmref{thm:perfect:lp:large} is that the time to process each update is $\O{1}$! 
In contrast, the perfect samplers of \cite{JayaramW18}, which each are not truly perfect, have update time $n^{O(c)}$ to achieve variation distance $\frac{1}{n^c}$. Thus, we obtain an exponential improvement, and optimal running time. 

Yet another interesting feature of our algorithm in \thmref{thm:perfect:lp:large} is that it is sampling-based rather than sketching-based, and thus if the indices have metadata associated with them then we can additionally return that metadata, whereas the sketching-based algorithm of \cite{JayaramW18} cannot. For example, the indices may be keys we sample by and each key is part of some document; in this case we sample by the keys but additionally can return the document sampled. 

Notice also that the complexity in \thmref{thm:perfect:lp:large} above degrades as $p \to 1$. For instance, for $p=1$, our bound degrades to $n^{1-1/p}\log n=\log n$, which matches the complexity of reservoir sampling in the insertion only model, or the sliding window truly perfect $L_1$ sampler by \cite{BravermanOZ12}.

\paragraph{$M$-Estimators.} 
In addition to \thmref{thm:perfect:lp:large}, \frameref{frame:G:sampler} also implies truly perfect samplers for a number of $M$-estimators:
\begin{itemize}
\item
Truly perfect $G$ samplers for insertion-only streams that use $\O{\log n\log\frac{1}{\delta}}$ bits of space when $G$ is the $L_1-L_2$ estimator, the Fair estimator, the Huber estimator, or the Tukey estimator. 
(See \corref{cor:stream:mestimator} and \thmref{thm:stream:tukey}.)
\item
Truly perfect $G$ samplers for sliding windows with insertion-only updates that use $\O{\log^2 n\log\frac{1}{\delta}}$ bits of space when $G$ is the $L_1-L_2$ estimator, the Fair estimator, or the Huber estimator, or the Tukey estimator. 
(See \corref{cor:sw:mestimator} and \thmref{thm:sw:tukey}.)
\end{itemize}

\paragraph{Matrix Norms.} 
\frameref{frame:G:sampler} can also be extended to truly perfect sampling for matrix norms. 
That is, given a matrix $\M\in\mathbb{R}^{n\times d}$, the goal is to sample a row $\m_i$ of $\M$ with probability proportional to $G(\m_i)$ for some given function $G$. 
For example, when $G(\x)=\sqrt{\sum_{i\in[d]}x_i^2}$ is the $L_2$ norm of each row, then such a  row sampling primitive would be equivalent to $L_{1,2}$ sampling, which has recently been used in adaptive sampling techniques (see \cite{MahabadiRWZ20} and references therein). 
See \thmref{thm:matrix:framework} for more details. 

\paragraph{Turnstile Streams.} 
Next, we show that \frameref{frame:G:sampler} can also be extended to strict turnstile streams, which combined with \thmref{thm:trulyperflb} shows the separation of general and strict turnstile streams.
We give a general reduction as follows:
\begin{restatable}{theorem}{thmperfectmultipass}
\thmlab{thm:perfect:multi:pass}
Suppose there exists a truly perfect $L_p$ sampler in the one-pass insertion-only streaming model that uses $S$ bits of space. 
Then there exists a truly perfect $L_p$ sampler that uses $\tO{Sn^{\gamma}}$ space and $\O{\frac{1}{\gamma}}$ passes over a strict turnstile stream, which induces intermediate frequency vectors with nonnegative coordinates. 
\end{restatable}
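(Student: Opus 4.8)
To prove \thmref{thm:perfect:multi:pass}, the plan is to \emph{bootstrap} the one-pass insertion-only sampler through a recursive, coordinate-space ``zoom-in''. Since the stream is strict turnstile and its intermediate frequency vectors are coordinate-wise non-negative, we always have $f \ge \vec 0$, so $|f_i|^p = f_i^p$ and the target is the exact distribution $f_i^p/F_p$ with $F_p = \sum_{i\in[n]} f_i^p$ (returning $\bot$ when $f = \vec 0$). Fix a balanced $n^\gamma$-ary partition tree \T on $[n]$: its leaves are the singletons, a node at depth $j$ owns a block of $n^{1-j\gamma}$ coordinates, and hence \T has depth $1/\gamma$. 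The algorithm maintains a root-to-node path $v_0, v_1, v_2, \dots$, extending it by one level per pass, and enforces the invariant that after pass $j$ the node $v_j$ is distributed \emph{exactly} proportionally to the $\ell_p^p$-mass $F_p(B_{v_j}) := \sum_{i\in B_{v_j}} f_i^p$ of its block. Because every child-selection is exact, the probabilities telescope, $\Pr[\text{reach leaf } i] = \prod_j F_p(B_{v_j})/F_p(B_{v_{j-1}}) = f_i^p/F_p$, so outputting the leaf is a truly perfect sampler. Running $n^\gamma$ parallel copies of the black-box sampler per level accounts for the $\tO{S n^\gamma}$ space, and the depth of \T accounts for the $\O{1/\gamma}$ passes.

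The engine of each pass is the following primitive: given the block $B = B_{v_{j-1}}$ together with its partition into child blocks $B_1,\dots,B_{n^\gamma}$, select $B_c$ with probability \emph{exactly} $F_p(B_c)/F_p(B)$, in one pass and $\tO{S n^\gamma}$ space. I would run one copy of the one-pass insertion-only truly perfect $L_p$ sampler per child, feeding the $c$-th copy the sub-stream restricted to $B_c$ presented as an honest insertion-only stream --- this is exactly where non-negativity of all intermediate vectors is used, since it lets us account for the cancellations inside $B_c$ and simultaneously maintain the aggregate statistics needed to recover the \emph{exact} child weights rather than estimates --- and then combine the $n^\gamma$ per-child outputs according to those exact weights. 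The recursion stops as soon as a block has at most $n^\gamma$ coordinates, which already occurs at depth $1/\gamma - 1$; one further pass records the exact values of those coordinates and draws the final index from the exact conditional distribution. A failure of any black-box copy is inherited as a \fail\ of the whole algorithm; since all output guarantees are conditioned on not returning \fail, this leaves the conditional output distribution exactly $\{f_i^p/F_p\}_i$, and the overall failure probability is pushed below $\delta$ by standard independent repetition, at the cost of $\log\frac1\delta$ factors absorbed into the stated space bound.

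The main obstacle is making the per-level child-selection genuinely \emph{exact}: substituting an approximate $F_p$-moment estimate for the child weights would reinject a $1/\poly(n)$ error and destroy truly-perfectness, yet we cannot afford exact per-coordinate counters on a block before the last level. The subtlety is that, in a single pass, the only insertion-only stream one can literally extract from a turnstile sub-stream is its stream of positive updates, whose $\ell_p^p$-profile is $g_i^p \ge f_i^p$ rather than $f_i^p$; removing this bias calls for a correction step whose acceptance probability on the final small block $B^\ast$ is $F_p(B^\ast)/G_p(B^\ast)$ --- both computable exactly once $|B^\ast| \le n^\gamma$, since we can then track both the current values $f_i$ and the cumulative positive masses $g_i$ on $B^\ast$ --- which is always $\le 1$ but whose magnitude must be controlled so that the $\tO{n^\gamma}$ repetitions afforded by the $\O{1/\gamma}$ passes suffice. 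Handling $p \ne 1$, where $x \mapsto x^p$ is nonlinear and block $\ell_p^p$-masses cannot be maintained additively, is the delicate part, and I expect it to require essentially the same ``additional work'' the authors flag for applying \frameref{frame:G:sampler} to $F_p$ estimation.
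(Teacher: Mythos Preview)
Your recursion selects a child $B_c$ with probability \emph{exactly} $F_p(B_c)/F_p(B)$, and you correctly identify that this requires the exact values $F_p(B_c)$. But running $n^\gamma$ black-box samplers does not supply those values: each sampler on $B_c$ returns a coordinate from the correct \emph{conditional} law on $B_c$, yet gives no handle on the \emph{marginal} weight $F_p(B_c)$ needed to combine the children. For $p\neq 1$ these weights are nonlinear in the updates and cannot be maintained exactly on a block of size $n^{1-\gamma}$ with $\tO{n^\gamma}$ space. Your proposed fix --- recurse on the positive-update vector $g$ and correct on the final block by $F_p(B^\ast)/G_p(B^\ast)$ --- does not close the gap. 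First, recursing exactly proportional to $G_p(B_c)$ faces the same nonlinearity obstruction. Second, even if one bypasses the tree and simply runs the black-box on positive updates (one pass, probability $g_i^p/G_p$) and then rejects with probability $f_i^p/g_i^p$ (one more pass), the overall acceptance probability is $F_p/G_p$, which in a strict-turnstile stream with $m=\poly(n)$ updates can be $1/\poly(n)$ --- far beyond what $\tO{n^\gamma}$ parallel repetitions can absorb. The non-negativity of intermediate vectors does not rescue this: it says each deletion is preceded by a matching insertion, but exploiting that pairing in one pass would require buffering all unmatched insertions.

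The paper's route is different and is \emph{white-box}, not black-box. It uses the structure of the insertion-only sampler from \frameref{frame:G:sampler}: that sampler is nothing more than an $L_1$ (reservoir) sample $s$ together with a counter $c$ uniform on $\{1,\dots,f_s\}$, followed by the rejection $(c^p-(c-1)^p)/\zeta$. Hence the multi-pass recursion is carried out for $L_1$, not $L_p$: one counter per child maintains the linear weight $\sum_{i\in B_c} f_i$ exactly, so child selection is exactly proportional to $F_1(B_c)$, and after $\O{1/\gamma}$ passes one holds a coordinate $i$ drawn with probability $f_i/\|f\|_1$ together with $f_i$ itself. The rejection parameter $\zeta=pZ^{p-1}$ comes from a separate deterministic multi-pass computation of $Z\in[\|f\|_\infty,\|f\|_\infty+m/n^{1-1/p}]$ (replacing Misra--Gries). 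Running $\O{n^{1-1/p}}$ such $L_1$ samplers in parallel matches the number of reservoir samples the insertion-only algorithm needs, giving the stated $\tO{Sn^\gamma}$ space. The missing idea in your plan is exactly this: do the tree recursion in $L_1$, where the child weights are linear and therefore exactly maintainable, and defer all the $L_p$ work to the rejection step of the framework.
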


\paragraph{Truly Perfect $F_0$ Sampling.}
We give a truly perfect sampler for the important $F_0$ problem for both the insertion-only streaming model and the sliding window model (see \thmref{thm:truly:perfect:F0} and \corref{cor:sw:truly:perfect:F0}). 
Our algorithm works by tracking the first $\sqrt{n}$ unique items in the stream to decide whether $F_0>\sqrt{n}$. 
If $F_0\le\sqrt{n}$, then it suffices to output a random item among those collected. 
Otherwise, we simultaneously generate a set $S$ of $\sqrt{n}$ random items, so that with constant probability, some item of $S$ appears in the stream. 
We can then output any item of $S$ that has appeared in the stream uniformly at random. 
Surprisingly, our algorithms use $\O{\sqrt{n}\log n}$ space whereas there exists a truly perfect $F_0$ sampler in the random oracle model with space $\O{\log n}$. 
We believe the complexity of truly perfect $F_0$ sampling without the assumption of a random oracle to be a fascinating open question.

\paragraph{Truly Perfect Sampling in the Random Order Model.} 
Next, in \appref{app:randomorder}, we demonstrate that for \textit{random order} streams, we can design a truly perfect $L_2$ sampling using only $\O{\log^2 n}$ bits of space. Since, as in our framework for truly perfect sampling, our algorithms are timestamp based, they also apply to the more challenging sliding window model of streaming. The complexity of our sampler is a $\log n$ factor smaller than the complexity of the best known previous samplers \cite{JayaramW18} in the adversarial order model, which had $\gamma = 1/\poly(n)$ additive error in their distribution, and which did not apply to the sliding window model.
Our theorem for $L_2$ sampling on random order streams is as follows.

\begin{restatable}{theorem}{thmsmallp}
\thmlab{thm:small:p}
There exists a one-pass sliding window algorithm for random order insertion-only streams that outputs index $i\in[n]$ with probability $\frac{f_i^2}{F_2}$ and outputs $\fail$ with probability at most $\frac{1}{3}$, i.e., the algorithm is a truly perfect $L_2$ sampler, using $\O{\log^2 n}$ bits of space and $\O{1}$ update time. 
\end{restatable}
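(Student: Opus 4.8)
Since truly perfect sampling forbids any subroutine that can fail — a failing estimator would leak into the output distribution as additive error, leaving only a \emph{perfect}, not truly perfect, sampler — the plan is to write an $L_2$ sample as a mixture of samplers whose weights compose exactly, so that normalizers such as $F_2$ or $\norm{f}_\infty$ never need to be known. The identity $f_i^2 = f_i + 2\binom{f_i}{2}$ gives $F_2 = F_1 + 2P$ with $P = \sum_i \binom{f_i}{2}$, so outputting $i$ with probability $f_i^2/F_2$ is the same as drawing a uniform element from a population of $F_1$ ``singleton'' tokens (one per update) together with $2\binom{f_i}{2}$ ``pair'' tokens for each coordinate $i$ (its unordered pairs of updates, each with multiplicity two) and reporting that element's coordinate. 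The singleton population is sampled exactly by the timestamp-based reservoir underlying \frameref{frame:G:sampler} using $\O{\log n}$ bits; running a pair-sampler alongside it and mixing the two branches at a ratio chosen to cancel the linear-in-$f_i$ term then yields output probability exactly $f_i^2/F_2$, and since $F_1$, $P$, $F_2$ enter only as normalizers that cancel between the branches, correctness is oblivious to their values.

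The pair-sampler admits an exact timestamp-based implementation: reservoir-sample a uniform update at position $s$ with coordinate $i$, track the number $c$ of later occurrences of $i$, and keep the sample with probability $c/T$ for a fixed $T$; since the ``later-count'' of the $f_i$ occurrences of any coordinate realizes the values $0,1,\dots,f_i-1$ exactly once, this outputs $i$ with probability exactly $\binom{f_i}{2}/(WT)$, regardless of the order. The catch is its success probability, $\Theta(P/W^2)=\Theta(F_2/W^2)$, which is $o(1)$ unless the data is concentrated; this is exactly the obstacle behind the $\tO{\sqrt n}$ bound of \thmref{thm:perfect:lp:large} in adversarial order. To amplify it to a constant \emph{without breaking exactness}, I would subsample the coordinate universe at $\O{\log n}$ geometric rates $2^{-j}$, run the pair test inside each (still random-order) substream, and use the random order twice: the number of nonzero coordinates $n'$ satisfies $n' F_2 \ge F_1^2$ by Cauchy--Schwarz, so at the level where $\Theta(1)$ coordinates survive the collision probability is already $\Omega(1)$; and because at such coarse levels the substream length and the surviving coordinates' exact frequencies can be determined exactly, the per-level distortion of the output distribution can be corrected by an extra rejection, with random-order concentration used to bound by $\tfrac13$ the probability that no level yields a valid, undistorted sample. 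Everything is maintained by timestamps and per-coordinate counters, so the construction ports to the sliding-window model with $W$ replacing the stream length.

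I expect the genuine difficulty to be precisely this reconciliation: choosing every acceptance probability and mixing weight so that, summed over the $\O{\log n}$ subsampling levels and the two branches, the contribution to each coordinate $i$ telescopes \emph{identically} to $f_i^2/F_2$ for every realization of the random order, while \emph{simultaneously} keeping the overall failure probability below $\tfrac13$ — the two pull against each other, since the natural distortion-corrections (rejecting in proportion to a substream length) are exactly what would kill the constant-probability amplification, and it is the random order that supplies the concentration needed to reconcile them. The total space is $\O{\log^2 n}$ bits ($\O{\log n}$ levels, each holding a constant number of $\O{\log n}$-bit timestamps and counters), and since an update touches only the levels at which its coordinate survives and performs $\O{1}$ reservoir bookkeeping, the update time is $\O{1}$ in expectation, giving \thmref{thm:small:p}.
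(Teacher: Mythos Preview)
Your opening is essentially the general framework of \thmref{thm:mestimator:framework} specialized to $G(x)=x^2$: reservoir-sample a position, count later occurrences $c$, accept with probability proportional to $2c+1$ (your singleton/pair split is an equivalent reformulation), giving per-instance success probability $\Theta(F_2/(W\|f\|_\infty))$ --- exactly the $\tO{\sqrt n}$ obstacle behind \thmref{thm:perfect:lp:large}. Your proposed fix is where the gap lies. Subsampling the universe at rate $2^{-j}$ restricts the sampler's support to the random surviving set $S_j$, and the quantities you need to renormalize (substream length, surviving $\|f\|_\infty$) as well as the criteria you use to declare a level ``good'' all depend on $S_j$. You explicitly flag this reconciliation as ``the genuine difficulty'' and do not resolve it. But any $S_j$-dependence left in the acceptance rule, or in the rule for \emph{which} level's sample to report, destroys exactness: averaging over $S_j$ restores proportionality to $f_i^2$ only if every remaining factor is independent of which particular $i$ survived, and your good-level criteria (few survivors, small surviving max) are precisely events that correlate with $i\in S_j$.

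The paper's route is different and much simpler, and it uses random order for \emph{correctness} rather than concentration. In a uniformly random ordering, two adjacent positions inside the window form a uniform random pair of updates, so $\Pr[u_{2k-1}=u_{2k}=i]=\tfrac{f_i(f_i-1)}{W(W-1)}$ exactly. The paper applies your singleton/pair mixture \emph{per adjacent pair}: with probability $\tfrac1W$ output $u_{2k-1}$ unconditionally, otherwise output $u_{2k-1}$ iff it equals $u_{2k}$, giving $\tfrac1W\cdot\tfrac{f_i}{W}+\tfrac{W-1}{W}\cdot\tfrac{f_i(f_i-1)}{W(W-1)}=\tfrac{f_i^2}{W^2}$ for each of the $W/2$ disjoint pairs. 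The expected number of collected samples is therefore $F_2/(2W)\ge\tfrac12$ (since $F_2\ge F_1=W$ for integer frequencies), and a Paley--Zygmund argument gives constant success probability --- with no subsampling, no levels, and no unknown normalizers. Capping the set of collected samples at $\O{\log n}$ yields the stated $\O{\log^2 n}$ bits and $\O{1}$ update time.
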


We generalize this approach to perfect $L_p$ samplers on random order streams for integers $p>2$. 
\begin{restatable}{theorem}{thmlargep}
\thmlab{thm:large:p}
Let $p>2$ be a fixed integer.
There exists a one-pass algorithm that outputs an index $i\in[n]$ with probability $\frac{f_i^p}{\sum_{j=1}^n f_j^p}$, and outputs $\fail$ with probability at most $\frac{1}{3}$ on a random-order insertion-only stream of length $m$ , i.e., the algorithm is a truly perfect $L_p$ sampler, using $\O{m^{1-\frac{1}{p-1}}\log n}$ bits of space and $\O{1}$ update time.
\end{restatable}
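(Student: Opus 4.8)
The plan is to lift the random-order $L_2$ sampler of \thmref{thm:small:p} to an $L_p$ sampler for each fixed integer $p>2$. The conceptual anchor is an exact identity: if an ordered tuple $(t_1,\dots,t_p)\in[m]^p$ of stream positions is drawn uniformly at random, the number of tuples that form a \emph{$p$-collision} (all $p$ updates hit the same coordinate) with common coordinate $i$ is exactly $f_i^p$. Hence, conditioned on producing a uniformly random $p$-collision, its common coordinate is distributed \emph{exactly} as $f_i^p/F_p$, with zero error; moreover this holds conditioned on any realization of the algorithm's internal randomness, so all approximation slack may be dumped into the $\fail$ event without disturbing the output distribution. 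Thus it suffices to build a small-space procedure on a random-order stream that either returns the common coordinate of (the appropriate analogue of) a uniformly random $p$-collision, or returns $\fail$, with $\PPr{\fail}\le\tfrac13$; truly perfect sampling is then automatic, and the content is the space bound.

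The procedure is timestamp based, so, exactly as in \thmref{thm:small:p}, it transfers to the sliding-window model. Using reservoir/timestamp sampling we retain the first $s=\O{m^{\,1-1/(p-1)}}$ updates, which by the random-order assumption form a uniformly random subset of all $m$ updates, stored in $\O{\log n}$ bits each. The key quantitative idea is to search this set for a $(p-1)$-wise collision, not a $p$-wise one: a $p$-wise collision would require a prefix of size $m^{1-1/p}$, but we can instead ``pay'' one of the $p$ factors of $f_a$ by computing it exactly rather than by an extra collision, shrinking the prefix to $m^{1-1/(p-1)}$. Concretely: (i) a birthday-type argument — this is where random order enters — shows that a uniformly random set of $s$ updates contains a $(p-1)$-wise collision with good probability as soon as $s^{p-1}$ exceeds $(m^{p-1}/F_{p-1})\cdot\polylog n$, and since the stream has length $m$ the frequency vector has at most $m$ nonzero coordinates, whence $F_{p-1}\ge m^{p-1}/m^{p-2}=m$ and $s=\O{m^{\,1-1/(p-1)}}$ suffices; (ii) since the candidate coordinate $a$ produced by such a collision is pinned down by a stream position at most $s$, we recover $f_a$ \emph{exactly} at the end of the stream, reading its occurrences among the first $s$ updates off of storage and counting its occurrences in the suffix online; (iii) we also mix in, as an alternative candidate, the coordinate of a uniformly random stored update, so that every nonzero coordinate has strictly positive candidate probability $q_i$, exactly computable from the stored sample; (iv) a final rejection step outputs the candidate $a$ with probability $c\,f_a^p/(\Phi q_a)$, where $\Phi$ is a data-dependent upper bound on $\max_i f_i^p/q_i$ computed from the sample and the (exactly known) relevant frequencies — since $\Phi$ does not depend on which coordinate becomes the candidate, the output distribution is exactly proportional to $f_i^p$ and the slack goes into $\fail$. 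A constant number of independent parallel copies then forces $\PPr{\fail}\le\tfrac13$, for $\O{m^{\,1-1/(p-1)}\log n}$ bits total; the update time is $\O{1}$ in expectation via the usual skip-ahead implementation of reservoir sampling.

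I expect the main obstacle to be reconciling true perfection ($\eps=\gamma=0$) with sublinear space. The naive routes to boosting the minuscule $p$-collision density $F_p/m^p$ — estimating $F_p$ and renormalizing, or drawing a uniformly random collision from inside the stored sample rather than from the whole stream — each inject $1/\poly(n)$ distortion, since reservoir estimates are only approximate and an in-sample collision over- or under-represents a coordinate according to its sampled frequency. The resolution, and the delicate part, is that every quantity the output law actually depends on — the candidate's exact frequency $f_a$, the candidate probabilities $q_i$ — is computed \emph{exactly} from the stored early prefix plus an online suffix count, while everything genuinely approximate is confined to the rejection threshold $\Phi$ and hence only affects $\PPr{\fail}$, not the conditional output distribution. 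Making this work requires choosing $\Phi$ so that it is simultaneously (a) a valid upper bound on $\max_i f_i^p/q_i$ and (b) of order $F_p$, so that $\PPr{\fail}$ stays bounded away from $1$; this balance is what forces the $(p-1)$-collision / exact-count trade-off and pins the reservoir size at $m^{1-1/(p-1)}$, whereas an adversarial stream order would instead demand $\Omega(n^{p-1})$ space. The light-coordinate regime, where all frequencies may be too small for any collision to surface in the sample, is the most delicate case, and is exactly why the plain-reservoir candidate must be mixed in and why $\Phi$ must still be kept computable there. Finally, the sliding-window variant follows because every component is timestamp based, at the cost of the usual extra $\log$ factor.
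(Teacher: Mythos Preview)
Your approach is substantially different from the paper's, and the implementation you describe has a genuine gap.

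The paper partitions the stream into consecutive blocks of size $B=m^{1-1/(p-1)}$, stores the current block, and for every ordered $p$-tuple of distinct block positions runs a layered acceptance test: for each $k\le p$, if the first $k$ entries of the tuple agree it accepts the common coordinate with probability proportional to $\alpha_k$, where the $\alpha_k$ are chosen via the Stirling identity $x^p=\sum_{k} S(p,k)(x)_k$ so that the per-tuple acceptance probability at coordinate $j$ is \emph{exactly} $f_j^p/m^p$ (\lemref{lem:falling:factorial}, \lemref{lem:lp:sample:prob}). There are $\Theta(m^{p-1})$ tuples across all blocks, giving expected $\Theta(F_p/m)\ge 1$ acceptances, and Paley--Zygmund bounds $\Pr[\fail]$. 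No frequency is ever computed and no data-dependent normalizer is needed; the correction is purely combinatorial.

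Your scheme instead requires a normalizer $\Phi$ that is simultaneously $\ge\max_i f_i^p/q_i$ (so the acceptance probability $cf_a^p/(\Phi q_a)$ is at most $1$) and $O(F_p)$ (so $\Pr[\fail]$ stays bounded). These two constraints are incompatible in general. For $p=3$, take $f_1=m^{1/3}$ and roughly $m$ coordinates of frequency $1$, so $F_3=\Theta(m)$. Your stored prefix has size $s=m^{1/2}$, and the expected number of copies of coordinate $1$ it contains is $sf_1/m=m^{-1/6}$; hence the $2$-collision path contributes only $O(m^{-1/3})$ to $q_1$, while the uniform-stored-update path contributes $f_1/m=m^{-2/3}$. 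Thus $q_1=O(m^{-1/3})$ under any mixture, so $f_1^3/q_1=\Omega(m^{4/3})$, forcing $\Phi=\Omega(m^{4/3})\gg F_3$ and per-copy success probability $O(m^{-1/3})$. A constant number of parallel copies cannot bring $\Pr[\fail]$ below $\tfrac13$. There is a second, independent problem: if $\Phi$ is sample-dependent as you write, then $\Pr[\text{output }i]=cf_i^p\cdot\Ex{\mathbf{1}(i\in S)/\Phi(S)}$, and this expectation depends on $i$ through the correlation between $\mathbf{1}(i\in S)$ and $\Phi(S)$, so the sampler would not be truly perfect even when it does output. The Stirling-number correction is precisely what lets the paper avoid both obstructions.
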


For $p=2$, intuitively our algorithm follows by tracking collisions between adjacent elements in the stream. Here, a collision occurs when to subsequent updates are made to the same coordinate $i \in [n]$. The probability that this occurs at a given timestep is $\frac{f_i(f_i-1)}{m(m-1)}$. Since this is not quite the right probability, we ``correct'' this distribution by a two part rejection sampling step to obtain a truly perfect sampler. For truly perfect $L_p$ sampling on random order streams for integers $p>2$, we store consecutive blocks of $m^{1-\frac{1}{p-1}}$ elements in the stream, along with their corresponding timestamps, and instead look for $p$-wise collisions within the block.

\paragraph{Fast Perfect $L_p$ Sampling for $p<1$.} 
Lastly, we demonstrate that for insertion only streams, the runtime of the perfect $L_p$ samplers of \cite{JayaramW18} can be significantly improved. Specifically, these prior samplers had update time which was a large polynomial in $n$: roughly, to obtain a $(0,\gamma,1/2)$ $L_p$-sampler, these algorithms required $\O{\frac{1}{\gamma}}$ runtime. This poses a serious barrier for usage of these perfect samplers in most applications. We demonstrate, however, that the tradeoff between runtime and distributional error in a perfect $L_p$ sampler is not required for $p<1$, by giving an algorithm with nearly tight space complexity, which achieves $\poly(\log n)$ update time. 
See \corref{cor:perfect:smallp:fast} for more details. 

\subsection{Our Techniques}
\paragraph{Truly Perfect $L_p$ Samplers.} 
We begin by describing our sampling framework for the case of $L_p$ samplers. Specifically, to obtain a truly perfect $L_p$ sampler for insertion-only streams of length $m$ and for $p \geq 1$, we run 
$\O{n^{1-1/p}}$ parallel instances of a single sampler, which uses only $\log n$ bits of space, but succeeds with probability $\Omega\left(\frac{1}{n^{1-1/p}}\right)$. 
Each instance of the single sampler first applies reservoir sampling to the updates in the stream to sample an item $s \in [n]$, along with a specific timestamp $t_s$ when $s$ was added to the reservoir sample, and keeps a counter $c$ of how many times $s$ appears in the stream after it is first sampled. 
 
Inspired by a technique of Alon, Matias, and Szegedy for $F_p$-estimation \cite{AMS99}, we then prove that if $c$ occurrences of $s$ appear afterwards and we output $s$ with probability proportional to $c^p-(c-1)^p$, then by a telescoping argument, the probability of outputting each $i\in[n]$ is proportional to $|f_i|^p$.  
Thus a sampler that successfully outputs a coordinate must do so from the desired distribution. To implement the rejection sampling step, we need to obtain a good normalizing factor so that the resulting step forms a valid distribution. We demonstrate that it suffices to estimate $\|f\|_\infty$ to obtain a good normalizing factor, which results in acceptance probability of at least $\Omega\left(\frac{1}{n^{1-1/p}}\right)$. We carry out this step deterministically with the Misra-Gries sketch, which is necessary since failure of any randomized estimation algorithm would introduce additive error into the sampler. 
By repeating $\O{n^{1-1/p}}$ times, we ensure that at least one sampler succeeds with constant probability, showing that we output $\fail$ with at most constant probability. 
We use a similar approach for $p\in(0,1)$. 

\paragraph{General Framework for Truly Perfect Sampler.}
The aforementioned telescoping argument for our truly perfect sampler for insertion-only updates can be viewed as using the identity $\sum_{c=1}^{f_i}(G(c)-G(c-1))=G(f_i)-G(0)$ for $G(x)=x^p$. 
By the same reasoning, we can generalize the paradigm of ``correcting'' the sampling probability for any monotonic function $G$ with $G(0)=0$ by a subsequent rejection sampling step though the size of the acceptance probability depends on obtaining a good normalizing factor. 
For $F_G=\sum_{i=1}^n G(f_i)$, we show that we can similarly bound the probability each single sampler succeeds with probability roughly $\frac{F_G}{m}$. 
Thus we require roughly $\frac{m}{F_G}$ parallel instances of the single sampler. 

For the case of the sliding window model, we can automatically expire the sample $(s,t_s)$ as soon as $t_s$ leaves the active window, causing an additional complexity in the sliding window model. 
However, by maintaining a number of ``checkpoints'', we can ensure that $(s,t_s)$ is an active element with constant probability.

\paragraph{Random-Order Sampling.} 
Finally, we improve our perfect $L_p$ samplers for random-order insertion-only streams by using distributional properties of each stream update. 
Namely, we modify our timestamp based sampling scheme to randomly sample a number of $p$-tuples and search for collisions among the $p$-tuples. 
For $p=2$, the idea is to consider two adjacent elements and see if they collide. 
Intuitively, an arbitrary position in the window is item $i$ with probability $\frac{f_i}{m}$ due to the random order of the stream, where $m$ is the length of the stream. 
The next update in the stream is also item $i$ with probability $\frac{f_i-1}{m-1}$. 
Thus the probability that both the two positions are updates to coordinate $i$ is $\frac{f_i(f_i-1)}{m(m-1)}$, which is not quite the right probability. 
Instead of using a telescoping argument as in the general framework, we instead ``correct'' this probability by sampling item $i$ in a position with probability $\frac{1}{m}$. 
Otherwise, with probability $1-\frac{1}{m}$, we sample item $i$ if the item is in the next position as well. 
Now the probability of sampling $i$ on the two adjacent elements is $\frac{1}{m}\frac{f_i}{m}+\frac{m-1}{m}\frac{f_i}{m}\frac{f_i-1}{m-1}=\frac{f_i^2}{m^2}$. 
We similarly generalize this argument to integer $p>2$. 

\subsection{Preliminaries}
We use $\mathbb{R}^{\ge0}$ to denote a non-negative number. 
We use the notation $[n]$ to represent the set $\{1,\ldots,n\}$ for any positive integer $n$. 
We use $\poly(n)$ to denote a fixed constant degree polynomial in $n$ but we write $\frac{1}{\poly(n)}$ to denote an arbitrary degree polynomial in $n$ that can be determined from setting constants appropriately.  
When an event has probability $1-\frac{1}{\poly(n)}$ of occurring, we say the event occurs with high probability. 
We similarly use $\polylog(n)$ to omit terms that are polynomial in $\log n$.  We note that all space bounds in this paper are given in \textit{bits}. 
In the insertion-only model of streaming, there is a vector $f \in \R^{n}$ which is initialized to the $0$ vector. The stream then proceeds with a sequence of $m = \poly(n)$ updates $i_1,i_2,\dots,i_m$ (the exact value of $m$ is unknown to the algorithm). After the first $t$ time steps, the state of the vector, denoted  $f^{(t)},$ is given by $f^{(t)}_j = \sum_{t' \leq t} \mathbf{1}(i_{t'} = j)$. 
In the sliding window model, at time step $t$ only the most recent $W$ updates form the active portion of the stream, so that in the sliding window model: $f^{(t)}_j\sum_{t'\in(t-W,t]} \mathbf{1}(i_{t'} = j)$.

\section{Lower Bound for Truly Perfect Sampling in Turnstile Streams}\label{sec:lb}
In this section, we demonstrate that truly perfect $G$ samplers cannot exist in sublinear space in the turnstile model. Specifically, we show that any perfect sampler with additive error $\gamma=n^{-c}$ requires space at least $\Omega(c \log n)$. This demonstrates that no sublinear space truly perfect sampler can exist in the turnstile model, and demonstrates the tightness (up to $\log n$ factors), of the previously known perfect and approximate $L_p$ samplers \cite{MonemizadehW10,AndoniKO11,JowhariST11,JayaramW18}.

Our lower bound is based on the fine-grained hardness of the $\equ$ problem from two-party communication complexity \cite{brody2014certifying}. Specifically, consider the boolean function $\eq_n: \{0,1\}^n \times \{0,1\}^n \to \{0,1\}$ given by $\eq_n(x,y) = 1 \iff x =y$. In the two party, one way communication model, there are two parties: Alice and Bob. Alice is given a input string $x \in \{0,1\}^n$ and Bob is given $y \in \{0,1\}^n$. Then Alice must send a single message $M$ to Bob, who must then output whether $\eq_n(x,y)$ correctly with some probability. A communication protocol $\mathcal{P}$ is a randomized two-party algorithm which takes the input $(x,y)$ and decides on a message $M$ and output procedure $\out$ for Bob given $(M,y)$. The communication cost of $\mathcal{P}$ is denoted $\cost(\mathcal{P},x,y)$, and defined as the maximum number of bits sent in the message $M$ over all coin flips of the algorithm, on inputs $(x,y)$.
We now define the randomized \textit{refutation complexity} of a communication protocol for computing a Boolean function $f$. We define the \textit{refutation cost}, \textit{refutation error}, and \textit{verification error} as:
\begin{equation*}
\begin{split}
\rcost(\cP)& = \max_{(x,y) \in f^{-1}(0)} \cost(\cP,x,y)  \\
\rerr(\cP) &=  \max_{(x,y) \in f^{-1}(0)} \PPr{\out(\cP(x,y)) = 1} \\
\verr(\cP) &=  \max_{(x,y) \in f^{-1}(1)} \PPr{\out(\cP(x,y)) = 0}.
\end{split}
\end{equation*}
We define the randomized refutation complexity of a function $f$ for an integer $r\ge 1$ as 
\[R^{(r),\reff}_{\eps,\delta}(f) = \min_\cP \left\{\rcost(\cP) : 	\rerr(\cP) \leq \eps, \verr(\cP) \leq \delta 	\right\} \]
 where the minimum is restricted to $r$-round communication protocols $\cP$. 
Observe that the ``trivial'' one-round protocol for $\eq_n$ achieves $\eps$ refutation error and communicates $\min(n,\log(1/\eps))$ bits, so it is as though the instance size drops from $n$ to $\min(n,\log(1/\eps))$ when $\eps$ refutation error is allowed.
Thus we define the effective instance size as 
 \[	\hat{n} = \min\left\{n + \log(1-\delta), \; \log\left(\frac{\left(1-\delta\right)^2}{\eps}\right)\right\}.	\]
\begin{theorem}[Theorem 44  \cite{brody2014certifying}]\thmlab{thm:BCKlb}
	We have $R^{(r),\reff}_{\eps,\delta}(\eq_n) \geq \frac{1}{8}(1-\delta)^2(\hat{n} + \log(1-\delta) -5)$.
\end{theorem}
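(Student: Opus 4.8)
The plan is to prove this by a rectangle-counting argument that makes no use of the number of rounds $r$, exploiting the one structural feature a refutation protocol must have: every \emph{unequal} input pair generates a short transcript. Fix any $r$-round protocol $\cP$ for $\eq_n$ with $\rerr(\cP)\le\eps$ and $\verr(\cP)\le\delta$; it suffices to lower bound $c:=\rcost(\cP)$, and we may assume $c\le n$ since the claimed bound is at most $n$. We may take $\cP$ public-coin; conditioning on the public randomness gives a deterministic protocol $\Pi_R$, whose transcript tree partitions $\{0,1\}^n\times\{0,1\}^n$ into combinatorial rectangles, one per leaf, each labelled by its output. I would first establish three facts. \textbf{(i)} Every unequal pair lies in a leaf of depth $\le c$ (a ``short leaf''), and there are at most $2^{c+1}$ short leaves. \textbf{(ii)} A leaf of depth $>c$ (a ``long leaf'') contains no unequal pair, hence its rectangle lies on the diagonal and is a single point $(z,z)$; moreover the map from a long leaf to its depth-$c$ ancestor is injective, since at that internal node every consistent unequal pair would already have terminated (transcript $\le c$), forcing its rectangle to be the singleton $\{(z,z)\}$ --- so at most $2^c$ diagonal points lie in long leaves. \textbf{(iii)} A rectangle $A\times B$ contains $|A\cap B|\le\sqrt{|A|\,|B|}$ diagonal points but $|A|\,|B|-|A\cap B|$ unequal pairs.

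Next I would run the counting. From $\verr(\cP)\le\delta$, $\mathbb{E}_R\big[\,\#\{z:\Pi_R\text{ accepts }(z,z)\}\,\big]\ge(1-\delta)2^n$, so by (ii) the number of diagonal points accepted \emph{inside short leaves} is in expectation at least $\mu:=(1-\delta)2^n-2^c$. From $\rerr(\cP)\le\eps$, summing over the roughly $2^{2n}$ unequal pairs, $\mathbb{E}_R[U_R]\le\eps\,2^{2n}$ where $U_R$ counts unequal pairs in accepting leaves of $\Pi_R$. Fixing $R$ and letting $A_1\times B_1,\dots,A_k\times B_k$ be the accepting short leaves with $k\le2^{c+1}$ and $S_R:=\sum_i|A_i\cap B_i|$, facts (iii) and Cauchy--Schwarz give $S_R\le\sum_i\sqrt{|A_i||B_i|}\le\sqrt{k}\cdot\sqrt{\sum_i|A_i||B_i|}=\sqrt{k}\cdot\sqrt{U_R+S_R}$, which rearranges to $U_R\ge(S_R-k)^2/k$ when $S_R\ge k$. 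Plugging this per-$R$ inequality into $\mathbb{E}_R[U_R]\le\eps\,2^{2n}$, together with Markov's inequality to discard bad $R$ on which $\Pi_R$ accepts few diagonal points (so that $S_R$ is of order $\mu$ with constant probability), yields $\eps\,2^{2n}\ge\Omega\!\big(\mu^2/2^c\big)$, i.e.\ $2^c\ge\Omega\!\big(\mu^2/(\eps\,2^{2n})\big)$. Substituting $\mu=(1-\delta)2^n-2^c$ and taking logarithms gives $c\ge\log\!\big((1-\delta)^2/\eps\big)-O(1)$ in the regime where $c$ is not already $\Omega\big(n+\log(1-\delta)\big)$; combining the two regimes yields $c=\Omega(\hat n)$, and carrying the constants carefully through the Markov step (and absorbing the $2^c$ correction terms) gives the stated $\tfrac18(1-\delta)^2\big(\hat n+\log(1-\delta)-5\big)$.

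The heart of the proof, and the step I expect to be the main obstacle, is fact (ii): the refutation model bounds communication \emph{only} on unequal inputs, so a priori arbitrarily many diagonal points could hide in very deep leaves and escape the $2^{c+1}$-leaf budget of (i). The resolution is the isolation phenomenon above --- a conversation can only continue past $c$ bits after it has already pinned down one player's input, because up to that point each player's next message has its length constrained by the short transcript of some consistent unequal completion --- and making this precise for general $r$-round protocols, where the isolation can happen in any round and on either side, is the delicate part. Everything after (ii) is the rectangle/Cauchy--Schwarz accounting, which uses nothing about $r$; this is precisely why the final bound is round-independent. A secondary, purely bookkeeping obstacle is to land on the exact constant $\tfrac18(1-\delta)^2$, which is routine once (ii) is in hand (the sketch above is, if anything, somewhat stronger for $\delta$ bounded away from $1$, leaving room to reintroduce slack).
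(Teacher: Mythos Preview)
The paper does not prove this theorem; it is imported verbatim as Theorem~44 of \cite{brody2014certifying} and used only as a black box in the proof of \thmref{thm:trulyperflb}. There is therefore no ``paper's own proof'' to compare your attempt against.

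That said, your outline is essentially the argument in the original source. The crucial structural step is your fact~(ii): any internal node at depth exactly $c$ has a rectangle contained in the diagonal (since any unequal pair reaching it would yield a transcript longer than $c=\rcost(\cP)$), hence is a singleton; this caps the number of diagonal points escaping into long leaves at $2^c$. Combined with the Cauchy--Schwarz rectangle bound on short accepting leaves, this is exactly the mechanism behind the Brody--Chakrabarti--Kondapally lower bound, and it is indeed round-independent. Two small comments. First, the Markov step you flag as an obstacle is unnecessary: since $U_R\ge S_R^2/2^{c+1}-S_R$ holds pointwise and $S_R\le 2^n$, Jensen's inequality gives $\mathbb{E}[U_R]\ge \mathbb{E}[S_R]^2/2^{c+1}-2^n\ge \mu^2/2^{c+1}-2^n$ directly, which is cleaner and actually yields a bound of the form $c\ge \hat n-O(1)$, stronger than the stated one (the prefactor $(1-\delta)^2/8$ is slack). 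Second, your verbal gloss on the isolation phenomenon (``pinned down one player's input'') understates what happens: at depth $c$ \emph{both} inputs are pinned, since a nonempty rectangle lying entirely on the diagonal must be a single point $(z,z)$.
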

\noindent
We show \thmref{thm:trulyperflb} by giving a reduction from $\equ$ and applying \thmref{thm:BCKlb}:
\thmtrulyperflb*
\begin{proof}
Given such a sampler $\mathcal{A}$, we give an algorithm for the two-party, $r$-round equality problem as follows. 
Alice is given $x \in \{0,1\}^n$, and creates a stream with frequency vector given by $f = x$. 
Bob then adds the vector $-y$ into the stream so that the final state of the frequency vector induced by the stream is $f=x-y$. 
Alice and Bob each run $\mathcal{A}$ on their stream and repeatedly pass the state of the algorithm between each other over the course of $r$ rounds. 
Bob then finally obtains the output of the streaming algorithm $\mathcal{A}(f)$ after $r$ rounds. If the output is $\fail$, or anything except for the symbol $\bot$, then Bob declares $\eq_n(x,y) =0$. If the output is $\bot$,  Bob declares $\eq_n(x,y) =1$. Notice by definition of a $(\eps_0,\gamma,\frac{1}{2})$-$G$ sampler (Definition \ref{def:lpsamp}), if we actually had $x=y$, then $f = \vec{0}$, so if $\mathcal{A}$ does not output $\fail$, then it must declare $\bot$ with probability at least $1-\gamma$. Moreover, if $x \neq y$, then since $G((x-y)_i) > 0$ for some $i$, a correct sampler can output $\bot$ with probability at most $\gamma$. Furthermore, it can output $\fail$ in both cases with probability at most $\frac{1}{2}$. 
	
	The above protocol therefore satisfies that if $\eq_n(x,y) =0$, Bob outputs $1$ with probability at most $\gamma$, thus the refutation error is at most $\eps < \gamma$. Moreover, if $\eq_n(x,y) =1$, then $\mathcal{A}$ outputs fail with probability $\frac{1}{2}$, and conditioned on not outputting fail it must output $\bot$ with probability at least $1-\gamma$. Thus, the verification error is at most $\delta < 1/2 + \gamma < 3/4$. Then we have $n - \log(1-\delta) > n/2$, and $\log(\frac{(1-\delta)^2}{\eps}) > \log(\frac{1}{16 \gamma})$. Thus the effective input size is given by 	
	\[	\hat{n} > \min \left\{\frac{n}{2}, \log \frac{1}{16\gamma}\right\} \]
	Thus, by \thmref{thm:BCKlb}, we have 
	\begin{equation}
	\begin{split}
	R^{(r),\reff}_{\eps,\delta}(\eq_n) &\geq \frac{1}{8}(1-\delta)^2(\hat{n} + \log(1-\delta) -5) \\
	&\geq \frac{1}{8\cdot 16}(\hat{n} - 7) \\ 
	& = \Omega(\hat{n}) \\
	\end{split}
	\end{equation}
	which completes the proof of the lower bound. 
\end{proof}
\section{Framework for Truly Perfect Sampling}\label{sec:framework}
In this section, we first give a framework for truly perfect sampling for some measure function $G:\mathbb{R}\to\mathbb{R}^{\ge0}$ such that $G(x)=G(-x)$, $G(0)=0$ and $G$ is non-decreasing in $|x|$. 
If we define $F_G=\sum_{i=1}^n G(f_i)$, then we say that a truly perfect $G$ sampler outputs index $i\in[n]$ with probability $\frac{G(f_i)}{F_G}$. 
We then show how to apply the framework to $L_p$ sampling where $G(x)=|x|^p$ and to various $M$-estimators, such as the $L_1-L_2$, Fair, Huber, and Tukey estimators. 

\subsection{Algorithmic Framework}
Our algorithm is based on running parallel instances of a single sampler.
Each instance uses $\log n$ bits of space, but only succeeds with small probability and thus we need to run multiple instances to ensure that with sufficiently high probability, some instance succeeds.  
Each instance uses reservoir sampling to sample an item $s$ and keeps a counter $c$ of how many times $s$ appears in the stream after it is sampled. 

We first describe the $\sampler$ algorithm. 
Given a stream of elements $u_1,\ldots,u_m$, where each $u_i\in[n]$, $\sampler$ selects an index $j\in[m]$ uniformly at random and outputs $u_j$ as well as the number of instances of $u_j$ that appear after time $j$. 
The algorithm uses reservoir sampling to ensure that each item is selected with probability $\frac{1}{m}$. 
A counter is also maintained to track the number of instances of the sample. 
Each time a new sample replaces the existing sample in the reservoir sampling procedure, the counter is reset to zero. 

\begin{algorithm}[!htb]
\caption{\sampler: Reservoir sampling, counting number of times item has appeared afterwards.}
\alglab{alg:sampler}
\begin{algorithmic}[1]
\Require{A stream of updates $u_1,u_2,\ldots,u_m$, where each $u_i\in[n]$ represents a single update to a coordinate of the underlying vector $f$.}
\Ensure{Sample each coordinate $u_i$ with probability $\frac{1}{m}$ and output the number of occurrences that appears afterwards.}
\State{$s\gets\emptyset$, $c\gets 0$}
\For{each update $u_r$}
\State{$s\gets u_r$ with probability $\frac{1}{r}$}
\If{$s$ is updated to $u_r$}
\State{$c\gets 0$}
\Comment{Reset counter.}
\EndIf
\If{$u_r=s$}
\State{$c\gets c+1$}
\Comment{Increment counter.}
\EndIf
\EndFor
\State{\Return $s$ and $c$.}
\end{algorithmic}
\end{algorithm}

By outputting $s$ with probability $\frac{G(c)-G(c-1)}{\zeta}$, where $\zeta$ is a parameter such that $G(x)-G(x-1)\le\zeta$ for all possible coordinates $x$ in the frequency vector, i.e., $x\in\{f_1,\ldots,f_n\}$, then it can be shown by a telescoping argument that the probability of outputting each $i\in[n]$ is ``corrected'' to roughly $\frac{G(f_i)}{\zeta\cdot m}$, where $m$ is the length of the stream.  
Hence if the sampler successfully outputs a coordinate, it follows the desired distribution. 

\begin{algorithm}[!htb]
\caption{Truly perfect $G$-sampler algorithm for insertion only streams.}
\alglab{alg:perfect:framework}
\begin{algorithmic}[1]
\Require{A stream of updates $u_1,u_2,\ldots,u_m$, where each $u_i\in[n]$ represents a single update to a coordinate of the underlying vector $f$, a measure function $G$.}
\State{Initialize an instance of $\sampler$.}
\Comment{\algref{alg:sampler}}
\For{each update $u_t\in[n]$}
\State{Update $\sampler$ with $u_t$.}
\EndFor
\State{Let $s$ be the sampled output of $\sampler$ and let $c$ be the number of times $s$ has appeared afterwards.}
\State{Let $\zeta$ be a parameter such that $G(x)-G(x-1)\le\zeta$ for all $x\ge 1$.}
\State{\Return $s$ with probability $\frac{G(c+1)-G(c)}{\zeta}$.}
\end{algorithmic}
\end{algorithm}

\begin{theorem}
\thmlab{thm:mestimator:framework}
Let $G$ be a function such that $0\le G(x)-G(x-1)\le\zeta$ for all $x\ge 1$. 
Given a lower bound $\widehat{F_G}$ on $F_G=\sum_{i\in[n]}G(f_i)$, then there exists a truly perfect $G$ sampler for an insertion-only stream that outputs $\fail$ with probability at most $\delta$ and uses $\O{\frac{\zeta m}{\widehat{F_G}}\log n\log\frac{1}{\delta}}$ bits of space. 
Further, the time to process each update is $\O{1}$ in expectation. 
\end{theorem}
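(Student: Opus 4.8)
The plan is to analyze a single run of \algref{alg:perfect:framework} \emph{exactly}, and then amplify its small success probability by running many independent copies, keeping every quantity in closed form so that no additive error is ever introduced. First I would show that one copy outputs each coordinate $i$ with probability exactly $G(f_i)/(\zeta m)$. The reservoir step of \algref{alg:sampler} retains a stream position $p\in[m]$ that is exactly uniform, since $\PPr{p=r}=\frac1r\prod_{j=r+1}^{m}\bigl(1-\frac1j\bigr)=\frac1m$, and the counter it returns equals, up to the $\pm1$ indexing convention in \algref{alg:sampler}, the number of occurrences of $u_p$ after position $p$. So if coordinate $i$ occupies positions $p_1<\dots<p_{f_i}$, conditioning on $p=p_k$ fixes the acceptance probability at $\frac{G(c+1)-G(c)}{\zeta}$ with $c=f_i-k$, which lies in $[0,1]$ precisely because $0\le G(x)-G(x-1)\le\zeta$ for all $x\ge1$. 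Summing over $k\in[f_i]$ and telescoping,
\[
\PPr{\text{output}=i}=\sum_{k=1}^{f_i}\frac1m\cdot\frac{G(f_i-k+1)-G(f_i-k)}{\zeta}=\frac1{\zeta m}\sum_{j=0}^{f_i-1}\bigl(G(j+1)-G(j)\bigr)=\frac{G(f_i)}{\zeta m},
\]
using $G(0)=0$. Hence one copy outputs \emph{some} coordinate with probability $q:=F_G/(\zeta m)$, which is at most $1$ because $G(f_i)\le\zeta f_i$ gives $F_G\le\zeta\sum_i f_i=\zeta m$; and conditioned on outputting a coordinate, the copy outputs $i$ with probability exactly $G(f_i)/F_G$.

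Next I would amplify: run $N=\lceil\frac{\zeta m}{\widehat{F_G}}\ln\frac1\delta\rceil$ independent copies and report the coordinate returned by the first copy (in a fixed order) that succeeds, returning $\fail$ if none does. Since $\widehat{F_G}\le F_G$ we have $q\ge\widehat{F_G}/(\zeta m)$, so the probability of returning $\fail$ is $(1-q)^N\le e^{-Nq}\le\delta$. A one-line computation shows that, conditioned on at least one copy succeeding, coordinate $i$ is reported with probability $\bigl(\sum_{t=1}^N(1-q)^{t-1}q\bigr)\cdot\frac{G(f_i)}{F_G}\big/\bigl(1-(1-q)^N\bigr)=\frac{G(f_i)}{F_G}$; that is, conditioned on not returning $\fail$, the output distribution is \emph{exactly} $G(f_i)/F_G$, with zero additive error. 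This is precisely what makes the sampler truly perfect: the only randomness whose distribution we invoke (exact uniformity of the reservoir position and the independent Bernoulli acceptances) is used in closed form, and the single quantity we are not allowed to estimate, $\widehat{F_G}$, is supplied to us as a guaranteed lower bound rather than produced by a randomized estimator that might fail and thereby perturb the distribution.

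For the resource bounds, each copy of $\sampler$ stores its current sample ($\O{\log n}$ bits) and a counter bounded by $m=\poly(n)$ ($\O{\log n}$ bits), so the total space is $\O{N\log n}=\O{\frac{\zeta m}{\widehat{F_G}}\log n\log\frac1\delta}$. The point I expect to need the most care is the update time, since feeding each update to all $N$ copies naively costs $\Theta(N)$. To bring this down I would (i) implement the reservoir step with the standard ``skip-ahead'' technique, where each copy precomputes the next step at which its sample changes, so that at step $r$ only an expected $\O{N/r}$ copies are touched and the total resampling work over the stream is $\sum_{r=1}^{m}\O{1+N/r}=\O{m+N\log m}$; and (ii) maintain all counters through a single hash table keyed by the at most $N$ distinct values currently held as samples, where the entry for a value $v$ records the number of occurrences of $v$ since the entry was created, and each copy remembers that count at the moment it adopted $v$, so that each copy recovers its true counter by one subtraction at the end. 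Each update is then processed in $\O{1}$ expected time, amortized over the stream. (The sliding-window version stated in \frameref{frame:G:sampler} is obtained on top of this by attaching $\O{\log n}$ ``checkpoint'' copies to each instance, so that a still-active sample remains available once the current one expires, which accounts for its extra $\log n$ factor.)
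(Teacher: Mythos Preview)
Your proposal is correct and follows essentially the same approach as the paper: the telescoping computation for a single copy, the amplification by independent repetition using the lower bound $\widehat{F_G}$, and the hash-table-with-offsets plus skip-ahead reservoir trick for the $\O{1}$ update time are all exactly what the paper does. If anything, you are slightly more explicit than the paper in places---e.g., verifying $q\le 1$ and writing out the geometric-series computation that shows the conditional output distribution after amplification is \emph{exactly} $G(f_i)/F_G$---but the argument is the same.
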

\begin{proof}
	The probability that $s$ is the $j\th$ particular instance of item $i$ inside the stream is $\frac{1}{m}$. 
	Since the number of instances of $i$ appearing after $j$ is $f_i-j$ then the probability that $i$ is output is 
	\[\sum_{j=1}^{f_i}\frac{1}{m}\frac{G(f_i-j+1)-G(f_i-j)}{\zeta}=\frac{G(f_i)}{\zeta m}.\]
	We note that $G(f_i-j+1)-G(f_i-j)\le\zeta$ for all $j\in[f_i]$, so returning $s$ with probability $\frac{G(c+1)-G(c)}{\zeta}$ is valid.
	
	Hence the probability that some index is returned by \algref{alg:perfect:framework} is $\sum_{i\in[n]}\frac{G(f_i)}{\zeta m}=\frac{F_G}{\zeta m}$, where $F_G=\sum G(f_i)$. 
	Thus by repeating the sampler $\O{\frac{\zeta m}{F_G}\log\frac{1}{\delta}}$ times, the algorithm will output a sample $s$ with probability at least $1-\delta$. 
	Although the algorithm does not actually have the value of $F_G$, given a lower bound $\widehat{F_G}$ on $F_G$, then it suffices to repeat the sampler $\O{\frac{\zeta m}{\widehat{F_G}}\log\frac{1}{\delta}}$ times. 
	Moreover, the sample $s$ will output each index $i\in[n]$ with probability $\frac{G(f_i)}{F_G}$. 
	Each instance only requires $\O{\log n}$ bits to maintain the counter $c$, assuming $\log m=\O{\log n}$. 
	Thus the total space used is $\O{\frac{\zeta m}{\widehat{F_G}}\log n\log\frac{1}{\delta}}$ bits of space.
	
	We remark that the runtime of the algorithm can be optimized to constant time per update by storing a hash table containing a count and a list of offsets. Specifically, when item $i$ is first sampled by some repetition of the algorithm, then we start counting the number of subsequent instances of $i$ in the stream. If $i$ is subsequently sampled by another independent instance of the reservoir sampling at some time $t$, then it suffices to store the value of the counter at the time $t$ as an offset. This value does not change and can now be used to correctly recover the correct count of the number of instances of $i$ after time $t$ by subtracting this offset from the largest count. We can maintain a hash table with pointers to the head and tail of the list, so that when an item $i$ is sampled, we can efficiently check whether that item is already being tracked by another repetition of the sampler. Hence the update time is $\O{1}$ worst-case once the hash bucket for $i$ is determined and $\O{1}$ in expectation overall given the assignment of the bucket by the hash function. Note that by design of the offsets, we can build the correct counters at the end of the stream to determine the corresponding sampling probabilities. Finally, we observe that it is well-known how to optimize the runtime of the reservoir sampling over \algref{alg:sampler} by using $\O{k\log n}$ total time to sample $k$ items~\cite{li1994reservoir}. 
\end{proof}

An interesting corollary of our results is to obtaining $s=\tilde{o}(n^{1/p})$ samples, rather than only a single sample. 
Our memory, like previous (non-truly perfect) samplers, does get multiplied by $s$, but our update time surprisingly remains $\O{1}$. 
By comparison, the only known previous perfect $L_p$ sampler would require a prohibitive $s\cdot\poly(n)$ update time~\cite{JayaramW18}. 
The reason our algorithm does not suffer from a multiplicative overhead in update time is because our data structure maintains a hash table containing a count and a list of offsets for each of the distinct items that is sampled during the stream. 
Thus, if our algorithm is required to output $s$ samples, then our hash table now contains more items, but each stream update can only affect the counter and offset corresponding to the value of the update.

\subsection{Applications in Data Streams}
\seclab{sec:apps:stream}
In this section, we show various applications of \algref{alg:perfect:framework} in the streaming model. 
The main barrier to applying \thmref{thm:mestimator:framework} to any arbitrary measure function $G$ is obtaining a ``good'' lower bound $\widehat{F_G}$ to $F_G=\sum_{i\in[n]}G(f_i)$. 

\subsubsection{Truly Perfect $L_p$ Sampling on Insertion-Only Streams}
We first consider truly perfect $L_p$ sampling, where $G(x)=|x|^p$, for $p\ge 1$. 
Note that reservoir sampling is already a perfect $L_1$ sampler for $p=1$ and it uses $\O{\log n}$ bits of space on a stream of length $m=\poly(n)$. 
For $p\in(1,2]$, we first require the following norm estimation algorithm on insertion-only streams. 

We now introduce an algorithm that for \textit{truly perfect }$L_p$ sampling using the above framework. 
We first describe the case of $p=2$ but before describing our perfect $L_2$ sampler, we first recall the $\misragries$ data structure for finding heavy-hitters. 
\begin{theorem}
\cite{MisraG82}
\thmlab{thm:misragries}
There exists a deterministic one-pass streaming algorithm $\misragries$ that uses $\O{\frac{1}{\eps}\log m}$ space on a stream of length $m$ and outputs a list $L$ of size $\frac{1}{2\eps}$ that includes all items $i$ such that $f_i>2\eps m$. 
Moreover, the algorithm returns an estimate $\widehat{f_i}$ for each $i\in L$ such that $f_i-\eps m\le\widehat{f_i}\le f_i$.  
\end{theorem}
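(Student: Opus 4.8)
The plan is to recall the classical Misra--Gries (``Frequent'') algorithm and verify its guarantees via a one-line amortization argument. The algorithm maintains an associative array $A$ holding at most $k=\Theta(1/\eps)$ key--value pairs, each key a coordinate in $[n]$ and each value a positive integer; coordinates absent from $A$ are regarded as having value $0$. On update $u_t=i$: if $i$ is already a key of $A$, increment $A[i]$; otherwise, if $A$ has fewer than $k$ keys, set $A[i]\gets 1$; otherwise (a ``decrement-all'' step) subtract $1$ from every value in $A$ and delete every key whose value reaches $0$. At the end, output $L:=\{\text{keys of }A\}$ and, for $i\in L$, the estimate $\widehat{f_i}:=A[i]$ (with $\widehat{f_i}:=0$ for $i\notin L$). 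The space bound is immediate: $A$ stores at most $k=\O{1/\eps}$ pairs, each an index in $[n]$ together with a counter bounded by $m$, for $\O{\frac1\eps\log m}$ bits in total (treating $\log n$ and $\log m$ as comparable), and $|L|\le k$.

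The inequality $\widehat{f_i}\le f_i$ is immediate by induction on $t$, since $A[i]$ is incremented only when $i$ itself arrives and is otherwise only decremented or deleted. For the lower bound $\widehat{f_i}\ge f_i-\eps m$, the key quantity is $D$, the number of ``decrement-all'' steps over the whole stream. I would bound it with the potential $\Phi_t:=\sum_j A_t[j]$ (the sum of all stored counters after step $t$): the first two branches each raise $\Phi$ by exactly $1$, whereas a ``decrement-all'' step lowers it by exactly $k$ (at that instant $A$ has exactly $k$ keys, all of value at least $1$). Since $\Phi_0=0$, $\Phi_m\ge 0$, and there are exactly $m-D$ steps of the first two types, we get $0\le\Phi_m=(m-D)-kD$, hence $D\le m/(k+1)$. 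Now observe that $A[i]$ is decremented only on a ``decrement-all'' step whose arriving update is some $j\neq i$ with $i$ then a key of $A$; in particular a ``decrement-all'' step triggered by an occurrence of $i$ does \emph{not} decrement $A[i]$, because at that instant $i$ is not a key and $A[i]=0$. Thus the ``decrement-all'' steps that discard an occurrence of $i$ and those that decrement $A[i]$ are disjoint subsets of the $D$ such steps, so the total shortfall of $\widehat{f_i}$ below $f_i$ is at most $D\le m/(k+1)$. Choosing the constant in $k=\Theta(1/\eps)$ so that $m/(k+1)\le\eps m$ yields $\widehat{f_i}\ge f_i-\eps m$; and then any $i$ with $f_i>2\eps m$ has $\widehat{f_i}\ge f_i-\eps m>\eps m>0$, so $i$ is a key of $A$, giving the heavy-hitter guarantee.

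The argument has no deep obstacle. The point that actually needs care is the double-counting in the lower bound: one must notice that a ``decrement-all'' step triggered by $i$ itself does not also decrement $A[i]$, so the per-coordinate deficit is charged to at most $D$ such steps rather than $2D$, together with the amortized bound $D\le m/(k+1)$ from the potential. The rest is cosmetic bookkeeping of the constant hidden in $k=\Theta(1/\eps)$ so that the reported list size, the heavy-hitter threshold $2\eps m$, and the additive error $\eps m$ line up as stated. Since the result is classical~\cite{MisraG82}, one may alternatively just invoke it, but the above is the self-contained proof I would give.
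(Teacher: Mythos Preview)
Your proof is correct and is the standard argument for the Misra--Gries guarantees. Note, however, that the paper does not prove this theorem at all: it is stated with a citation to~\cite{MisraG82} and used as a black box, so there is no ``paper's own proof'' to compare against. Your write-up is a clean self-contained justification; the only place to be careful is exactly where you flagged it, namely that the specific constants in the theorem statement (list size $\tfrac{1}{2\eps}$, threshold $2\eps m$, and additive error $\eps m$) do not quite line up for a single choice of $k$, but this is an artifact of the statement rather than of your argument.
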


Although we could obtain a perfect $L_p$ sampler using any $L_p$ estimation algorithm that succeeds with high probability, we can further remove the additive $\frac{1}{\poly(n)}$ error of returning each coordinate $i\in[n]$ using the $\misragries$ data structure to obtain a \emph{truly} perfect $L_p$ sampler.

\begin{restatable}{theorem}{thmframeworkfour}
\thmlab{thm:framework:general}
For a frequency vector $f$ implicitly defined by an insertion-only stream, there exists an algorithm that returns each $i\in[n]$ with probability $\frac{f_i^p}{\sum_{j\in[n]} f_j^p}$, using space $\O{m^{1-p}\log n}$ for $p\in(0,1]$ and $\O{n^{1-1/p}\log n}$ bits of space for $p\in[1,2]$. 
\end{restatable}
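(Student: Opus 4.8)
The plan is to apply the general framework \thmref{thm:mestimator:framework} with the measure function $G(x)=|x|^p$, for which $F_G=\sum_j f_j^p=\|f\|_p^p$. That theorem reduces the whole problem to supplying two ingredients: a valid increment bound $\zeta$ with $G(x)-G(x-1)\le\zeta$ for all $x\ge1$, and a lower bound $\widehat{F_G}\le F_G$ that the algorithm can produce \emph{with no chance of error} (a randomized estimator that fails with tiny probability would perturb the conditional output distribution and destroy true-perfectness). Given these, the framework yields a truly perfect $L_p$ sampler in $\O{\frac{\zeta m}{\widehat{F_G}}\log n\log\frac1\delta}$ bits with $\O{1}$ update time, so it remains only to choose $\zeta$ and $\widehat{F_G}$ and check that $\frac{\zeta m}{\widehat{F_G}}=\O{m^{1-p}}$ for $p\in(0,1]$ and $\O{n^{1-1/p}}$ for $p\in[1,2]$ (with $p$ and $\delta$ treated as constants).

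For $p\in(0,1]$ this is immediate and requires no estimation at all. Concavity of $t\mapsto t^p$ makes $x^p-(x-1)^p$ non-increasing in $x$, so its maximum over $x\ge1$ is $1^p-0^p=1$; take $\zeta=1$. Subadditivity of $t\mapsto t^p$ gives $F_G=\sum_j f_j^p\ge\big(\sum_j f_j\big)^p=\|f\|_1^p=m^p$, so take $\widehat{F_G}=m^p$, which the algorithm knows exactly by the end of the stream. Then $\frac{\zeta m}{\widehat{F_G}}=m^{1-p}$, giving $\O{m^{1-p}\log n}$ bits.

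For $p\in[1,2]$ the function is convex, and by the mean value theorem $G(x)-G(x-1)=x^p-(x-1)^p\le p x^{p-1}$, which is non-decreasing in $x$; hence $\zeta\approx p\|f\|_\infty^{p-1}$ is valid — but now we must obtain a constant-factor handle on $\|f\|_\infty$ \emph{deterministically}, which is why we cannot use a randomized norm estimator. We run the Misra--Gries sketch (\thmref{thm:misragries}) with parameter $\eps=\Theta(n^{-(1-1/p)})$, costing $\O{n^{1-1/p}\log n}$ bits, and distinguish the two recognizable outcomes. If $\|f\|_\infty>2\eps m$, the sketch returns $\widehat M$ with $M/2\le\widehat M\le M:=\|f\|_\infty$; we set $\zeta=p2^{p-1}\widehat M^{p-1}$ and $\widehat{F_G}=\widehat M^p$, using $F_G\ge\|f\|_\infty^p$. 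Otherwise $\|f\|_\infty\le 2\eps m$, and we set $\zeta=p(2\eps m)^{p-1}$ and $\widehat{F_G}=m^p/n^{p-1}$, using convexity, $\sum_j f_j^p\ge n(m/n)^p$.

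It then remains to verify $\frac{\zeta m}{\widehat{F_G}}=\O{n^{1-1/p}}$ in both outcomes. In the first, the ratio is $\Theta(m/\widehat M)=\O{m/\|f\|_\infty}=\O{1/\eps}$ since $\|f\|_\infty>2\eps m$; in the second it is $\Theta\!\big((\eps n)^{p-1}\big)=\Theta\!\big(n^{(p-1)/p}\big)=\Theta(n^{1-1/p})$ since $\eps n=n^{1/p}$. These coincide precisely because $\eps=n^{-(1-1/p)}$ was chosen so that $1/\eps=n^{1-1/p}$. Hence $\O{n^{1-1/p}}$ parallel copies of the $\O{\log n}$-bit single sampler suffice, and together with the Misra--Gries sketch the total is $\O{n^{1-1/p}\log n}$ bits. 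I expect the crux to be exactly this balancing of the two regimes — choosing $\eps$ so that the ``heavy $\|f\|_\infty$'' bound $\widehat{F_G}\approx\|f\|_\infty^p$ and the ``flat'' bound $\widehat{F_G}\approx m^p/n^{p-1}$ glue at the threshold $\|f\|_\infty\approx\eps m$ — while ensuring that approximating $\|f\|_\infty$ (rather than knowing it exactly) costs only a constant factor, and that every subroutine (reservoir sampling, Misra--Gries, and the rejection step of \thmref{thm:mestimator:framework}) is either exact or affects only whether the sampler outputs \fail\ and never its conditional output distribution.
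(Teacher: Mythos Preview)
Your proposal is correct and follows essentially the same approach as the paper: for $p\in(0,1]$ both you and the paper take $\zeta=1$ and use $F_p\ge m^p$, and for $p\in[1,2]$ both use a Misra--Gries sketch with threshold $\eps=\Theta(n^{-(1-1/p)})$ to control $\|f\|_\infty$ and then split the analysis into the ``heavy $\|f\|_\infty$'' and ``flat'' regimes. The only cosmetic difference is that the paper avoids making the algorithm itself branch: it sets a single $\zeta=2Z^{p-1}$ from the additive estimate $Z$ with $\|f\|_\infty\le Z\le\|f\|_\infty+m/n^{1-1/p}$ and then carries out your two-regime calculation purely in the analysis of the success probability, whereas you choose two different $(\zeta,\widehat{F_G})$ pairs based on the Misra--Gries output (which works but requires you to state the decision rule in terms of $\widehat M$ rather than $\|f\|_\infty$, since the latter is not directly observable).
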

We break down the proof of \thmref{thm:framework:general} into casework for $p\in[1,2]$ and $p\in(0,1]$. 
\begin{restatable}{theorem}{thmperfectmedp}
\thmlab{thm:framework3}
For $p\in[1,2]$ and a frequency vector $f$ implicitly defined by an insertion-only stream, there exists an algorithm that returns each $i\in[n]$ with probability $\frac{f_i^p}{\sum_{j\in[n]} f_j^p}$, using $\O{n^{1-1/p}\log n}$ bits of space. 
\end{restatable}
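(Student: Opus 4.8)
The plan is to instantiate \thmref{mestimator:framework} with $G(x)=x^p$, supplying a \emph{deterministic} upper bound $\zeta$ on $\max_{1\le x\le\|f\|_\infty}\bigl(x^p-(x-1)^p\bigr)=\|f\|_\infty^p-(\|f\|_\infty-1)^p$ and a deterministic lower bound $\widehat{F_p}$ on $F_p=\sum_i f_i^p$, and then to show that the number of parallel \sampler{} copies that the framework asks for, namely $\O{\zeta m/\widehat{F_p}\cdot\log\frac1\delta}$, is $\O{n^{1-1/p}}$ for every insertion-only stream. The \sampler{} copies and a \misragries{} instance both run during the stream; the rejection step, which is the only thing depending on $\zeta$, is deferred to the end of the stream once $\zeta$ is known. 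Determinism is the crucial point here: a high-probability norm estimate would inject $1/\poly(n)$ error into the output distribution, so we must read $\zeta$ and $\widehat{F_p}$ off of \misragries{}, exactly as the discussion preceding the statement indicates.

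First, run \misragries{} (\thmref{misragries}) with $\eps=\Theta(n^{1/p-1})$; this costs $\O{\eps^{-1}\log m}=\O{n^{1-1/p}\log n}$ bits, which already matches the target. Let $\widehat f_{\max}$ be the largest estimate it returns (zero if its list is empty) and set $U=\max\{2\widehat f_{\max},\,2\eps m\}$. Using the guarantee $f_i-\eps m\le\widehat f_i\le f_i$ one checks deterministically that $\|f\|_\infty\le U$: if $\|f\|_\infty>2\eps m$ then the heaviest coordinate lies in the list with estimate $>\|f\|_\infty/2$, so $U\ge 2\widehat f_{\max}>\|f\|_\infty$; otherwise $\|f\|_\infty\le 2\eps m\le U$ outright. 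Since $p\ge1$ makes $x\mapsto x^p-(x-1)^p$ nondecreasing and at most $p x^{p-1}$ (mean value theorem), the choice $\zeta:=pU^{p-1}$ is valid, i.e.\ the rejection probability $\tfrac{(c+1)^p-c^p}{\zeta}\le 1$ for every count $c$ that can occur, since then $c+1\le\|f\|_\infty\le U$. For the lower bound take $\widehat{F_p}:=\max\{\widehat f_{\max}^{\,p},\,m^p/n^{p-1}\}$: the first term is valid because $F_p\ge\|f\|_\infty^p\ge\widehat f_{\max}^{\,p}$, and the second because by H\"older (restricted to the nonzero coordinates) together with $F_0\le n$ we have $F_p\ge F_1^p/F_0^{p-1}\ge m^p/n^{p-1}$.

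It remains to bound $\zeta m/\widehat{F_p}$, which we do by splitting on $\|f\|_\infty$ relative to $2\eps m=\Theta(mn^{1/p-1})$. If $\|f\|_\infty>2\eps m$, then $U\le 2\|f\|_\infty$ and $\widehat{F_p}\ge(\|f\|_\infty/2)^p$, so $\zeta m/\widehat{F_p}=\O{m/\|f\|_\infty}=\O{1/\eps}=\O{n^{1-1/p}}$. If $\|f\|_\infty\le 2\eps m$, then $\widehat f_{\max}\le 2\eps m$, hence $U\le 4\eps m$, and using $\widehat{F_p}\ge m^p/n^{p-1}$ we get $\zeta m/\widehat{F_p}=\O{(\eps n)^{p-1}}=\O{n^{1-1/p}}$; the hidden constants stay bounded because $p\le 2$. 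Plugging $\zeta m/\widehat{F_p}=\O{n^{1-1/p}}$ into \thmref{mestimator:framework} with $\delta$ constant gives $\O{n^{1-1/p}\log n}$ bits for the samplers, which matches the \misragries{} cost, while $\fail$ is output with constant probability. Truly-perfect correctness --- output $i$ with probability exactly $f_i^p/F_p$ conditioned on not failing --- is inherited verbatim from \thmref{mestimator:framework}, since $\zeta$ and the number of repetitions are fixed deterministic quantities and the telescoping identity $\sum_{j=1}^{f_i}\bigl((f_i-j+1)^p-(f_i-j)^p\bigr)=f_i^p$ is exact.

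The main obstacle is the balancing act in the last step: the single parameter $\zeta\approx p\|f\|_\infty^{p-1}$ controls the per-copy acceptance probability, and it must be played off against a lower bound on $F_p$ that is tight in \emph{both} extreme regimes --- nearly all the $L_p$ mass on one coordinate (where $F_p\ge\|f\|_\infty^p$ is the good bound) versus the mass spread over up to $n$ coordinates (where $F_p\ge m^p/n^{p-1}$ is the good bound). The choice $\eps=\Theta(n^{1/p-1})$ is precisely what equalizes the two regimes at $\O{n^{1-1/p}}$ copies; too small blows up the \misragries{} space, too large blows up the copy count in the spread-out regime. Extracting $\|f\|_\infty$ (and hence $\zeta$ and $\widehat{F_p}$) with certainty rather than with high probability is exactly what keeps the sampler truly perfect rather than merely perfect.
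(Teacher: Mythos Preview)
Your proof is correct and follows essentially the same approach as the paper's: both run \misragries{} with threshold $\Theta(n^{1/p-1})$ to obtain a deterministic upper bound on $\|f\|_\infty$, set $\zeta\approx\|f\|_\infty^{p-1}$ accordingly, and then split on whether $\|f\|_\infty\gtrless m/n^{1-1/p}$ to show the per-copy acceptance probability is $\Omega(n^{-(1-1/p)})$. Your presentation is slightly more explicit (you name $U$ and $\widehat{F_p}$ separately and spell out that the rejection step can be deferred until $\zeta$ is known), but the underlying argument and the two-case calculation are identical to the paper's.
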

\begin{proof}
	By \thmref{thm:misragries}, using a single $\misragries$ data structure with $\O{n^{1-1/p}\log n}$ bits of space allows us to obtain a number $Z$ such that 
	\[\|f\|_{\infty}\le Z\le\|f\|_{\infty}+\frac{m}{n^{1-1/p}}.\]
	Note that for $p\in[1,2]$, the function $x^p-(x-1)^p$ is maximized at $p=2$, equal to $2x^{p-1}$, by the generalized binomial theorem. 
	Since $f_i\le\|f\|_\infty$, then we have $(f_i)^p-(f_i-1)^p\le 2\|f\|^{p-1}_{\infty}$ for any $i\in[n]$, so that $\zeta=2Z^{p-1}$ induces a valid sampling procedure. 
	Hence each instance outputs some index $i\in[n]$ with probability at least $\frac{F_p}{2Z^{p-1}m}$. 
	If $\|f\|_\infty\ge\frac{m}{n^{1-1/p}}$, then we have $2Z\le4\|f\|_\infty\le4\|f\|_p$, so that 
	\[\frac{F_p}{2Z^{p-1}m}\ge\frac{F_p}{4L_p^{p-1}\cdot m}=\frac{L_p}{4F_1}\ge\frac{1}{4n^{1-1/p}}.\]
	On the other hand, if $\|f\|_\infty\le\frac{m}{n^{1-1/p}}$, then we have $2Z\le\frac{4m}{n^{1-1/p}}$, so that
	\[\frac{F_p}{2Z^{p-1}m}\ge\frac{F_p\cdot n^{(p-1)^2/p}}{4m^p}=\frac{F_p\cdot n^{(p-1)^2/p}}{4F_1^p}\ge\frac{F_p\cdot n^{(p-1)^2/p}}{4F_p\cdot n^{(p-1)}}\]\[=\frac{1}{4n^{1-1/p}}.\]
	Therefore, the probability that an instance outputs some index $i\in[n]$ is at least $\frac{1}{4n^{1-1/p}}$, and it suffices to use $\O{n^{1-1/p}}$ such instances, with total space $\O{n^{1-1/p}\log n}$ bits of space. 
	By \thmref{thm:misragries}, conditioned on an index being returned by the algorithm, the probability that each coordinate $i\in[n]$ is output is $\frac{f_i^p}{F_p}$. 
\end{proof}

\begin{restatable}{theorem}{thmperfectlowp}
\thmlab{thm:framework4}
For $p\in(0,1]$ and a frequency vector $f$ implicitly defined by an insertion-only stream, there exists an algorithm that returns each $i\in[n]$ with probability $\frac{f_i^p}{\sum_{j\in[n]} f_j^p}$, using $\O{m^{1-p}\log n}$ bits of space. 
\end{restatable}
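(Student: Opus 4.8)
The plan is to instantiate \thmref{thm:mestimator:framework} with the measure function $G(x)=|x|^p$. That theorem reduces the task to supplying two things: a parameter $\zeta$ with $0\le G(x)-G(x-1)\le\zeta$ for every integer $x\ge 1$, and a lower bound $\widehat{F_p}$ on $F_p=\sum_{i\in[n]} f_i^p$ that is obtained by an exact, deterministic computation — any randomized estimate that might fail would inject additive error into the output distribution and break the truly perfect guarantee. In contrast to the range $p\in[1,2]$ treated in \thmref{thm:framework3}, where $x^p-(x-1)^p$ grows like $x^{p-1}$ and one is forced to estimate $\|f\|_\infty$ with Misra--Gries, for $p\in(0,1]$ both ingredients come essentially for free.

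For the first ingredient, the map $x\mapsto x^p-(x-1)^p$ has derivative $p\bigl(x^{p-1}-(x-1)^{p-1}\bigr)$, which is non-positive on $[1,\infty)$ because $t\mapsto t^{p-1}$ is non-increasing when $p\le 1$; hence $x^p-(x-1)^p$ is maximized at $x=1$, where it equals $G(1)-G(0)=1$, so $\zeta=1$ is valid regardless of $f$. For the second ingredient, subadditivity of $x\mapsto x^p$ on $\mathbb{R}_{\ge 0}$ for $p\in(0,1]$ gives, by induction, $F_p=\sum_{i\in[n]} f_i^p\ge\bigl(\sum_{i\in[n]} f_i\bigr)^p=m^p$, using that the coordinates of an insertion-only frequency vector are non-negative and sum to the stream length $m$; this is exact arithmetic, so $\widehat{F_p}=m^p$ is a legitimate \emph{deterministic} lower bound. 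Plugging $\zeta=1$ and $\widehat{F_p}=m^p$ into \thmref{thm:mestimator:framework}: each instance of \algref{alg:perfect:framework} returns a coordinate with probability $F_p/(\zeta m)=F_p/m\ge m^{p-1}$, so running $\O{m^{1-p}\log\frac1\delta}$ independent copies drives the probability that all of them fail down to $\delta$; each copy costs $\O{\log n}$ bits since $\log m=\O{\log n}$, for total space $\O{m^{1-p}\log n}$ at constant $\delta$; and, conditioned on some copy returning an index, \thmref{thm:mestimator:framework} guarantees it equals $i$ with probability exactly $G(f_i)/F_p=f_i^p/F_p$.

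The one place that needs care is that the number $\O{m^{1-p}}$ of parallel copies grows with a stream length $m$ that is not known in advance, while a reservoir sampler spawned mid-stream would sample uniformly only over the suffix it has seen and would therefore return a coordinate from the wrong, suffix-restricted distribution. I would handle this with a decreasing-threshold variant of priority sampling: attach an independent priority to each stream position and, at time $t$, retain exactly those positions — together with their post-sampling occurrence counters, as in \algref{alg:perfect:framework} — whose priority lies below a threshold $\Theta(t^{-p})$. This retains $\Theta(t^{1-p})$ positions in expectation, each marginally uniform over $[t]$; thresholds only shrink over time, so nothing is ever discarded prematurely and nothing ever needs to be resurrected; and at the end of the stream every retained position behaves exactly like an independent run of $\sampler$ over the whole stream. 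If the retained set ever exceeds a fixed constant factor times its expectation we simply output $\fail$ — this overflow event is a function of the priorities alone, which are independent of the (possibly adversarial) stream contents, so it has probability far below $\delta$ and conditioning on its complement does not bias which coordinate is returned. I expect this implementation detail, together with the routine check that conditioning on ``not $\fail$'' leaves the output distribution exactly $f_i^p/F_p$, to be the only real obstacle; the distributional heart of the argument is immediate once $\zeta=1$ and $F_p\ge m^p$ are in hand.
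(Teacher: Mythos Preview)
Your first two paragraphs are exactly the paper's argument: take $\zeta=1$ since $x^p-(x-1)^p\le 1$ for $p\in(0,1]$, observe $F_p\ge m^p$ so each instance succeeds with probability at least $F_p/m\ge m^{p-1}$, and run $\O{m^{1-p}}$ copies. The paper's proof is literally these three observations and does not discuss the unknown-$m$ issue at all.

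Your third paragraph raises a legitimate concern the paper glosses over, but the fix you propose is flawed. The positions retained by a single priority stream are \emph{not} independent uniform draws from $[m]$; they form a random \emph{subset} (each position kept independently with probability $\theta_m$), and any ordering of them is sampling without replacement. The correctness of \thmref{thm:mestimator:framework} uses that each instance, conditioned on returning something, returns $i$ with probability exactly $G(f_i)/F_G$, and that \emph{independence} of the instances preserves this conditional law when you take the first success. With dependent draws the property fails: for just two positions with rejection probabilities $p_1\ne p_2$ processed in a uniformly random order, one gets $\Pr[\text{output position }1]=p_1(1-p_2/2)$ and $\Pr[\text{output position }2]=p_2(1-p_1/2)$, whose ratio is not $p_1/p_2$. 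Your sentence ``every retained position behaves exactly like an independent run of $\sampler$'' is precisely where the argument breaks --- marginal uniformity is not enough, and no natural output rule over a shared retained set (first success by priority, first by index, uniform among successes) yields an exactly unbiased sampler. A truly perfect implementation needs genuinely independent reservoir samplers initialized at time zero, which does require an a-priori upper bound on $m^{1-p}$; the paper's bound implicitly assumes such a bound is available.
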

\begin{proof}
	Note that for $p\in(0,1]$, we have $(f_i)^p-(f_i-1)^p\le 1$ for any $i\in[n]$, so that $\zeta=1$ induces a valid sampling procedure. 
	Hence each instance outputs some index $i\in[n]$ with probability at least $\frac{F_p}{m}\ge\frac{1}{m^{1-p}}$. 
	Therefore, it suffices to use $\O{m^{1-p}}$ such instances, with total space $\O{m^{1-p}\log n}$ bits of space and conditioned on an index being returned by the algorithm, the probability that each coordinate $i\in[n]$ is output is $\frac{f_i^p}{F_p}$. 
\end{proof}
Together, \thmref{thm:framework3} and \thmref{thm:framework4} give \thmref{thm:framework:general}. 

\subsubsection{$M$-estimators on Insertion-Only Streams}
\seclab{sec:mestimators}
We generalize the paradigm of \algref{alg:perfect:framework} to sampling from general statistical $M$-estimator distributions. 
Recall that for a given a measure function $G:\mathbb{R}\to\mathbb{R}^{\ge0}$ such that $G(x)=G(-x)$, $G(0)=0$ and $G$ is non-decreasing in $|x|$, we define $F_G=\sum_{i=1}^n G(f_i)$. 
Then a truly perfect $M$-estimator sampler outputs index $i\in[n]$ with probability $\frac{G(f_i)}{F_G}$. 

For the $L_1-L_2$ estimator, we have $G(x)=2\left(\sqrt{1+\frac{x^2}{2}}-1\right)$ so that $G(x)-G(x-1)<3$ for $x\ge 1$. 
For the Fair estimator, we have $G(x)=\tau|x|-\tau^2\log\left(1+\frac{|x|}{\tau}\right)$ for some constant $\tau>0$ so that $G(x)-G(x-1)<\tau$ for $x\ge 1$. 
For the Huber measure function, we have $G(x)=\frac{x^2}{2\tau}$ for $|x|\le \tau$ and $G(x)=|x|-\frac{\tau}{2}$ otherwise, where $\tau>0$ is some constant parameter. 

\begin{restatable}{corollary}{corstreammestimator}
\corlab{cor:stream:mestimator}
There exist truly perfect $G$ samplers for the insertion-only streaming model that succeed with probability at least $1-\delta$ and use $\O{\log n\log\frac{1}{\delta}}$ bits of space when $G$ is the $L_1-L_2$ estimator, the Fair estimator, or the Huber estimator. 
\end{restatable}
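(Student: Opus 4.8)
The plan is to invoke \thmref{thm:mestimator:framework} essentially as a black box, so that the whole proof reduces to two bookkeeping tasks for each of the three estimators: pinning down the constant $\zeta$ in the bound $0\le G(x)-G(x-1)\le\zeta$, and exhibiting a lower bound $\widehat{F_G}=\Omega(m)$ on $F_G=\sum_{i\in[n]}G(f_i)$ that holds with probability $1$. Once both are in hand, \thmref{thm:mestimator:framework} immediately yields a truly perfect $G$ sampler using $\O{\frac{\zeta m}{\widehat{F_G}}\log n\log\frac1\delta}=\O{\log n\log\frac1\delta}$ bits, which is exactly the claim.

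For the value of $\zeta$: the $L_1-L_2$ and Fair measure functions were already noted above to satisfy $G(x)-G(x-1)<3$ and $G(x)-G(x-1)<\tau$ respectively for $x\ge 1$, and for the Huber function one checks that $G$ is $C^1$ and convex on $[0,\infty)$ with $G'(x)\le 1$ everywhere (indeed $G'(x)=x/\tau$ on $(0,\tau)$ and $G'(x)=1$ on $(\tau,\infty)$), so $0\le G(x)-G(x-1)\le 1=:\zeta$. Thus in all three cases $\zeta=\O{1}$, and the hypothesis of \thmref{thm:mestimator:framework} is met.

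For the lower bound on $F_G$, I would use convexity of $G$ on $[0,\infty)$ together with $G(0)=0$ — convexity being verified by a one-line computation of $G''\ge 0$ (e.g. $\frac{d^2}{dx^2}\sqrt{1+x^2/2}=\tfrac12(1+x^2/2)^{-3/2}>0$ for the $L_1-L_2$ estimator; $G''(x)=(1+x/\tau)^{-2}>0$ for Fair; $G''\ge 0$ piecewise with a $C^1$ join at $\tau$ for Huber). Convexity and $G(0)=0$ make $x\mapsto G(x)/x$ non-decreasing on $(0,\infty)$, hence $G(x)\ge G(1)\,x$ for every integer $x\ge 1$, where $G(1)>0$ is a positive constant in each case (for Fair this is $\tau-\tau^2\log(1+1/\tau)>0$ since $\log(1+t)<t$ for $t>0$). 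Because the stream is insertion-only, every $f_i\in\mathbb{Z}_{\ge 0}$ and $\sum_i f_i=m$, so $F_G=\sum_i G(f_i)\ge G(1)\sum_i f_i=G(1)\,m$, and we may take $\widehat{F_G}=G(1)\,m$.

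The one point that genuinely matters — and the reason we use the crude pointwise estimate $G(f_i)\ge G(1)f_i$ instead of a sketch-based estimate of $F_G$ — is that $\widehat{F_G}$ must be correct with probability $1$; any chance of failure would leak into additive distributional error and downgrade the sampler from truly perfect to merely perfect. Here $\widehat{F_G}=G(1)m$ is deterministic, so there is no obstacle. Feeding $\zeta=\O{1}$ and $\widehat{F_G}=G(1)m=\Omega(m)$ into \thmref{thm:mestimator:framework} gives $\O{\frac{\zeta}{G(1)}\log\frac1\delta}=\O{\log\frac1\delta}$ parallel copies of $\sampler$, each storing a counter of size at most $m=\poly(n)$ in $\O{\log n}$ bits, for $\O{\log n\log\frac1\delta}$ bits in total, failure probability at most $\delta$, exact output distribution $G(f_i)/F_G$, and $\O{1}$ expected update time. (The Tukey estimator is excluded here precisely because its $G$ is bounded, so it does not grow linearly and $\widehat{F_G}=\Omega(m)$ fails; it is treated separately in \thmref{thm:stream:tukey}.)
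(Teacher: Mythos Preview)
Your proposal is correct and follows essentially the same approach as the paper: bound $\zeta$ by a constant and show $F_G=\Omega(m)$, then invoke \thmref{thm:mestimator:framework}. Your unified convexity argument for $G(x)\ge G(1)\,x$ is in fact cleaner than the paper's ad hoc per-estimator lower bounds (some of which, e.g.\ ``$G(x)>|x|$'' for the $L_1-L_2$ estimator and ``$G(x)>\tau|x|$'' for the Fair estimator, are stated too loosely to be literally true for small $x$).
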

\begin{proof}
For the $L_1-L_2$ estimator, we have $G(x)=2\left(\sqrt{1+\frac{x^2}{2}}-1\right)$ so that $G(x)-G(x-1)<3$ for $x\ge 1$. 
Moreover, $G(x)>|x|$ so $F_G>m$. 
Hence by \thmref{thm:mestimator:framework}, there exists a perfect $G$ sampler that uses $\O{\log n}$ bits of space when $G$ is the $L_1-L_2$ estimator.
	
For the Fair estimator, we have $G(x)=\tau|x|-\tau^2\log\left(1+\frac{|x|}{\tau}\right)$ for some constant $\tau>0$ so that $G(x)-G(x-1)<\tau$ for $x\ge 1$. 
Since $G(x)>\tau|x|$ and thus $F_G>\tau m$, then by \thmref{thm:mestimator:framework}, there exists a perfect $G$ sampler  that uses $\O{\log n}$ bits of space when $G$ is the Fair estimator.
	
For the Huber measure function, we have $G(x)=\frac{x^2}{2\tau}$ for $|x|\le \tau$ and $G(x)=|x|-\frac{\tau}{2}$ otherwise, where $\tau>0$ is some constant parameter. 
Hence, $G(x)-G(x-1)<1$ and $G(x)>\frac{\tau}{2}\cdot m$, so there exists a perfect $G$ sampler that uses $\O{\log n}$ bits of space when $G$ is the Huber estimator by \thmref{thm:mestimator:framework}.	
\end{proof}

\subsubsection{Matrix Norms on Insertion-Only Streams}\label{sec:matrixnorm}
\seclab{sec:matrix}
We now consider the case of sampling row $\m_i$ from a matrix $\M\in\mathbb{R}^{n\times d}$ with rows $\m_1,\ldots,\m_n\in\mathbb{R}^d$ with probability $\frac{G(\m_i)}{F_G}$ for some function $G$, where we define $F_G=\sum_{j\in[n]}G(\m_j)$. 
We consider a insertion-only stream in the sense that each update in the stream is a non-negative update to some coordinate of $\M$. 

We generalize the approach of \algref{alg:perfect:framework} by first using reservoir sampling to sample an update to a coordinate $c$ to a row $r$ of $\M$. 
We then main a vector $\v$ that consists of all the updates to row $r$ and choose to output $r$ with probability $G(\v+e_c)-G(\v)$, where $e_c$ represents the elementary vector in $\mathbb{R}^d$ with a single $1$ in coordinate $c$ and zeros elsewhere. 

\begin{algorithm}[!htb]
	\caption{Truly perfect $G$-sampler algorithm for vectors and matrices in insertion only streams.}
	\alglab{alg:matrix:framework}
	\begin{algorithmic}[1]
		\Require{A stream of updates $u_1,u_2,\ldots,u_m$, where each $u_i\in[n]\times[d]$ represents a single update to a coordinate of a underlying matrix $\M$, a measure function $G$.}
		\State{Initialize an instance of $\sampler$.}
		\Comment{\algref{alg:sampler}}
		\For{each update $u_t\in[n]\times[d]$}
		\State{Update $\sampler$.}
		\EndFor
		\State{Let $r$ be the row and $c$ be the column sampled by $\sampler$ and let $\v$ be the vector induced by the subsequent updates to row $r$.}
		\State{Let $\zeta$ be a parameter such that $G(\x)-G(\x-e_i)\le\zeta$ for all $x\ge\left(\mathbb{R}^{\ge0}\right)^{d}$, $i\in[d]$.}
		\State{\Return $r$ with probability $\frac{G(\v+e_c)-G(\v)}{\zeta}$.}
	\end{algorithmic}
\end{algorithm}

The correctness of \algref{alg:matrix:framework} follows from a similar proof to that of \thmref{thm:mestimator:framework}.

\begin{theorem}\label{thm:matrix}
\thmlab{thm:matrix:framework}
 Fix any non-negative function  $G: \R^d \to \R_{\geq 0}$ satisfying $G(\vec{0}) = 0$.
Let $\zeta$ be a parameter such that $G(\x)-G(\x-e_i)\le\zeta$ for all $\x\in\left(\mathbb{R}^{\ge0}\right)^{d}$, $i\in[d]$. 
Given a lower bound $\widehat{F_G}$ on $F_G$, then there exists a truly perfect $G$ sampler for an insertion-only stream that succeeds with probability at least $1-\delta$ and uses $\O{\frac{\zeta dm}{\widehat{F_G}}\log n\log\frac{1}{\delta}}$ bits of space.
\end{theorem}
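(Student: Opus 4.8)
The plan is to mirror the telescoping argument used in the proof of \thmref{thm:mestimator:framework}, but with the single scalar counter replaced by the partial-sum vector $\v$ built from the updates to the sampled row. First I would set up notation: the stream has $m$ updates $u_1,\dots,u_m$, each of the form $(r,c)\in[n]\times[d]$, and for a fixed row $i\in[n]$ let $\v^{(i,j)}$ denote the vector of updates to row $i$ made strictly after the $j$-th update that touches row $i$ (so $\v^{(i,j)}$ has the form $\m_i - (\text{prefix of first }j\text{ updates to row }i)$). The key observation is that if $\sampler$ selects the $j$-th update to row $i$ — which happens with probability exactly $\tfrac1m$ — and the selected update is to column $c$, then Algorithm~\ref{alg:matrix:framework} returns $r=i$ with probability $\tfrac{G(\v^{(i,j)}+e_c)-G(\v^{(i,j)})}{\zeta}$. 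Chaining over $j$ running through all updates to row $i$ in order, the vectors $\v^{(i,j)}+e_{c_j}$ and $\v^{(i,j-1)}$ coincide (each is $\m_i$ minus the first $j-1$ updates to row $i$), so the sum telescopes:
\begin{equation*}
\sum_{j=1}^{f_i^{(1)}}\frac{1}{m}\cdot\frac{G(\v^{(i,j)}+e_{c_j})-G(\v^{(i,j)})}{\zeta} = \frac{G(\m_i)-G(\vec 0)}{\zeta m}=\frac{G(\m_i)}{\zeta m},
\end{equation*}
where $f_i^{(1)}$ is the number of stream updates touching row $i$. Summing over $i\in[n]$, a single instance outputs \emph{some} row with probability $\frac{F_G}{\zeta m}\ge\frac{\widehat{F_G}}{\zeta m}$, and conditioned on outputting a row the distribution is exactly $G(\m_i)/F_G$.

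Next I would verify the sampling step is well-defined: by hypothesis $G(\x)-G(\x-e_i)\le\zeta$ for all $\x\in(\R^{\ge0})^d$ and $i\in[d]$, and since every partial-sum vector $\v^{(i,j)}+e_{c_j}$ arising in the stream lies in $(\R^{\ge0})^d$ (updates are nonnegative), the quantity $\frac{G(\v+e_c)-G(\v)}{\zeta}$ is a valid probability in $[0,1]$; nonnegativity follows from $G$ being coordinatewise non-decreasing on the nonnegative orthant, which is implied by $G(\x)-G(\x-e_i)\ge 0$ (this monotonicity should be stated as part of the hypothesis, matching the scalar case). Then, as in \thmref{thm:mestimator:framework}, running $\O{\frac{\zeta m}{\widehat{F_G}}\log\frac1\delta}$ independent instances in parallel drives the failure (no row output) probability below $\delta$ by a standard independence/Chernoff bound, while the conditional output distribution of each successful instance is unchanged.

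Finally I would bound the space. Each instance must store: the sampled row index $r$ and column $c$ (which is $\O{\log(nd)}=\O{\log n}$ bits, assuming $d=\poly(n)$), and the partial-sum vector $\v$ of subsequent updates to row $r$. Since $\v$ has $d$ coordinates each of magnitude at most $m=\poly(n)$, storing $\v$ naively costs $\O{d\log n}$ bits. Multiplying by the number of instances gives $\O{\frac{\zeta d m}{\widehat{F_G}}\log n\log\frac1\delta}$ bits, as claimed. The main subtlety — and the step I expect to require the most care — is the implicit claim that the telescoping still works despite the reservoir-sampling correlations across the index $j$: one must argue that the event ``$\sampler$ picks the $j$-th update to row $i$'' has probability exactly $\tfrac1m$ unconditionally (a standard reservoir-sampling fact) and that, conditioned on that event, the vector $\v$ maintained by the algorithm is deterministically $\v^{(i,j)}$, so the acceptance probability is a fixed function of $j$; then linearity of expectation over the disjoint events (one per update in the stream) yields the telescoping sum without any independence assumption. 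I would also remark, as in the scalar case, that an offsets-based hash-table implementation keeps the per-update processing time $\O{1}$ in expectation even though $\v$ is a $d$-dimensional object, since each update modifies only one coordinate of one row's partial-sum vector.
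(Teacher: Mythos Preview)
Your proposal is correct and follows essentially the same approach as the paper's own proof: the same reservoir-sample-plus-telescoping argument (with $\v^{(i,j)}+e_{c_j}=\v^{(i,j-1)}$ collapsing the sum to $G(\m_i)/(\zeta m)$), the same repetition bound, and the same $\O{d\log n}$-per-instance space accounting. Your write-up is in fact more careful than the paper's on two points the paper glosses over---the need for $G(\x)-G(\x-e_i)\ge 0$ to make the acceptance probability nonnegative, and the justification that reservoir sampling makes each stream position equally likely so the telescoping is a genuine sum over disjoint events.
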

\begin{proof}
The probability that the update to $(r,c)$ is the $j\th$ particular update to row $r$ inside the stream is $\frac{1}{m}$. 
	Let $\x_j$ be the sum of the updates to row $r$ after the $j\th$ update and let $e_{c_j}$ be the coordinate of row $r$ incremented in the $j\th$ update, so that the probability that $i$ is output is 
	\[\sum_j\frac{1}{m}\frac{G(\x_j+e_{c_j})-G(\x_j)}{\zeta}=\sum_j\frac{1}{m}\frac{G(\x_{j-1})-G(\x_j)}{\zeta}\]\[=\frac{G(\m_i)}{\zeta m},\]
	where the final equality results from the observations that $\x_0=\m_r$ and that $G(\mathbf{0})=0$, since $\v$ must be the all zeros vector after the last update to row $r$.  
	Thus conditioned on some row being output, the algorithm outputs each row $i\in[n]$ with probability $\frac{G(f_i)}{F_G}$. 
	We again note that $G(\x_j+e_{c_j})-G(\x_j)\le\zeta$ for all $\x\in\left(\mathbb{R}^{\ge0}\right)^{d}$, $i\in[d]$, so returning $r$ with probability $\frac{G(\x_j+e_{c_j})-G(\x_j)}{\zeta}$ is well-defined.
	
	Therefore, the probability that some row is returned by \algref{alg:matrix:framework} is $\sum_{i\in[n]}\frac{G(\m_i)}{\zeta m}=\frac{F_G}{\zeta m}$, where $F_G=\sum G(\m_i)$. 
	By repeating the sampler $\O{\frac{\zeta m}{F_G}\log\frac{1}{\delta}}$ times, the algorithm successfully outputs a sample $s$ with probability at least $1-\delta$. 
	We again note that although the algorithm does not know the value of $F_G$, it suffices to repeat the sampler $\O{\frac{\zeta m}{\widehat{F_G}}\log\frac{1}{\delta}}$ times for some lower bound $\widehat{F_G}$ on $F_G$. 
	Each instance only requires $\O{d\log n}$ bits to maintain the vector $\v$, assuming $\log m=\O{\log n}$ and each update is bounded by $\poly(n)$, which can be expressed using $\O{\log n}$ bits.  
	Thus the total space used is $\O{\frac{\zeta dm}{\widehat{F_G}}\log n\log\frac{1}{\delta}}$ bits of space.
	\end{proof}

For example, when $G(\x)=\sum_{i\in[d]}|x_i|$, then $F_G$ is the $L_{1,1}$ norm. 
Then $F_G=m$, so that by \thmref{thm:matrix:framework}, so we can sample a row $\m_i$ with probability proportional to its $L_1$ norm, using $\O{d\log n\log\frac{1}{\delta}}$ bits of space.
We can also apply \thmref{thm:matrix:framework} when $G(\x)=\sqrt{\sum_{i\in[d]}x_i^2}$ is the $L_2$ norm of each row, so that $F_G$ is the $L_{1,2}$ norm crucially used in many adaptive sampling techniques (see \cite{MahabadiRWZ20} and references therein). 


Finally, we show in \appref{app:strict:turnstile} that our framework can be extended to strict turnstile streams, i.e., \thmref{thm:perfect:multi:pass}.

\section{Applications in Sliding Windows}
\label{sec:apps:sliding}
In this section, we give additional applications of our framework to truly perfect samplers on sliding windows. 
Recall that in the sliding window model, updates $u_1,\ldots,u_m$ to an underlying vector $f\in\mathbb{R}^n$ arrive sequentially as a data stream and the underlying vector $f$ is determined by the most recent $W$ updates $u_{m-W+1},\ldots, u_m$, where $W>0$ is the predetermined window size parameter. 
We assume that $m$ and $W$ are polynomially bounded in $n$, i.e., $\O{\log m}=\O{\log W}=\O{\log n}$. 
For each update $u_k$, if $k<m-W+1$, we say $u_k$ is \emph{expired}. 
Otherwise if $k\ge m-W+1$, we say $u_k$ is \emph{active}.

\subsection{$M$-estimators on Sliding Windows}
\seclab{sec:sw:mestimators}
In this section, we consider a general paradigm for sampling from general statistical $M$-estimator distributions. 
Recall that for a given a measure function $G:\mathbb{R}\to\mathbb{R}^{\ge0}$ such that $G(x)=G(-x)$, $G(0)=0$ and $G$ is non-decreasing in $|x|$, we define $F_G=\sum_{i=1}^n G(f_i)$ so that $F_G$ is also implicitly defined by only the most recent $W$ updates. 
Then a truly perfect $M$-estimator sampler outputs index $i\in[n]$ with probability exactly $\frac{G(f_i)}{F_G}$. 

The key argument in \thmref{thm:perfect:lp:large} was that if $c$ instances of the sample $s$ appeared after the initial sample, then $s$ is output with probability proportional to $c^p-(c-1)^p$. 
By a telescoping argument, each index $i$ is sampled with probability proportional to $\sum_{c=1}^{f_i}c^p-(c-1)^p=f_i^p$. 
Since $G$ is non-decreasing in $|x|$, we can generalize to sampling each item with probability proportional to $G(c)-G(c-1)$, rather than $c^p-(c-1)^p$. 
This approach can be simulated in the sliding window model by checking whether $s$ is an active element. 

\begin{algorithm}[!htb]
\caption{Truly perfect $M$-estimator sampler algorithm for the sliding window model on insertion only streams.}
\alglab{alg:sw:perfect:mestimator}
\begin{algorithmic}[1]
\Require{A stream of updates $u_1,u_2,\ldots,u_m$, where each $u_i\in[n]$ represents a single update to a coordinate of the underlying vector $f$, a measure function $G$, and a size $W$ for the sliding window.}
\State{Initialize instances of $\sampler$ every $W$ updates and keep the two most recent instances.}
\Comment{\algref{alg:sampler}}
\For{each update $u_t\in[n]$ with $t\in[m]$}
\State{Update each $\sampler$.}
\EndFor
\State{Let $s$ be the sampled output of a $\sampler$ and let $c$ be the number of times $s$ has appeared afterwards.}
\State{Let $\zeta$ be a parameter such that $G(x)-G(x-1)\le\zeta$ for all $x\ge 1$.}
\If{$s$ was sampled within the last $W$ updates}
\State{\Return $s$ with probability $\frac{G(c+1)-G(c)}{\zeta}$.}
\EndIf
\end{algorithmic}
\end{algorithm}

\begin{restatable}{theorem}{thmswmemstimatorframework}
\thmlab{thm:sw:mestimator:framework}
Let $G$ be a function such that $G(x)-G(x-1)\le\zeta$ for all $x\ge 1$. 
Then there exists a truly perfect $G$ sampler for the sliding window model that succeeds with probability at least $1-\delta$ and uses $\O{\frac{\zeta W}{F_G}\log n\log\frac{1}{\delta}}$ bits of space.
\end{restatable}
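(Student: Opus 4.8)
\textbf{Proof proposal for \thmref{thm:sw:mestimator:framework}.}

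The plan is to analyze \algref{alg:sw:perfect:mestimator}, which maintains a \emph{small} collection of independent $\sampler$ instances, each of which is only $\O{\log n}$ bits. The key structural idea is checkpointing: we start a fresh batch of $\sampler$ instances every $W$ updates, keeping only the two most recent batches. At any time $t$, the active window $(t-W, t]$ fully contains the reservoir region of the second-most-recent batch (the one started at the largest multiple-of-$W$ boundary that is $\le t-W+1$... more carefully, the batch started at the checkpoint just before $t-W+1$). Hence for whichever batch we use, the item $s$ that it reservoir-samples is sampled uniformly from at least $W$ consecutive active updates, and the counter $c$ it maintains counts occurrences of $s$ among active updates after $s$ was sampled. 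I would first establish this invariant: conditioned on using a batch whose reservoir spans an interval $I$ of active updates of length $W' \ge W$ lying inside the current window, the pair $(s, c)$ returned by a single $\sampler$ instance in that batch is distributed so that $s$ is the $j$-th occurrence of item $i$ within $I$ with probability exactly $1/W'$, and $c$ then equals the number of occurrences of $i$ strictly after that position within $I$. Since we only ``\textbf{Return} $s$'' when $s$ was sampled within the last $W$ updates, we restrict to the valid batch; one must argue such a batch always exists (the two-checkpoint scheme guarantees it), which is the bookkeeping heart of the argument.

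Next, I would run the telescoping computation exactly as in \thmref{thm:mestimator:framework}. Restricting attention to the active window, let $f_i$ now denote the frequency of $i$ among the $W'$ active updates in $I$. A single $\sampler$ instance outputs $i$ (after the rejection step with probability $\frac{G(c+1)-G(c)}{\zeta}$) with probability
\[
\sum_{j=1}^{f_i} \frac{1}{W'} \cdot \frac{G(f_i - j + 1) - G(f_i - j)}{\zeta} \;=\; \frac{G(f_i) - G(0)}{\zeta W'} \;=\; \frac{G(f_i)}{\zeta W'},
\]
using $G(0)=0$; validity of the rejection probability follows since $0 \le G(x)-G(x-1) \le \zeta$ for all $x \ge 1$. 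Summing over $i$, a single instance returns \emph{some} index with probability $F_G/(\zeta W')$, and conditioned on returning an index the output distribution is exactly $G(f_i)/F_G$ — i.e.\ truly perfect, with no additive error, because the only randomness is the reservoir sampling and the unbiased rejection coin, neither of which can fail silently. Since $W' \ge W$ would make the per-instance success probability only $\ge F_G/(\zeta \cdot (\text{something}))$; to be safe one uses the bound $W' \le 2W$ (the span of the retained batches never exceeds twice the window), giving success probability $\ge F_G/(2\zeta W) = \Omega(F_G/(\zeta W))$ per instance.

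Finally, I would boost the success probability: run $\O{\frac{\zeta W}{F_G}\log\frac{1}{\delta}}$ independent instances in each of the two retained batches, so that with probability at least $1-\delta$ at least one instance in the valid batch returns an index; output the result of (say) the first such successful instance in scan order, which preserves the per-index conditional distribution $G(f_i)/F_G$ by symmetry of the independent copies. Each instance costs $\O{\log n}$ bits (one counter, one timestamp, one sampled id, under the assumption $\log m = \log W = \O{\log n}$), so the total space is $\O{\frac{\zeta W}{F_G}\log n \log\frac{1}{\delta}}$ bits, matching the claim; the factor of $2$ for keeping two batches is absorbed. The $\O{1}$-expected update time follows exactly as in \thmref{thm:mestimator:framework} via the hash-table-with-offsets trick, noting that expiring an entire stale batch is an amortized-$\O{1}$ operation.

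The main obstacle I anticipate is the checkpointing bookkeeping: one must show that at \emph{every} query time $t$, among the (at most two) retained batches there is always one whose entire reservoir region lies within the active window $(t-W,t]$ \emph{and} whose counter has only been incremented by active updates, so that the telescoping identity applies to the active frequency vector rather than to some mixture of active and expired updates. Getting the off-by-one boundaries right — when a batch is created, when the older of the two is discarded, and that the surviving batch's span is sandwiched between $W$ and $2W$ active updates — is where the real care lies; the probabilistic content is otherwise a verbatim repeat of the insertion-only analysis.
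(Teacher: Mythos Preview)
Your approach is the paper's: checkpoint a fresh $\sampler$ every $W$ updates, retain the two most recent, use the older one, and rerun the telescoping/rejection argument from \thmref{thm:mestimator:framework}. Your final formulas and space bound are correct.

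However, you have the geometry of the checkpointing backwards, and this is not merely cosmetic. You repeatedly assert that the relevant batch's reservoir interval $I$ \emph{lies inside} the active window and consists of $W'\ge W$ active updates. That is impossible (the window contains only $W$ updates), and more importantly, if $I$ were a strict subinterval of the window the telescoping would yield $\Pr[\text{output }i]\propto G(g_i)$ where $g_i$ is the frequency of $i$ restricted to $I$, not to the full window---so the sampler would not be truly perfect for the window frequency vector. The correct picture, which the paper states explicitly (``$\sampler$ contains not only active elements, but an additional number of expired elements''), is the reverse: the older batch's reservoir region $[t_1,t]$ \emph{contains} the window, has length $W'\in(W,2W]$, and therefore includes some expired positions. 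This is precisely why the ``sampled within the last $W$ updates'' check matters---it discards the event that $t_s$ lands on an expired position. Once that check passes, the counter $c$ (which only tallies occurrences after $t_s$) counts purely active updates, each active position is hit with unconditional probability $1/W'$, the telescope gives $G(f_i)/(\zeta W')$ for the true window frequencies $f_i$, and $W'\le 2W$ gives the claimed per-instance success probability. With this orientation corrected, your argument goes through verbatim.
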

\begin{proof}
As in \thmref{thm:perfect:lp:large}, $\sampler$ contains not only active elements, but an additional number of expired elements. 
If $s$ is an active element, then the probability that $s$ is the $j\th$ particular instance of item $i$ inside the window $\frac{1}{W}$. 
Since the number of instances of $i$ appearing after $j$ is $f_i-j+1$ then the probability that $i$ is output is 
\[\sum_{j=1}^{f_i}\frac{1}{W}\frac{G(f_i-j+1)-G(f_i-j)}{\zeta}=\frac{G(f_i)}{\zeta W}.\]
Again we note that $G(f_i-j+1)-G(f_i-j)\le\zeta$ for all $j\in[f_i]$, so returning $s$ with probability $\frac{G(c)-G(c-1)}{\zeta}$ is valid.

Hence if $s$ is an active element, the probability that $s$ is output is at least $\frac{F_G}{\zeta W}$, where $F_G=\sum G(f_i)$. 
Thus by repeating the sampler $\O{\frac{\zeta W}{F_G}\log\frac{1}{\delta}}$ times, the algorithm will output a sample $s$ with probability at least $1-\delta$. 
Moreover, the sample $s$ will equal each index $i\in[n]$ with probability $\frac{G(f_i)}{F_G}$. 

Note that a single instance requires $\O{\log n}$ bits to maintain the counter $c$, assuming $\log W=\O{\log n}$. 
Thus the total space used is $\O{\frac{\zeta W}{F_G}\log n\log\frac{1}{\delta}}$ bits of space.
\end{proof}

Using \thmref{thm:sw:mestimator:framework} and the properties of the $M$-estimators defined in \corref{cor:stream:mestimator}, we have:
\begin{corollary}
\corlab{cor:sw:mestimator}
There exist truly perfect $G$ samplers for the insertion-only sliding window model that succeed with probability at least $1-\delta$ and use $\O{\log n\log\frac{1}{\delta}}$ bits of space when $G$ is the $L_1-L_2$ estimator, the Fair estimator, or the Huber estimator. 
\end{corollary}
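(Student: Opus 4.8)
The plan is to obtain \corref{cor:sw:mestimator} as an immediate consequence of \thmref{thm:sw:mestimator:framework}: for each of the three estimators it suffices to exhibit a constant $\zeta$ with $G(x)-G(x-1)\le\zeta$ for all $x\ge1$, together with a \emph{deterministic} lower bound $\widehat{F_G}=\Omega(W)$ on $F_G=\sum_{i\in[n]}G(f_i)$, since then the space bound $\O{\frac{\zeta W}{\widehat{F_G}}\log n\log\frac1\delta}$ of \thmref{thm:sw:mestimator:framework} collapses to $\O{\log n\log\frac1\delta}$. The values of $\zeta$ are exactly those recorded just before \corref{cor:stream:mestimator}: $\zeta=3$ for the $L_1-L_2$ estimator $G(x)=2\bigl(\sqrt{1+x^2/2}-1\bigr)$, $\zeta=\tau$ for the Fair estimator $G(x)=\tau|x|-\tau^2\log(1+|x|/\tau)$, and $\zeta=1$ for the Huber estimator; as $\tau$ is a fixed constant these are all $\O{1}$.

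The one step requiring thought is the lower bound on $F_G$. Here I would use that in the sliding window model with $m\ge W$ the active window contains exactly $W$ updates, so $\sum_{i\in[n]}f_i=W$ with the $f_i$ nonnegative integers. Each of the three functions is convex on $\R_{\ge0}$ with $G(0)=0$, so $x\mapsto G(x)/x$ is nondecreasing on $(0,\infty)$; hence $G(x)\ge G(1)\,x$ for every integer $x\ge1$, where $G(1)$ is a positive constant (for the Fair and Huber estimators, a constant depending only on $\tau$; positivity of $G(1)$ for Fair uses $\log(1+u)<u$). Summing over the coordinates gives
\[
F_G=\sum_{i\in[n]}G(f_i)\ \ge\ G(1)\sum_{i\in[n]}f_i\ =\ G(1)\,W\ =\ \Omega(W).
\]
I would then feed $\widehat{F_G}=G(1)W$ into \thmref{thm:sw:mestimator:framework}; crucially this bound holds with probability $1$ and $W$ is known to the algorithm, so using it introduces no failure event and hence no additive distributional error, keeping the sampler \emph{truly} perfect rather than merely perfect.

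Combining $\zeta=\O{1}$ with $\widehat{F_G}=\Omega(W)$ yields $\frac{\zeta W}{\widehat{F_G}}=\O{1}$, so \thmref{thm:sw:mestimator:framework} delivers a truly perfect $G$-sampler using $\O{\log n\log\frac1\delta}$ bits for each of the three estimators, as claimed. The only genuine (and minor) work is the per-estimator verification that $G$ is convex with $G(1)=\Theta(1)$ — for $L_1-L_2$ one checks convexity of $x\mapsto\sqrt{1+x^2/2}$, for Fair one computes $G'(x)=\tau x/(\tau+x)$, which is increasing, and for Huber the pieces $x^2/(2\tau)$ and $|x|-\tau/2$ are each convex and glue up smoothly — which I expect to be the main obstacle, everything else being a direct substitution into the framework. (The edge case $m<W$, where the active window is the entire stream, is already covered by \corref{cor:stream:mestimator}.)
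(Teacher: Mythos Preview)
Your proposal is correct and follows essentially the same approach as the paper: apply \thmref{thm:sw:mestimator:framework} with the per-estimator constants $\zeta$ and a lower bound $F_G=\Omega(W)$, exactly as the paper does by pointing back to the estimator properties recorded around \corref{cor:stream:mestimator}. Your convexity argument $G(x)\ge G(1)\,x$ is in fact cleaner than the paper's direct inequalities (e.g., the paper asserts $G(x)>|x|$ for the $L_1$--$L_2$ estimator, which fails at $x=1$), so your write-up patches a minor imprecision while staying on the same route.
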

We also give truly perfect $L_p$ samplers on sliding windows. 
Specifically we prove \thmref{thm:perfect:lp:large}, the details of which can be found in \appref{app:perfect:lp:large}.
\section{$F_0$ Sampling}\label{sec:l0}
In this section, we give truly perfect $F_0$ samplers in the insertion-only streaming model. 
In the $F_0$ sampling problem, the goal is to sample a coordinate $i\in[n]$ from a frequency vector of length $n$ such that $\PPr{i=j}=0$ if $f_j=0$ and $\PPr{i=j}=\frac{1}{F_0}$ otherwise, where $F_0=|\{i\in[n],\:\,f_i\neq 0\}|$. 
We cannot immediately apply the framework of \algref{alg:perfect:framework} for $F_0$ sampling without trivializing the space complexity, due to the fact that $F_0$ can be substantially smaller than $m$. 

We first remark that in the random oracle model, where an algorithm is given oracle access to a random hash function $h:[n]\to[0,1]$, the well-known algorithm that outputs the nonzero coordinate $i\in[n]$ of $f_i$ that minimizes $h(i)$ is truly perfect $F_0$ sampler, since each of the $|F_0|$ has probability $\frac{1}{|F_0|}$ of obtaining the minimal hash value for a random hash function.  
The random oracle model allows the generation of, and repeated access to any number of random bits, and in particular it allows for repeated access to the same $\Omega(n)$ random bits without charging the algorithm for storing the random bits. 
\begin{remark}
In the random oracle model, there exists a truly perfect $F_0$ sampler on insertion-only streams that uses $\O{\log n}$ bits of space and constant update time. 
\end{remark}
Note that storing $\Omega(n)$ random bits without the random oracle assumption is not interesting in the streaming model, as it corresponds to storing the entire input, up to logarithmic factors.
We now give a truly perfect $F_0$ sampler on insertion-only streams without the assumption of the random oracle model. 
We store the first $\sqrt{n}$ distinct items and in parallel sample a set $S$ of $\O{\sqrt{n}}$ random items from the universe. 
We maintain the subset $U$ of $S$ that arrive in the stream. 
Now since we store the first $\sqrt{n}$ distinct items, we know whether $F_0\le\sqrt{n}$ or $F_0>\sqrt{n}$. 
If $F_0\le\sqrt{n}$ then our algorithm has collected all items in the stream and can simply output an item uniformly at random. 
Otherwise if $F_0$ is larger than $\sqrt{n}$ at the end of the stream, then since $S$ has size $\O{\sqrt{n}}$ and was generated at random, we have constant probability that some element of $S$ has arrived in the stream. 
Our algorithm can then output a random element of $U$ that has appeared in the stream. 
We give our algorithm in full in \algref{alg:F0:simple}.

\begin{algorithm}[!htb]
\caption{Truly perfect $F_0$ sampler for insertion only streams.}
\alglab{alg:F0:simple}
\begin{algorithmic}[1]
\Require{A stream of updates $u_1,u_2,\ldots,u_m$, where each $u_i\in[n]$ represents a single update to a coordinate of a underlying vector $f$.}
\State{Let $S$ be a random subset of $[n]$ of size $2\sqrt{n}$.}
\State{Let $T$ be the first unique $\sqrt{n}$ coordinate updates to $f$ in the stream.}
\State{Let $U$ be the subset of $S$ that appears in the stream.}
\If{$|T|<\sqrt{n}$}
\State{\Return a random element of $T$}
\ElsIf{$|U|>0$}
\State{\Return a random element of $U$}
\Else
\State{\Return $\fail$}
\EndIf
\end{algorithmic}
\end{algorithm}
 
\begin{theorem}\label{thm:truly:perfectF0}
\thmlab{thm:truly:perfect:F0}
Given $\delta\in(0,1)$, there exists a truly perfect $F_0$ sampler on insertion-only streams that uses $\O{\sqrt{n}\log n\log\frac{1}{\delta}}$ bits of space and constant update time and succeeds with probability at least $1-\delta$. 
Moreover, the algorithm reports $f_i$ along with the sampled index $i\in[n]$. 
\end{theorem}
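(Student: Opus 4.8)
The plan is to verify that \algref{alg:F0:simple} — run with $S$ a uniformly random subset of $[n]$ of size $\Theta(\sqrt n\log\frac1\delta)$ rather than $2\sqrt n$ — is a truly perfect $F_0$ sampler, and then to bound its failure probability, space, and update time. Write $N=\{j\in[n]:f_j\neq 0\}$, so $F_0=|N|$ and the target distribution is uniform on $N$. First observe that whether the algorithm enters the first branch is a deterministic function of the stream: $|T|<\sqrt n$ at the end iff $F_0<\sqrt n$, and in that case $T=N$ exactly, so returning a uniform element of $T$ returns a uniform element of $N$ with no error whatsoever (and if $F_0=0$ we instead output $\bot$). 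Every index the algorithm can output lies in $N$, so there is never any weight on zero coordinates. Hence it remains only to analyze the case $F_0\ge\sqrt n$, where $|T|=\sqrt n$, the first branch is not taken, and the algorithm outputs a uniform element of $U:=S\cap N$ if $U\neq\emptyset$ and $\fail$ otherwise.

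The crux is a symmetry argument showing that, conditioned on $U\neq\emptyset$, a uniform element of $U$ is uniform on $N$. Since $S$ is a uniformly random subset of a fixed size, for any permutation $\pi$ of $[n]$ that fixes $N$ setwise we have $\pi(S)\overset{d}{=}S$, and therefore $\pi(U)=\pi(S)\cap N\overset{d}{=}S\cap N=U$. Taking $\pi$ to be the transposition of two coordinates $j,j'\in N$, this gives $\PPr{j\in U,\ |U|=\ell}=\PPr{j'\in U,\ |U|=\ell}$ for every $\ell\ge 1$. Since conditioned on $|U|=\ell$ the algorithm outputs any fixed member of $U$ with probability $1/\ell$, we get $\PPr{\text{output}=j}=\sum_{\ell\ge1}\frac1\ell\PPr{j\in U,\ |U|=\ell}$, which is the same value for every $j\in N$; as these probabilities sum over $j\in N$ to $\PPr{U\neq\emptyset}$, each equals $\PPr{U\neq\emptyset}/F_0$, so conditioned on not failing the output is exactly uniform on $N$. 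Combined with the first branch, and since which branch is active is fixed by the stream, this shows the sampler is truly perfect: $\eps=\gamma=0$.

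Next, the algorithm outputs $\fail$ only when $F_0\ge\sqrt n$ and $S\cap N=\emptyset$. Because $S$ is a uniform subset of its size, $\PPr{S\cap N=\emptyset}=\binom{n-F_0}{|S|}\big/\binom{n}{|S|}\le(1-F_0/n)^{|S|}\le(1-1/\sqrt n)^{|S|}$, and choosing $|S|=\Theta(\sqrt n\log\frac1\delta)$ makes this at most $\delta$. (If $\log\frac1\delta$ is so large that this would force $|S|>n$, we simply set $S=[n]$ and store the entire support, which is trivially truly perfect within the stated $\O{\sqrt n\log n\log\frac1\delta}$ space.)

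Finally, for space we store $T$ with a counter for each of its $\le\sqrt n$ elements and $S$ with a counter for each element of $U$; each index and each counter costs $\O{\log n}$ bits since $m=\poly(n)$, for a total of $\O{(\sqrt n+|S|)\log n}=\O{\sqrt n\log n\log\frac1\delta}$ bits — note that explicitly storing the random set $S$ is affordable within this budget, so no random-oracle assumption is needed. For update time we keep $T$, $S$, and $U$ as hash tables supporting $\O{1}$-time membership queries and insertions: on update $u_t$ we do one lookup in $T$ (inserting $u_t$ if it is new and $|T|<\sqrt n$), one lookup in $S$ (inserting $u_t$ into $U$ and initializing or incrementing its counter if $u_t\in S$), and one counter increment if $u_t\in T$; the frequency reported alongside the sampled index is read directly from these tables. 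I expect the only genuinely delicate point to be the symmetry argument above — in particular checking that conditioning on the correlated event $U\neq\emptyset$ does not bias the output away from uniform — while the failure-probability estimate and the resource accounting are routine.
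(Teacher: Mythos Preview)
Your proof is correct and follows essentially the same approach as the paper: both use \algref{alg:F0:simple}, handle the $F_0<\sqrt n$ case by storing the full support, and in the $F_0\ge\sqrt n$ case output a uniform element of $U=S\cap N$; your symmetry argument via permutations fixing $N$ makes rigorous what the paper asserts in one line (``since $S$ is a subset of $[n]$ chosen uniformly at random, a random element of $U$ is a truly perfect $F_0$ sample''). The only cosmetic difference is that the paper boosts to failure probability $\delta$ by running $\O{\log\frac1\delta}$ independent copies with $|S|=2\sqrt n$ each, whereas you enlarge $S$ to $\Theta(\sqrt n\log\frac1\delta)$ in a single instance---both yield the stated space bound and preserve true perfection.
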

\begin{proof}
	Consider \algref{alg:F0:simple}. 
	If $F_0<\sqrt{n}$, then all nonzero coordinates of $f$ will be stored in $T$ in the insertion-only streaming model and so a random element of $T$ is a truly perfect $F_0$ sample. 
	Otherwise if $F_0\ge\sqrt{n}$ and the algorithm does not return $\fail$, then it must have output a random item in $U$, which is a subset of $S$.  
	Since $S$ is a subset of $n$ chosen uniformly at random, then a random element of $S$ is a truly perfect $F_0$ sample. 
	
	It remains to analyze the probability that \algref{alg:F0:simple} returns $\fail$ when $F_0\ge\sqrt{n}$, since it will never fail when $F_0<\sqrt{n}$. 
	Our algorithm will only fail if $|U|=0$, so that none of the $2\sqrt{n}$ random items in $S$ appeared in the stream. 
	This can only occur with probability at most $\left(1-\frac{2\sqrt{n}}{n}\right)^{\sqrt{n}}$, which is certainly at most $\frac{1}{e}$ for sufficiently large $n$. 
	Hence by repeating $\O{\log\frac{1}{\delta}}$ times, the algorithm has probability at least $1-\delta$ of success. 
	Then the space required is $\O{\sqrt{n}\log n\log\frac{1}{\delta}}$. 
	
	Finally, note that the algorithm can track the frequency of each element in $U$ and $T$, so the algorithm can also report the frequency $f_i$ corresponding to the sampled index $i$. 
\end{proof}


By modifying $T$ to be the last unique $\sqrt{n}$ coordinate updates to $f$ in the stream and storing timestamps for each element in $U$, \algref{alg:F0:simple} extends naturally to the sliding window model. 
\begin{corollary}
\corlab{cor:sw:truly:perfect:F0}
Given $\delta\in(0,1)$, there exists a truly perfect $F_0$ sampler in the sliding window model that uses $\O{\sqrt{n}\log n\log\frac{1}{\delta}}$ bits of space and constant update time and succeeds with probability at least $1-\delta$.  
\end{corollary}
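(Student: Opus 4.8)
The plan is to rerun the argument of \thmref{thm:truly:perfect:F0} with ``first $\sqrt n$ distinct coordinates'' replaced by ``last $\sqrt n$ distinct coordinates'', where throughout $F_0$ now denotes the number of distinct coordinates occurring among the $W$ active updates. Concretely, I would maintain $T$ as an LRU-style table of capacity $\sqrt n$ keyed by coordinate: when $u_t$ arrives, if it already sits in $T$ we refresh it (and its stored timestamp) to the front, otherwise we insert it at the front and, if $|T|$ now exceeds $\sqrt n$, evict the entry at the back. With a hash map plus a doubly linked list this costs $\O{1}$ time per update. In parallel, for a fixed uniformly random set $S \subseteq [n]$ of size $2\sqrt n$, for every $s \in S$ that appears we keep the timestamp of its most recent occurrence, and we let $U$ be the subset of $S$ whose stored timestamp lies in the active window $(m-W,m]$. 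This is also $\O{1}$ per update given hashing, so the $\O{1}$ update-time claim survives, and storing $S$ explicitly together with the timestamps costs $\O{\sqrt n \log n}$ bits.

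Correctness rests on a deterministic regime test. First, if $F_0 \le \sqrt n$ then $T$ contains every active distinct coordinate: otherwise some active $j \notin T$, so $T$ is full and each of its $\sqrt n$ entries has a strictly more recent last occurrence than $j$; since the active window is a contiguous suffix of time, any occurrence more recent than an active one is itself active, so all $\sqrt n$ entries of $T$ are active, giving $\sqrt n+1$ active distinct coordinates and contradicting $F_0 \le \sqrt n$. Conversely, if $F_0 > \sqrt n$ then $T$ is full (more than $\sqrt n$ distinct coordinates have appeared) and every entry of $T$ is active, since an expired entry $c$ would have a last occurrence older than that of all $F_0 > \sqrt n$ active coordinates and hence could not be among the $\sqrt n$ coordinates of most recent last occurrence. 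Therefore, by inspecting the timestamps in $T$, the algorithm decides \emph{with probability $1$} whether $F_0 < \sqrt n$, equivalently whether $T$ has fewer than $\sqrt n$ active entries; in that case the active entries of $T$ are exactly the active distinct coordinates, so outputting a uniformly random active entry of $T$ is exactly uniform on that set.

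In the complementary regime the algorithm falls back on $S$. An element $s\in S$ lies in $U$ iff its most recent occurrence is active iff $s$ is an active distinct coordinate, so $U = S \cap A$ where $A$ is the active distinct set and $|A| = F_0 \ge \sqrt n$; hence $\PPr{U = \emptyset} = \PPr{S \cap A = \emptyset} \le (1 - |A|/n)^{2\sqrt n} \le (1 - n^{-1/2})^{2\sqrt n} \le e^{-2} < 1$. Conditioned on $U \neq \emptyset$, the symmetry of the uniformly random set $S$ makes $U = S \cap A$ a uniformly random nonempty subset of $A$, so a uniformly random element of $U$ is exactly uniform on $A$ — a truly perfect $F_0$ sample. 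Thus each independent instance either never fails (when $F_0 < \sqrt n$) or fails with probability at most $e^{-2}$, and whenever it does not fail it returns a truly perfect sample; running $\O{\log\frac1\delta}$ independent instances (sharing the deterministic $T$, each with its own $S$) and returning the output of the first non-failing instance gives overall failure probability at most $\delta$, total space $\O{\sqrt n \log n \log\frac1\delta}$ bits, and $\O{1}$ update time.

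The main obstacle is making the regime test \emph{exact}: one must verify that the LRU set $T$ contains precisely the active distinct coordinates when $F_0 \le \sqrt n$ and is saturated with active coordinates precisely when $F_0 > \sqrt n$, so that the algorithm's choice of branch carries no failure probability and cannot perturb the output distribution. Everything else — the intersection bound for $S$, the symmetry argument that the $U$-branch is exactly uniform, and the space/time accounting — is routine.
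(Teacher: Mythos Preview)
Your proposal is correct and follows exactly the approach the paper sketches in its one-line proof: replace $T$ by an LRU table of the last $\sqrt n$ distinct coordinates and store last-occurrence timestamps for the elements of $S$, then rerun the analysis of \thmref{thm:truly:perfect:F0}. Your careful verification that the LRU set gives a \emph{deterministic} regime test (so no failure probability leaks into the output distribution) and that the $U$-branch is exactly uniform by symmetry is precisely the content the paper leaves implicit. One small quibble: with $\O{\log(1/\delta)}$ independent copies of $S$, each update must touch all of them, giving $\O{\log(1/\delta)}$ rather than $\O{1}$ update time; this is easily repaired by taking a single random $S$ of size $2\sqrt n\log(1/\delta)$ (or by storing all copies in one hash table), and the paper itself is no more careful on this point.
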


Recall that for the Tukey measure function, we have $G(x)=\frac{\tau^2}{6}\left(1-\left(1-\frac{x^2}{\tau^2}\right)^3\right)$ for $|x|\le\tau$ and $G(x)=\frac{\tau^2}{6}$ otherwise, where $\tau>0$ is some constant. 
We can now use our $F_0$ sampler of choice, say $\zsampler$, to obtain a truly perfect sampler for the Tukey measure function by a similar approach to \algref{alg:perfect:framework}. 
Each instance will use a subroutine of $\zsampler$ rather than $\sampler$. 
Now if $c$ is the number of instances of the index output by $\zsampler$ within the window, then we accept $c$ with probability $\frac{G(c)}{G(\tau)}$. 

\begin{theorem}\label{thm:streamtukey}
\thmlab{thm:stream:tukey}
Given $\delta\in(0,1)$, there exists a truly perfect $G$ sampler for the insertion-only streaming model that succeeds with probability at least $1-\delta$ when $G$ is the Tukey estimator. 
The algorithm uses $\O{\log n\log\frac{1}{\delta}}$ bits of space in the random oracle model and $\O{\sqrt{n}\log n\log\frac{1}{\delta}}$ otherwise.
\end{theorem}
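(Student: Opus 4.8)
The plan is to replace the update-reservoir primitive $\sampler$ used in \algref{alg:perfect:framework} with the truly perfect $F_0$ sampler $\zsampler$ from \thmref{thm:truly:perfect:F0}. The reason is that the Tukey function $G$ saturates: $G(x)\le G(\tau)=\tau^2/6$ for all $x$, and $G$ is non-decreasing in $|x|$, so $F_G\le G(\tau)\cdot F_0$ and in particular $F_G$ need not grow with the stream length $m$. Consequently the generic framework, which succeeds per instance with probability $\Theta(F_G/m)$, would need $\sim m/F_0$ repetitions — far too many. Sampling a uniformly random \emph{nonzero coordinate} rather than a uniformly random \emph{update} is the right primitive here, and $\zsampler$ additionally reports the frequency $f_i$ of the sampled coordinate, which is exactly what the rejection step needs.

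Concretely, one instance runs a single copy of $\zsampler$ (tuned so that it declares \fail{} with probability at most a constant, say $1/e$); conditioned on not failing, it outputs a uniformly random nonzero coordinate $i$ together with $f_i$. We then perform a rejection step: output $i$ with probability $\frac{G(f_i)}{G(\tau)}$, and otherwise declare \fail. This is a legal probability since $0\le G(f_i)\le G(\tau)$, and — this is the point that makes the sampler \emph{truly} perfect rather than merely perfect — $G(\tau)$ is an exact known constant (not an estimate) and $G(f_i)$ is computed exactly from the reported frequency, so no randomized-estimation error enters the output law. Conditioned on $\zsampler$ succeeding, the instance outputs $i$ with probability $\frac{1}{F_0}\cdot\frac{G(f_i)}{G(\tau)}$; summing over $i$, it outputs something with probability $\frac{F_G}{F_0\,G(\tau)}$, and hence, conditioned on outputting anything, it outputs each $i$ with probability exactly $\frac{G(f_i)}{F_G}$, the desired Tukey law. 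We run $R=\O{\log\frac1\delta}$ independent instances and return the output of the first one that does not declare \fail; by independence this preserves the exact law conditioned on at least one success.

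It remains to bound the per-instance success probability and the space. Writing the per-instance success probability as $\Pr[\zsampler \text{ succeeds}]\cdot\frac{F_G}{F_0\,G(\tau)}$, we have
\[
\frac{F_G}{F_0\,G(\tau)}=\frac1{F_0}\sum_{i:\,f_i\neq 0}\frac{G(f_i)}{G(\tau)}\;\ge\;\min_{i:\,f_i\neq 0}\frac{G(f_i)}{G(\tau)}\;\ge\;\frac{G(1)}{G(\tau)},
\]
using that $G$ is non-decreasing in $|x|$ and $|f_i|\ge 1$ for every nonzero coordinate in an insertion-only stream. Since $\tau$ is a fixed constant, $G(1)/G(\tau)$ (which equals $1-(1-1/\tau^2)^3$ when $\tau>1$, and $1$ when $\tau\le 1$) is a positive constant, so the per-instance success probability is $\Omega(1)$ and $R=\O{\log\frac1\delta}$ instances bring the overall failure probability below $\delta$. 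Each $\zsampler$ copy costs $\O{\log n}$ bits in the random oracle model and $\O{\sqrt n\log n}$ bits without it (the remark preceding \thmref{thm:truly:perfect:F0} and \thmref{thm:truly:perfect:F0} itself), plus $\O{\log n}$ for storing $f_i$ and performing the rejection coin, giving the claimed $\O{\log n\log\frac1\delta}$ and $\O{\sqrt n\log n\log\frac1\delta}$ bounds.

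The main point requiring care is verifying that the construction is \emph{truly} perfect: this rests on $\zsampler$ being truly perfect (an $L_0$-sketch with $1/\poly(n)$ additive error would break this), on the normalizer $G(\tau)$ being an exact known constant, and on the rejection coin being a Bernoulli with probability exactly $G(f_i)/G(\tau)$ computed from the exactly-reported $f_i$. Granting those, the argument is an elementary rejection-sampling computation — notably, unlike the general framework, no telescoping identity is needed here, because $\zsampler$ already hands us the exact frequency $f_i$.
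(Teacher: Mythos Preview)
Your proposal is correct and follows essentially the same approach as the paper: replace $\sampler$ by the truly perfect $F_0$ sampler $\zsampler$, accept the returned coordinate $i$ with probability $G(f_i)/G(\tau)$, lower-bound the per-instance acceptance probability by $G(1)/G(\tau)=\Omega(1)$, and repeat $\O{\log\frac{1}{\delta}}$ times. Your write-up is in fact more careful than the paper's in separating out the $\zsampler$ failure event and in justifying why the construction is \emph{truly} perfect (exact normalizer $G(\tau)$, exact reported $f_i$), which the paper leaves implicit.
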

\begin{proof}
	Note that only the indices that appear in the stream can be output. 
	The probability that any index $i\in[n]$ that ap	pears in the stream is output is $\frac{1}{F_0}\cdot\frac{G(f_i)}{G(\tau)}$. 
	Then a single instance of the algorithm returns an output with probability $\frac{F_G}{F_0\cdot G(\tau)}\ge\frac{G(1)}{G(\tau)}$. 
	Hence repeating $\O{\log\frac{1}{\delta}}$ times outputs an instance with probability at least $1-\delta$. 
	Since $\zsampler$ requires $\O{\log n}$ space in the random oracle model, then the total space used is $\O{\log n\log\frac{1}{\delta}}$ bits of space. 
	Otherwise by \thmref{thm:truly:perfect:F0}, $\zsampler$ requires $\O{\sqrt{n}\log n}$ space so that the total space used is $\O{\sqrt{n}\log n\log\frac{1}{\delta}}$. 
\end{proof}

We also obtain a truly perfect $G$ sampler for the insertion-only sliding window model when $G$ is the Tukey estimator by a similar approach to \algref{alg:sw:perfect:mestimator}. 
We again use a truly perfect $F_0$ sampling algorithm $\zsampler$ of choice and, if $c$ is the number of instances of the index $s$ output by $\zsampler$ within the window, then we accept $s$ with probability $\frac{G(c)}{G(\tau)}$. 

\begin{theorem}
\thmlab{thm:sw:tukey}
Given $\delta\in(0,1)$, there exists a truly perfect $G$ sampler for the sliding window model that succeeds with probability at least $1-\delta$ when $G$ is the Tukey estimator. 
The algorithm uses $\O{\log n\log\frac{1}{\delta}}$ bits of space in the random oracle model and $\O{\sqrt{n}\log n\log\frac{1}{\delta}}$ otherwise.
\end{theorem}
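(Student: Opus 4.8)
The plan is to mirror the streaming construction behind \thmref{thm:stream:tukey}, replacing the role of \sampler (and of \algref{alg:sw:perfect:mestimator}) with a truly perfect sliding-window $F_0$ sampler, and performing a single rejection step that rescales the uniform distribution over active distinct coordinates by the Tukey weight $G$. Concretely, I would run $\O{\log\frac{1}{\delta}}$ independent copies of the following single-instance procedure: invoke the sliding-window $F_0$ sampler $\zsampler$ from \corref{cor:sw:truly:perfect:F0}, which (conditioned on not outputting $\fail$) returns a coordinate $s$ drawn \emph{exactly} uniformly from the $F_0$ distinct active coordinates, together with its exact active-window frequency $c=f_s$; then accept and output $s$ with probability $\frac{G(c)}{G(\tau)}$, and otherwise reject. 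The structural fact that makes this valid with no normalization estimate is that the Tukey $G$ saturates: $0<G(x)\le G(\tau)=\frac{\tau^2}{6}$ for every integer $x\ge 1$, so $\frac{G(c)}{G(\tau)}\in(0,1]$ is always a legitimate probability. This is exactly why boundedness of $G$ is what is being exploited here, in contrast to the $L_p$ framework where a lower bound on $F_G$ had to be produced deterministically.

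Next I would verify the output distribution. For each active coordinate $i$, a single instance outputs $i$ with probability exactly $\frac{1}{F_0}\cdot\frac{G(f_i)}{G(\tau)}$, since $\zsampler$ is truly perfect (its only deviation from the ideal uniform distribution is the $\fail$ event) and reports $f_i$ with no error. Summing over active coordinates, a single instance outputs something with probability $\frac{F_G}{F_0\,G(\tau)}$, and conditioned on that event, coordinate $i$ is returned with probability exactly $\frac{G(f_i)}{F_G}$ — precisely the truly perfect Tukey distribution with $\eps=\gamma=0$. There is no hidden additive error anywhere: the $F_0$ sampler is exactly uniform when it does not fail, and the rejection step uses the exact count $c$, so the only way the whole procedure produces nothing is by explicitly outputting $\fail$.

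For the success probability I would lower bound the per-instance output probability by a constant. Every active coordinate has $f_i\ge 1$, hence $G(f_i)\ge G(1)>0$, so $F_G\ge F_0\,G(1)$ and the per-instance output probability is at least $\frac{G(1)}{G(\tau)}$, a positive constant depending only on $\tau$. Therefore $\O{\log\frac{1}{\delta}}$ independent repetitions output a sample with probability at least $1-\delta$, and the algorithm returns $\fail$ only when all repetitions fail. The space is that of $\O{\log\frac{1}{\delta}}$ copies of $\zsampler$, namely $\O{\log n\log\frac{1}{\delta}}$ bits in the random oracle model and $\O{\sqrt{n}\log n\log\frac{1}{\delta}}$ bits otherwise, by \corref{cor:sw:truly:perfect:F0} and the random-oracle remark; the update time is $\O{1}$ since $\zsampler$ has constant update time.

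The main obstacle — really the only subtle point — is arguing that the composed procedure is \emph{genuinely} truly perfect rather than merely perfect. This rests on (i) the underlying sliding-window $F_0$ sampler having zero distributional error conditioned on not failing, which is the content of \corref{cor:sw:truly:perfect:F0}: in particular the exact-uniform property of returning a random element of $U\subseteq S$ survives in the sliding-window variant, and the stored timestamps let the sampler recover exact active-window frequencies; and (ii) the rejection probability $\frac{G(c)}{G(\tau)}$ being computed from the exact window count $c$ rather than any estimate. Given these, the (telescoping-free) argument above yields the exact distribution, and the constant lower bound $\frac{G(1)}{G(\tau)}$ on the per-instance acceptance probability controls the $\fail$ budget; I would close by instantiating \thmref{thm:truly:perfect:F0}'s sliding-window analogue to obtain the stated space bounds in both the random-oracle and oracle-free settings.
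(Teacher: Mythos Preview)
Your proposal is correct and follows essentially the same approach as the paper: both use the sliding-window truly perfect $F_0$ sampler from \corref{cor:sw:truly:perfect:F0} to draw a uniform active coordinate $s$ with count $c$, accept with probability $\frac{G(c)}{G(\tau)}$, lower-bound the per-instance success probability by the constant $\frac{G(1)}{G(\tau)}$, and repeat $\O{\log\frac{1}{\delta}}$ times. Your treatment is in fact more careful than the paper's brief description, particularly in explicitly isolating why no additive error is introduced; the only point worth tightening is that \corref{cor:sw:truly:perfect:F0} does not state that it returns the exact active-window frequency, but since the Tukey function saturates at $\tau$ it suffices to maintain the last $\tau$ timestamps of each tracked element, which fits in the stated space for constant $\tau$.
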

Finally, we show in \appref{app:strict:turnstile} that the same guarantees of \thmref{thm:truly:perfect:F0} can be extended to strict turnstile streams. 
\section{Conclusion}
Our work shows that surprisingly, truly perfect samplers exist in the insertion-only model with a sublinear amount of memory for a large class of loss functions, a large class of objects (vectors, matrices), and several different models (data stream and sliding window). Establishing better upper bounds, or proving lower bounds is a very intriguing open question. In particular, the usual communication complexity lower bound arguments do not seem to apply, and already for our turnstile lower bound we needed to use fine-grained properties of the equality function. 

\section*{Acknowledgements}
This work was supported by National Science Foundation (NSF) Grant No. CCF-1815840, National Institute of Health (NIH) grant 5R01 HG 10798-2, and a Simons Investigator Award.
\bibliographystyle{alpha}
\bibliography{references}
\appendix
\section{Proofs missing from Section~\ref{sec:apps:sliding}}
\applab{app:perfect:lp:large}
We first describe our truly perfect $L_p$ samplers for sliding windows. 

\paragraph{Smooth Histograms.}
Although the smooth histogram framework cannot be used for $L_p$ sampling, we require the approximation of a number of parameters for which the smooth histogram framework can be used. 
We thus provide the necessary background. 
Braverman and Ostrovsky introduced the smooth histogram framework to give algorithms for solve a large number of problems in the sliding window model, such as $L_p$ norm estimation, longest increasing subsequence, geometric mean estimation, or other weakly additive functions~\cite{BravermanO07}. 
The smooth histogram framework is defined using the following notion of smooth functions, where the notation $B\subseteq_s A$ denotes that a stream $B$ arrives at the end of stream $A$, so that $B$ is a substream of $A$.  
\begin{definition}[Smooth function]
	\cite{BravermanO07}
	A function $f$ is $(\alpha,\beta)$-smooth if for any stream $A$ if 
	\begin{enumerate}
		\item
		$0\le f(A)\le\poly(W)$
		\item
		$f(B)\le f(A)$ for $B\subseteq_s A$
		\item
		For any $0<\eps<1$, there exists $0<\beta\le\alpha<1$ such that if $B\subseteq_s A$ and $(1-\beta)f(A)\le f(B)$, then $(1-\alpha)f(A\cup C)\le f(B\cup C)$ any adjacent substream $C$. 
	\end{enumerate}
\end{definition}
Intuitively, a smooth function requires that once a suffix of a data stream becomes a $(1\pm\beta)$-approximation for a smooth function, then it remains a $(1\pm\alpha)$-approximation of the data stream, regardless of the subsequent updates that arrive in the stream. 

The smooth histogram data structure maintains a number of timestamps throughout the data stream, along with a streaming algorithm that stores a sketch of all elements seen in the stream beginning at each timestamp. 
The smooth histogram maintains the invariant that at most three streaming algorithms output values that are within $(1-\beta)$ of each other, since any two of the sketches would always output values that are within $(1-\alpha)$ afterwards. 
Thus for polynomially bounded monotonic functions, only a logarithmic number of timestamps need to be stored. 
See \figref{fig:hist} for intuition on how the sliding window is sandwiched between two timestamps of the smooth histogram. 
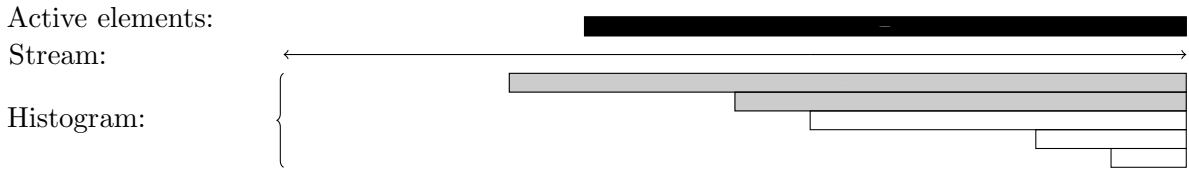
\begin{figure*}[!htb]
	\centering
	\begin{tikzpicture}[scale=1]
	\draw [->] (5,0) -- (10,0);
	\draw [->] (5,0) -- (-2,0);
	\node at (-5,0){Stream:};
	
	\filldraw[shading=radial, inner color = white, outer color = green!50!, opacity=1] (2,0.25) rectangle+(8,0.25);
	\draw (2,0.25) rectangle+(8,0.25);
	\node at (-4.3,0.5){Active elements:};
	
	\filldraw[shading=radial,inner color=white, outer color=gray!90, opacity=0.2] (1,-0.25) rectangle+(9,-0.25);
	\draw (1,-0.25) rectangle +(9,-0.25);
	\filldraw[shading=radial,inner color=white, outer color=gray!90, opacity=0.2] (4,-0.5) rectangle+(6,-0.25);
	\draw (4,-0.5) rectangle +(6,-0.25);
	\draw (5,-0.75) rectangle +(5,-0.25);
	\draw (8,-1) rectangle +(2,-0.25);
	\draw (9,-1.25) rectangle +(1,-0.25);
	
	\draw [decorate,decoration={brace}] (-2,-1.5) -- (-2,-0.25);
	\node at (-4.75,-0.875){Histogram:};
	\end{tikzpicture}
	\caption{Histogram paradigm. Note the first two algorithms sandwich the active elements.}
	\figlab{fig:hist}
\end{figure*}

The smooth histogram has the following properties:
\begin{definition}[Smooth Histogram]
	\cite{BravermanO07}
	Let $g$ be a function that maintains a $(1+\eps)$-approximation of an $(\alpha,\beta)$-smooth function $f$ that takes as input a starting index and ending index in the data stream. 
	The approximate smooth histogram is a structure that consists of an increasing set of indices $X_N=\{x_1,\ldots,x_s=N\}$ and $s$ instances of an algorithm $\Lambda$, namely $\Lambda_1,\ldots,\Lambda_s$ with the following properties:
	\begin{enumerate}
		\item
		$x_1$ corresponds to either the beginning of the data stream, or an expired point.
		\item
		$x_2$ corresponds to an active point. 
		\item
		For all $i<s$, one of the following holds:
		\begin{enumerate}
			\item
			$x_{i+1}=x_i+1$ and $g(x_{i+1},N)<\left(1-\frac{\beta}{2}\right)g(x_i,N)$.
			\item
			$(1-\alpha)g_(x_i,N)\le g(x_{i+1},N)$ and if $i+2\le s$ then $g(x_{i+2},N)<\left(1-\frac{\beta}{2}\right)g(x_i,N)$.
		\end{enumerate}
		\item
		$\Lambda_i=\Lambda(x_i,N)$ maintains $g(x_i,N)$.
	\end{enumerate}
\end{definition}
We define an augmented histogram to be a smooth histogram with additional auxiliary algorithms corresponding to each timestamp. 
\begin{definition}[Augmented Histogram]
	An \emph{augmented histogram} on a data stream $u_1,\ldots,u_t$ is a data structure $\H$ that consists of algorithms $\{\A^{(i)}_1,\ldots,\A^{(i)}_s\}_i$, each with a corresponding time $t_1,\ldots,t_s$. 
	The input to each algorithm $\A^{(i)}_j$ is the sequence of updates $u_{t_j},u_{t_j+1},\ldots,u_t$, so that the input to $\A^{(i)}_j$ is the same as the input to $\A^{(k)}_j$.
\end{definition}

\noindent
The following theorem shows the smoothness of $F_p=\sum_{i=1}^n f_i^p$. 
\begin{theorem}
	\cite{BravermanO07}
	\thmlab{thm:fp:smooth}
	For $p\ge 1$, $F_p$ is $\left(\eps,\frac{\eps^p}{p^p}\right)$-smooth. 
	For $p<1$, $F_p$ is $(\eps,\eps)$-smooth.
\end{theorem}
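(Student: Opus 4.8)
The plan is to verify the three defining properties of an $(\alpha,\beta)$-smooth function directly, taking $\alpha=\eps$ and $\beta=\eps^p/p^p$ when $p\ge 1$, and $\alpha=\beta=\eps$ when $p<1$. Properties (1) and (2) are immediate: any stream $A$ relevant to the sliding window has frequency vector of total weight $\poly(W)$, so $0\le F_p(A)=\sum_i f_i^p\le\poly(W)$; and since the model is insertion-only, a suffix substream $B\subseteq_s A$ has a frequency vector dominated coordinatewise by that of $A$, whence $F_p(B)\le F_p(A)$ by monotonicity of $t\mapsto t^p$. So all the content is in property (3). Fix $B\subseteq_s A$ and an adjacent substream $C$, let $a,b,c\in\mathbb{Z}_{\ge 0}^n$ be the frequency vectors of $A,B,C$, and set $d=a-b\ge 0$ coordinatewise (using insertion-only again); the hypothesis $(1-\beta)F_p(A)\le F_p(B)$ reads $\norm{a}_p^p-\norm{b}_p^p\le\beta\norm{a}_p^p$, and the target is $\norm{b+c}_p^p\ge(1-\alpha)\norm{a+c}_p^p$, since $A\cup C$ and $B\cup C$ induce the frequency vectors $a+c$ and $b+c$.

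For $p\ge 1$ I would first use the elementary superadditivity $(x+y)^p\ge x^p+y^p$ for $x,y\ge 0$ (the map $y\mapsto(x+y)^p-x^p-y^p$ vanishes at $y=0$ and has nonnegative derivative). Summing coordinatewise with $x=b_i$, $y=d_i$ gives $\norm{d}_p^p\le\norm{b+d}_p^p-\norm{b}_p^p=\norm{a}_p^p-\norm{b}_p^p\le\beta\norm{a}_p^p$, hence $\norm{d}_p\le\beta^{1/p}\norm{a}_p\le\beta^{1/p}\norm{a+c}_p$, the last inequality because $c\ge 0$. Now Minkowski's inequality yields $\norm{a+c}_p=\norm{(b+c)+d}_p\le\norm{b+c}_p+\norm{d}_p$, so $\norm{b+c}_p\ge(1-\beta^{1/p})\norm{a+c}_p$. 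Raising to the $p$-th power and choosing $\beta=\eps^p/p^p$ so that $\beta^{1/p}=\eps/p$, Bernoulli's inequality $(1-\eps/p)^p\ge 1-\eps$ (valid since $\eps<1\le p$) gives $F_p(B\cup C)\ge(1-\eps)F_p(A\cup C)$, i.e.\ $\alpha=\eps$; and one checks $\beta=\eps^p/p^p\le\eps=\alpha$ as required.

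For $p<1$ the convexity reverses, so I would instead exploit concavity of $t\mapsto t^p$: for $x\ge 0$, $\delta\ge 0$, the increment $(x+\delta)^p-x^p$ is \emph{nonincreasing} in $x$ (its $x$-derivative is $p[(x+\delta)^{p-1}-x^{p-1}]\le 0$). Applying this with $x=b_i+c_i\ge b_i$ and $\delta=d_i$, then summing over $i$, gives $F_p(A\cup C)-F_p(B\cup C)=\sum_i[(b_i+c_i+d_i)^p-(b_i+c_i)^p]\le\sum_i[(b_i+d_i)^p-b_i^p]=F_p(A)-F_p(B)$. Combining with the hypothesis $F_p(A)-F_p(B)\le\beta F_p(A)$ and monotonicity $F_p(A)\le F_p(A\cup C)$ yields $F_p(A\cup C)-F_p(B\cup C)\le\beta F_p(A\cup C)$, which is exactly property (3) with $\alpha=\beta=\eps$.

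The two scalar facts about $t\mapsto t^p$ are routine; the step that takes a little care is, in the $p\ge 1$ case, funneling the bound on $\norm{d}_p$ through Minkowski and then back through the $p$-th power, where getting the clean exponent $\alpha=\eps$ (rather than something like $\eps^{1/p}$) depends precisely on the choice $\beta=\eps^p/p^p$ together with Bernoulli. In the $p<1$ case the only subtlety is that an ``adjacent substream $C$'' genuinely acts by adding a fixed nonnegative vector to both $a$ and $b$, which is where insertion-only is used; given that, the monotonicity-in-$x$ argument for concave $t^p$ does all the work.
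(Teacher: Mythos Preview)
Your proof is correct. Note, however, that the paper does not supply its own proof of this statement: it is quoted as a known result from \cite{BravermanO07} and used as a black box in the sliding-window analysis. So there is no ``paper's proof'' to compare against here.

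For what it is worth, your argument is essentially the standard one. The $p\ge 1$ case is exactly the route taken in the original reference: bound $\norm{d}_p$ via superadditivity of $t\mapsto t^p$, push through Minkowski, and recover $\alpha=\eps$ from $\beta=\eps^p/p^p$ using Bernoulli. The $p<1$ case via the decreasing-increment property of concave $t\mapsto t^p$ is likewise the expected argument, and your observation that the additive difference $F_p(A\cup C)-F_p(B\cup C)$ can only shrink relative to $F_p(A)-F_p(B)$ is precisely what drives $\alpha=\beta$. One minor remark: in the $p\ge 1$ case you should also note that $(1-\beta^{1/p})\ge 0$ before raising to the $p$-th power, but this is automatic since $\beta=\eps^p/p^p\le 1$.
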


We run several instances of the same algorithm. 
In each instance, we first use reservoir sampling to sample each item with probability $\frac{1}{2W}$. 
If the stream ends and the sampled item is outside the sliding window, then nothing is output by the algorithm and we move onto the next instance of the algorithm. 
When an item $s$ in the sliding window is selected by the reservoir sampling procedure, we keep a counter $c$ for how many times the item appears afterward. 
We also use a $F_2$ estimation algorithm to find a $2$-approximation $F$ of $\sqrt{F_2}$. 
We then output $s$ with probability $\frac{c^2-(c-1)^2}{F}$. 
Otherwise, if we choose not to output $s$, we again move onto the next instance of the algorithm. 

It can be shown through a straightforward telescoping argument that the probability of outputting each $i\in[n]$ is $\frac{f_i^2}{2FW}$. 
Since $W=F_1$ is greater than $F_2$ by at most a factor of $\sqrt{W}$, the probability that a single instance of the algorithm outputs some index is at least $\frac{1}{2\sqrt{W}}$. 
Hence running $\O{\sqrt{W}}$ instances of the algorithm is a perfect $L_2$ sampler.

We recall the following subroutine for norm estimation in the sliding window model.
\begin{theorem}\cite{BravermanO07}
	\thmlab{thm:lp}
		There exists a sliding window algorithm $\estimate$ that outputs an estimate $f$ such that $f\le L_p\le 2f$, with probability $1-\frac{1}{\poly n}$. 
	The algorithm uses $\O{\log^2 n}$ space.
\end{theorem}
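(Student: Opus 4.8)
The plan is to instantiate the smooth histogram framework recalled above, using a standard small-space streaming $L_p$ estimator as the per-bucket subroutine. Since that framework is phrased for smooth functions, the first step is to observe that $L_p = F_p^{1/p}$ inherits smoothness from $F_p$: $F_p$ is $\left(\eps,\frac{\eps^p}{p^p}\right)$-smooth for $p\ge 1$ (and $(\eps,\eps)$-smooth for $p<1$), so if $B\subseteq_s A$ and $(1-\beta')F_p(A)\le F_p(B)$ then $(1-\beta')F_p(A\cup C)\le F_p(B\cup C)$ for every adjacent substream $C$; raising both sides to the power $1/p$ converts this into the smoothness property for $L_p$ with parameters $\alpha = 1-(1-\alpha')^{1/p}$ and $\beta = 1-(1-\beta')^{1/p}$, which remain absolute constants once $\alpha',\beta'$ are. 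Monotonicity of $L_p$ under $\subseteq_s$ and the bound $0\le L_p\le\poly(W)$ are immediate, so it suffices to maintain a constant-factor estimate of a monotone, polynomially bounded, $(\alpha,\beta)$-smooth quantity.

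Next I would run the Braverman--Ostrovsky smooth histogram on $L_p$, taking the per-bucket algorithm $\Lambda$ to be a one-pass insertion-only $L_p$ sketch --- for instance Indyk's $p$-stable sketch (or a geometric-mean estimator) for $p\in(0,2]$ --- which on any suffix returns a $(1\pm\eps)$-approximation to its $L_p$ norm. The histogram keeps an increasing set of timestamps $x_1<\dots<x_s=N$, each carrying an independent copy of $\Lambda$ fed the corresponding suffix, and merges consecutive buckets as soon as their estimates come within a $(1-\tfrac{\beta}{2})$ factor; since the quantity is monotone and polynomially bounded, only $s=\O{\tfrac{1}{\beta}\log(\poly(W))}=\O{\log n}$ timestamps survive. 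At a query, the window of length $W$ is sandwiched between a timestamp $x_1$ that is expired (or the stream start) and a timestamp $x_2$ that is active; the histogram invariant then gives either that $x_1,x_2$ are consecutive, in which case the suffix from $x_2$ is exactly the active window, or that $g(x_2,N)\ge(1-\alpha)g(x_1,N)$. In the latter case monotonicity puts $L_p$ of the true window between $L_p$ of the suffix from $x_2$ and $L_p$ of the suffix from $x_1$, hence within a factor $(1+\eps)^2/(1-\alpha)$ of $g(x_2,N)$; choosing $\eps$ and $\alpha$ to be small enough absolute constants makes this ratio at most $2$, and returning a suitably rescaled $g(x_2,N)$ yields the one-sided bound $\estimate\le L_p\le 2\,\estimate$.

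For the resources, there are $\O{\log n}$ buckets and each stores an $\O{\log n}$-bit sketch, for $\O{\log^2 n}$ bits total, while each update touches $\O{\log n}$ buckets. The step I expect to be the main obstacle is achieving the stated $1-\frac{1}{\poly(n)}$ success probability within this space budget: the sandwich argument needs the buckets bracketing the window to be accurate simultaneously, and the merge decisions taken throughout the stream must not have been corrupted by estimation error, so one has to union bound over all $\O{\log n}$ surviving buckets (and over $\poly(n)$ time steps) while still spending only $\O{\log n}$ bits per bucket. This is where one must exploit that only a constant-factor bound is required --- so each bucket needs only a constant number of sketch coordinates --- together with careful reuse of randomness across buckets; for this amplification I would invoke the analysis of \cite{BravermanO07} rather than redo it.
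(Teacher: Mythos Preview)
The paper does not prove this theorem at all: it is stated with the citation \cite{BravermanO07} and used as a black box, so there is no ``paper's own proof'' to compare against. Your sketch is essentially the approach of the cited work --- instantiate the smooth histogram framework (recalled just above the theorem) with a per-bucket streaming $L_p$ estimator, use the smoothness of $F_p$ from \thmref{thm:fp:smooth}, and read off the sandwich guarantee for the active window --- so in that sense your proposal is correct and aligned with the original source rather than with anything in this paper.

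Your closing paragraph correctly identifies the only real subtlety: pushing the failure probability down to $1-1/\poly(n)$ while keeping the total space at $\O{\log^2 n}$ bits requires some care, since a na\"{i}ve union bound over $\O{\log n}$ buckets each running an independently amplified sketch would cost an extra $\log n$ factor. This is exactly the point at which the present paper defers to \cite{BravermanO07} rather than reproving the bound, and you are right to flag it as the place where one must invoke their analysis.
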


\begin{algorithm}[!htb]
	\caption{Perfect $L_p$ sampler for the sliding window model on insertion only streams and $p>1$.}
	\alglab{alg:perfect:lp:insertion:largep}
	\begin{algorithmic}[1]
		\Require{A stream of updates $u_1,u_2,\ldots,u_t$, where each $u_i\in[n]$ represents a single update to a coordinate of the underlying vector $f$, and a size $W$ for the sliding window.}
		\State{Use an augmented histogram $\H$, where $\A^{(1)}_i$ is an instance of $\estimate$ and $\A^{(2)}_i$ is an instance of $\sampler$.}
		\For{each update $u_r$}
		\State{Let $\H$ consist of algorithms $\A^{(j)}_1,\ldots,\A^{(j)}_s$, where $j\in\{1,2\}$.}
		\State{Initialize $\A^{(1)}_{s+1}$ as an instance of $\estimate$ starting with $u_r$.}
		\State{Initialize $\A^{(2)}_{s+1}$ as an instance of $\sampler$ starting with $u_r$.}
		\Comment{\algref{alg:sampler}}
		\State{Set $t_{s+1}=r$.}
		\For{each $1\le i\le s$}
		\State{Update each $\A^{(1)}_i$, $\A^{(2)}_i$ with $u_r$.}
		\EndFor
		\For{each $2\le i\le s-1$}
		\State{Let $N_i$ be the output of $\A^{(1)}_i$.}
		\If{$N_{i-1}\le 2N_{i+1}$}
		\State{Delete $\A^{(1)}_i$, $\A^{(2)}_i$, and $t_i$ from $\H$.}
		\State{Reindex algorithms in $\H$.}
		\EndIf
		\EndFor
		\If{$t_2\le r-W+1$}
		\State{Delete $\A^{(1)}_1$, $\A^{(2)}_1$, and $t_i$ from $\H$.}
		\State{Reindex algorithms in $\H$.}
		\EndIf
		\EndFor
		\State{Let $s$ be the sampled output of $\A^{(2)}_1$ and let $c$ be the number of times $s$ has appeared afterwards.}
		\If{$s$ has timestamp after $t-W+1$}
		\State{Let $F$ be the output of $\A^{(1)}_1$.}
		\State{\Return $s$ with probability $\frac{c^p-(c-1)^p}{pF^{p-1}}$}
		\Comment{$p>1$}
		\EndIf
	\end{algorithmic}
\end{algorithm}

\thmperfectlplarge*
\begin{proof}
	Note that $\sampler$ contains not only the updates in the sliding window, but an additional number of elements, since the substream actually starts before $t-W+1$. 
	Because $F_1=W$ is the number of elements in the substream, then $\sampler$ selects each item inside the window with probability $\frac{1}{F_1}$. 
	Let $\mathcal{E}$ be the event that $\sampler$ selects an element inside the window, which occurs with probability $\frac{W}{F_1}\ge\frac{1}{2}$, since $F_1<2W$. 
	Conditioned on $\mathcal{E}$, then $\sampler$ selects each item of the window with probability $\frac{1}{W}$. 
	Hence if $i\in[n]$ appears $f_i$ times inside the window, then $\sampler$ outputs $i$ with probability $\frac{f_i}{W}$, conditioned on $\mathcal{E}$. 
	
	The probability that the $j\th$ particular instance of $i$ inside the window is selected is $\frac{1}{W}$, conditioned on $\mathcal{E}$. 
	Moreover for $p>1$, the number of instances of $i$ appearing after $j$, inclusive, is $f_i-j+1$ so the probability that $i$ is output is 
	\[\sum_{j=1}^{f_i}\frac{1}{W}\frac{(f_i-j+1)^p-(f_i-j)^p}{pF^{p-1}}=\frac{f_i^p}{pWF^{p-1}},\]
	where $F_p^{\frac{1}{p}}<F<2F_p^{\frac{1}{p}}$ by \thmref{thm:fp:smooth}, with $F_p=\sum f_i^p$. 
	Note that $(f_i-j+1)^p-(f_i-j)^p\le pF^{p-1}$ for all $j\in[f_i]$, so returning $s$ with probability $\frac{c^p-(c-1)^p}{pF^{p-1}}$ is a valid procedure.
	
	Since $F_p^{\frac{1}{p}}\le F_1=W\le W^{1-\frac{1}{p}}F_p^{\frac{1}{p}}$ for $p>1$, then the probability that some sample is output is at least 
	\[\sum\frac{f_i^p}{pWF^{p-1}}\ge\frac{F_p}{p\left(W^{1-\frac{1}{p}}F_p^{\frac{1}{p}}\right)\left(2F_p^{\frac{1}{p}}\right)^{p-1}}\ge\frac{1}{p2^{p-1} W^{1-\frac{1}{p}}}.\] 
	Thus by repeating the sampler $\O{W^{1-\frac{1}{p}}}$ times, the algorithm will output a sample $s$ with probability at least $\frac{2}{3}$. 
	Moreover, the sample $s$ will equal each index $i\in[n]$ with probability $\frac{f_i^p}{F_p}$. 
	Since each sampler requires $\O{\log W+\log n}$ bits of space, then the space complexity follows, under the assumption that $\O{\log n}=\O{\log W}$. 
\end{proof}

Jayaram and Woodruff~\cite{JayaramW18} give a perfect $L_p$ sampler in the streaming model that uses $\O{\log^2 n\log\frac{1}{\delta}}$ space for $0<p<1$ to obtain $1-\delta$ probability of success. 
Their algorithm takes the underlying universe and duplicates each item $\poly(n)$ times. 
As the stream arrives, an update to each element of the original stream is transformed into updates to each of the $\poly(n)$ duplicated items. 
A linear transformation is then performed by scaling each duplicated item by the inverse of an exponential random variable. 
A sketch is maintained throughout the stream to output the item with the largest frequency at the end of the stream if it is sufficiently large. 
In particular, if $z_0$ is the frequency of the scaled duplicated item with the largest frequency and $z'$ is the frequency vector of the scaled duplicated items \emph{excluding} $z_0$, then the item will be reported if $|z_0|>20\norm{z'}_2$. 
Otherwise, the algorithm does not output anything, but multiple instances of the algorithm are run in parallel to ensure that some instance reports an item with probability at least $1-\delta$.

\section{Fast Perfect $L_p$ Sampler for $0<p<1$ on Sliding Windows}
\seclab{sec:perfect:lp:small}
In this section, we give a construction of a perfect $L_p$ sampler (but not truly perfect) for $p\in(0,1)$ on sliding windows. 
As a specific case, it also provides a perfect $L_p$ sampler in insertion-only streaming model that has faster update time than existing constructions, e.g.~\cite{JayaramW18}.

Recall that an exponential random variable $E$ is parametrized by a rate $\lambda>0$ if it has cumulative distribution function $\PPr{E<x}=1-e^{-\lambda x}$. 
\begin{fact}[Scaling of exponentials]
\factlab{fact:exp}
Let $E$ be an exponential random variable with rate $\lambda>0$ and let $\alpha>0$. 
Then $\alpha E$ is an exponential random variable with rate $\frac{\lambda}{\alpha}$.  
\end{fact}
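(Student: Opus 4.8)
The plan is to verify the claim directly from the definition of an exponential random variable via its cumulative distribution function. Since $\alpha > 0$, the map $t \mapsto \alpha t$ is a strictly increasing bijection of $\mathbb{R}_{\geq 0}$ onto itself, so for any $x \geq 0$ I would rewrite the event $\{\alpha E < x\}$ as $\{E < x/\alpha\}$ and then invoke the given distribution of $E$.

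Concretely, for $x \geq 0$ this gives
\[
\PPr{\alpha E < x} = \PPr{E < x/\alpha} = 1 - e^{-\lambda x / \alpha} = 1 - e^{-(\lambda/\alpha) x},
\]
where the middle equality is the hypothesized CDF of $E$. For $x < 0$ both sides equal $0$, since $\alpha > 0$ and $E \geq 0$ force $\alpha E \geq 0$ almost surely. The right-hand side is exactly the CDF of an exponential random variable with rate $\lambda/\alpha$, and since a distribution on $\mathbb{R}$ is determined by its CDF, this identifies the law of $\alpha E$ and proves the fact.

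There is essentially no obstacle here. The only two points that merit a moment's attention are that strict positivity of $\alpha$ is what keeps the support equal to $[0,\infty)$ (so the sign case $x < 0$ behaves correctly), and that agreement of cumulative distribution functions is enough to conclude equality in distribution; both are standard.
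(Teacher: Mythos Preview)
Your argument is correct and is the standard one-line verification via the CDF. The paper itself does not prove this statement at all; it is recorded as a \textbf{Fact} and left unproved, so there is nothing further to compare.
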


\begin{algorithm}[!htb]
\caption{Perfect $L_p$ sampler for the sliding window model on insertion only streams and $p<1$.}
\alglab{alg:perfect:lp:insertion:smallp}
\begin{algorithmic}[1]
\Require{A stream of updates $u_1,u_2,\ldots,u_t$, where each $u_i\in[n]$ represents a single update to a coordinate of the underlying vector $f$, and a size $W$ for the sliding window.}
\State{Let $\alpha$ be a sufficiently large constant so that $n^\alpha\gg\poly(n,W)$.}
\For{$k=1$ to $k=\alpha\log n$}
\State{$\mathcal{S}_k\gets\emptyset$}
\EndFor
\State{Let $e_{i,j}$ be an exponential random variable for each $i\in[n],j\in[n^c]$.}
\For{each update $u_r$}
\For{$i\in[n^c]$}
\For{$k=1$ to $k=\alpha\log n$}
\State{Generate $\frac{1}{e_{u_r,i}^{1/p}}$ instances of duplicated variable $z_{u_r,i}$.}
\If{$|\mathcal{S}_k|<400\alpha c\log n$}
\State{Add each instance into $\mathcal{S}_k$ with probability $\frac{100c\log n}{2^k}$ with timestamp $r$.}
\EndIf
\State{Delete elements of $\mathcal{S}_k$ with timestamp less than $r-W$.}
\EndFor
\EndFor
\EndFor
\State{Let $\widehat{f}$ be a $2$-approximation to the number of instances of duplicated variables in the active window.}
\Comment{\thmref{thm:lp}}
\State{Let $k$ be the integer with $2^k\le\widehat{f}<2^{k+1}$}
\If{there exists $(i,j)$ such that $z_{i,j}$ forms a majority of $\mathcal{S}_k$}
\State{\Return $i$.}
\Else
\State{Output FAIL.}
\EndIf
\end{algorithmic}
\end{algorithm}

For a vector $z$, we define the anti-ranks to describe the indices of $z$ whose corresponding frequencies are non-increasing in magnitude. 
\begin{definition}
Let the anti-ranks $D(\cdot)$ be defined so that $|z_{D(1)}|\ge |z_{D(2)}|\ge\ldots\ge |z_{D(n^c)}|$. 
We use the notation $z_{-D(1)}$ to denote the vector $z$ whose largest entry has been set to zero, i.e., $z_{D(1)}=0$. 
\end{definition}

\begin{lemma}
\cite{Nagaraja06}
\lemlab{lem:exp:stability}
Let $z_1,\ldots,z_n$ be independent exponential random variables, so that $z_i$ has rate $\lambda_i>0$ for each $i\in[n]$. 
Then for any $i\in[n]$, we have
\[\PPr{D(1)=i}=\frac{\lambda_i}{\sum_{j\in[n]}\lambda_j}.\]
\end{lemma}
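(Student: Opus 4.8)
The plan is to derive this from the classical identity for independent exponential variables, via a short conditioning-and-integration argument. First I would note that, since $z_1,\dots,z_n$ are independent with continuous distributions, the probability that any two of them agree is $0$, so almost surely there is a unique extremal coordinate and $D(1)$ is well defined. It therefore suffices to fix $i\in[n]$ and compute the probability that $z_i$ is the extreme value among $z_1,\dots,z_n$ --- in the orientation for which the stated formula holds, the event $\{z_i<z_j\text{ for all }j\neq i\}$. (This is the orientation needed in the application: in the fast perfect $L_p$ sampler the item with the largest scaled frequency $f_k/e_k^{1/p}$ is the one whose driving variable $e_k/f_k^{p}$ is smallest, and $e_k/f_k^{p}$ is exponential with rate $f_k^{p}$, so $\lambda_i$ there plays the role of $f_i^{p}$.)

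Next comes the computation. Put $\Lambda=\sum_{k\in[n]}\lambda_k$. Conditioning on $z_i=t$ and using independence of the remaining variables,
\[\PPr{z_j>t\text{ for all }j\neq i}=\prod_{j\neq i}e^{-\lambda_j t}=e^{-(\Lambda-\lambda_i)t}.\]
Integrating against the density $\lambda_i e^{-\lambda_i t}$ of $z_i$ over $t\ge 0$ gives
\[\PPr{D(1)=i}=\int_0^{\infty}\lambda_i e^{-\lambda_i t}\,e^{-(\Lambda-\lambda_i)t}\,dt=\lambda_i\int_0^{\infty}e^{-\Lambda t}\,dt=\frac{\lambda_i}{\Lambda},\]
which is the claim. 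Alternatively I might present the slicker version: $\min_{j\neq i}z_j$ is itself exponential with rate $\Lambda-\lambda_i$ and independent of $z_i$, so the statement reduces to the two-variable fact that an $\mathrm{Exp}(\lambda)$ variable falls below an independent $\mathrm{Exp}(\lambda')$ variable with probability $\lambda/(\lambda+\lambda')$ --- the same integral written once rather than $n-1$ times.

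I do not anticipate a genuine obstacle; this is a textbook computation and both routes above are a couple of lines. The only things that need care are bookkeeping rather than analysis: confirming that ties have probability zero so $D(1)$ is well defined, and lining up the orientation and the rate parameters with the way the lemma is invoked --- in particular that scaling an $\mathrm{Exp}(1)$ variable by $1/f_k^{p}$ produces an $\mathrm{Exp}(f_k^{p})$ variable, so that applying the identity returns each coordinate $i$ with probability $f_i^{p}/\sum_{j\in[n]}f_j^{p}$, i.e., the target $L_p$-sampling distribution.
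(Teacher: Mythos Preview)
Your argument is correct and is exactly the standard textbook derivation of this classical fact. Note that the paper does not give its own proof of this lemma at all --- it is simply cited from \cite{Nagaraja06} as a known result --- so there is no alternative proof to compare against; your conditioning-and-integration computation (or the equivalent two-variable reduction via the closure of exponentials under minimum) is precisely what one would expect and is complete.

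One minor remark: you were right to flag the orientation issue. The paper's definition of the anti-ranks $D(\cdot)$ has $|z_{D(1)}|$ as the \emph{largest} coordinate of the scaled vector $z$, whereas the formula $\lambda_i/\sum_j\lambda_j$ is the probability that an exponential variable is the \emph{smallest} among independent exponentials. As you observed, these are consistent in the application because the paper applies the lemma not to the $z_{i,j}$ themselves but to $|z_{i,j}|^{-p}$, which are exponential with rate $|f_i|^p$, and the largest $|z_{i,j}|$ corresponds to the smallest $|z_{i,j}|^{-p}$. Your proposal already tracks this correctly.
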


We abuse notation and represent $z$ in \algref{alg:perfect:lp:insertion:smallp} as both a frequency vector with $n^{c+1}$ coordinates as well as a doubly indexed set $z_{i,j}$ with $i\in[n]$ and $j\in[n^c]$, to denote that each coordinate $i\in[n]$ is duplicated $n^c$ times. 
To analyze \algref{alg:perfect:lp:insertion:smallp}, we first recall the following lemma from \cite{JayaramW18}, which decomposes the value of the maximum coordinate of the duplicated vector into a large component that is independent of the index that achieves the max and a negligible component that depends on the index. 
\begin{lemma}\cite{JayaramW18}
\lemlab{lem:u:v}
For each $1\le k<n^c-n^{9c/10}$, $p\in(0,2]$ and $\nu\ge n^{-c/60}$, we have with probability $1-\O{e^{-n^{c/3}}}$ that $|z_{D(k)}|=U_{D(k)}(1+V_{D(k)})$, for $|V_{D_k}|=\O{\nu}$ and 
\[U_{D_k}=\left[\left(1\pm\O{n^{-c/10}}\right)\sum_{\tau=1}^k\frac{E_{\tau}}{\Ex{\sum_{j=1}^{n^c} F_{D(j)}^p}}\right]^{-1/p},\]
where the $E_{\tau}$ are identically independently distributed exponential random variables that are independent of the value of $D(k)$ and $F$ is the underlying frequency vector on the universe of $n^c$ duplicated items. 
\end{lemma}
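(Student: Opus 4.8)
The plan is to transfer the problem to the order statistics of independent (non-identical) exponential random variables, apply a R\'enyi-type representation, and then use the duplication built into the construction to argue that the ``rate'' governing every peeling step is, up to a negligible multiplicative factor, one fixed normalization. First I would rewrite $|z_j| = F_j/e_j^{1/p}$, so that $Y_j := |z_j|^{-p} = e_j/F_j^p$; by \factref{fact:exp} each $Y_j$ is an exponential random variable with rate $\lambda_j := F_j^p$, and the $Y_j$ are mutually independent. The anti-ranks $D(1),D(2),\dots$ of $|z|$ (largest magnitude first) are precisely the indices of $Y$ sorted increasingly, so that $|z_{D(k)}| = Y_{(k)}^{-1/p}$, where $Y_{(1)}\le Y_{(2)}\le\cdots$ are the order statistics of $Y$.

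The second step is the R\'enyi representation. Iterating the memorylessness of the exponential together with the competing-exponentials identity underlying \lemref{lem:exp:stability}: conditioned on $D(1),\dots,D(\tau-1)$ and on $Y_{(1)},\dots,Y_{(\tau-1)}$, the residuals of the not-yet-peeled coordinates are again independent exponentials with the original rates $\lambda_j$. Hence the spacing $\Delta_\tau := Y_{(\tau)}-Y_{(\tau-1)}$ is exponential with rate $R_\tau := \Lambda - \sum_{s=1}^{\tau-1}\lambda_{D(s)}$, where $\Lambda := \sum_j F_j^p$, and --- crucially --- $\Delta_\tau$ is conditionally independent of the identity $D(\tau)$. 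Setting $E_\tau := R_\tau\Delta_\tau$, the $E_\tau$ are i.i.d.\ standard exponentials that are independent of the whole anti-rank sequence, and
\[
Y_{(k)} \;=\; \sum_{\tau=1}^k \Delta_\tau \;=\; \sum_{\tau=1}^k \frac{E_\tau}{R_\tau}.
\]

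Third, I would control the rates $R_\tau$. This is exactly where the duplication enters: each original coordinate is replaced by many identical duplicates, so $\max_j\lambda_j$ is a $1/\poly(n)$ fraction of $\Lambda$, and since at most $k < n^c-n^{9c/10}$ coordinates are ever peeled, $\sum_{s<\tau}\lambda_{D(s)} \le k\max_j\lambda_j$ is --- deterministically --- only an $\O{n^{-c/10}}$ fraction of the normalization $\mu := \Ex{\sum_{j=1}^{n^c}F_{D(j)}^p}$ (which is $\Lambda$ up to lower-order terms), so that $R_\tau = (1\pm\O{n^{-c/10}})\mu$ for every $\tau\le k$. Substituting into the display yields $Y_{(k)} = (1\pm\O{n^{-c/10}})\sum_{\tau\le k} E_\tau/\mu$, whence $|z_{D(k)}| = Y_{(k)}^{-1/p} = U_{D(k)}(1+V_{D(k)})$ with $U_{D(k)} = \bigl[(1\pm\O{n^{-c/10}})\sum_{\tau\le k}E_\tau/\mu\bigr]^{-1/p}$ exactly as stated; the leftover discrepancy, obtained by sharpening the crude $\O{n^{-c/10}}$ bound once one conditions on all $\poly(n)$ of the exponentials lying in a typical range, is absorbed into $|V_{D(k)}| = \O{\nu}$ on an event of probability $1-\O{e^{-n^{c/3}}}$ --- the failure probability being a union bound over the $\poly(n)$ exponentials for the event that some $e_j$ exceeds $n^{c/3}$.

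I expect the third step to be the main obstacle. Without duplication the statement is simply false --- a single dominant coordinate already forces $R_2$ to be a vanishing fraction of $\Lambda$ --- so the argument must genuinely exploit that each coordinate's mass is spread over many identical copies, and must do so uniformly over all $\tau\le k$ simultaneously. Pinning down precisely how much of $\Lambda$ can disappear after $k < n^c - n^{9c/10}$ peels, and folding the residual into the $(1+V_{D(k)})$ factor with the claimed $\O{e^{-n^{c/3}}}$ failure probability, is the quantitative heart of the argument (carried out in \cite{JayaramW18}); Steps 1 and 2, by contrast, are essentially forced once \factref{fact:exp} and \lemref{lem:exp:stability} are in hand.
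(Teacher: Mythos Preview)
This lemma is not proved in the paper; it is quoted from \cite{JayaramW18} and used only as a black box (in fact only the case $k=1$ is ever invoked, in the proof of \lemref{lem:fail}). There is therefore no in-paper proof to compare against.

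Your Steps 1--2 are the right skeleton and match the argument in \cite{JayaramW18}: pass to $Y_j=|z_j|^{-p}$, recognize these as independent exponentials with rates $\lambda_j=F_j^p$ via \factref{fact:exp}, and invoke the R\'enyi representation $Y_{(k)}=\sum_{\tau\le k}E_\tau/R_\tau$ with $R_\tau=\Lambda-\sum_{s<\tau}\lambda_{D(s)}$ and the $E_\tau$ i.i.d.\ unit exponentials independent of the anti-rank sequence.

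Step 3, however, contains a concrete error. You write that $\sum_{s<\tau}\lambda_{D(s)}\le k\max_j\lambda_j$ is ``deterministically only an $\O{n^{-c/10}}$ fraction'' of $\mu$. Duplication does give $\max_j\lambda_j\le\Lambda/N$ where $N$ is the number of duplicated coordinates, but the lemma allows $k$ as large as $N-N^{9/10}$, so $k\max_j\lambda_j$ can be essentially all of $\Lambda$; your arithmetic would only yield $\O{n^{-c/10}}$ if $k\le N^{9/10}$, which is the opposite of the stated range. Concretely, if one original coordinate carries a constant fraction of $F_p$, then once its duplicates have been peeled the residual rate $R_\tau$ drops by a constant factor, and no deterministic $(1\pm o(1))\Lambda$ control on $R_\tau$ is possible across the full range of $k$. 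The actual control in \cite{JayaramW18} is probabilistic and more delicate than your sketch indicates; you are right to flag this as ``the quantitative heart,'' but the sentence asserting a deterministic $\O{n^{-c/10}}$ bound is false and should be removed. For the application in this paper ($k=1$), your argument is already complete, since then $R_1=\Lambda$ exactly.
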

The following lemma from \cite{JayaramW18} states that with constant probability, the max coordinate of $z$ in \algref{alg:perfect:lp:insertion:smallp} will dominate the $p$-norm of $z$. 
\begin{lemma}\cite{JayaramW18}
\lemlab{lem:z:superheavy}
For $p<2$, $z_{D(1)}>20\norm{z_{-D(1)}}_p$ with constant probability. 
\end{lemma}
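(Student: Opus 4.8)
The plan is to reduce the claim, via \lemref{lem:u:v}, to a concrete estimate about partial sums of i.i.d.\ exponentials, and then to control the residual $\ell_p$ mass $\norm{z_{-D(1)}}_p^p = \sum_{k \ge 2} z_{D(k)}^p$ by combining a first-moment bound on that sum with the anti-concentration of the leading exponential.

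First I would fix notation: let $N$ denote the size of the duplicated universe, set $k_0 := n^c - n^{9c/10}$, and let $\mathcal{F} := \Ex{\sum_{j=1}^{n^c} F_{D(j)}^p}$ be the (non-random) normalization appearing in \lemref{lem:u:v}. Split the coordinates of $z$ into the \emph{bulk} $1 \le k < k_0$, where \lemref{lem:u:v} applies, and the \emph{tail} $k \ge k_0$. For a bulk index, \lemref{lem:u:v} (taking, e.g., $\nu = n^{-c/60}$) gives $z_{D(k)}^p = U_{D(k)}^p(1+V_{D(k)})^p$ with $|V_{D(k)}| = \O{\nu}$ and $U_{D(k)}^p = (1 \pm \O{n^{-c/10}})\,\mathcal{F}/\Gamma_k$, where $\Gamma_k := \sum_{\tau=1}^k E_\tau$ and the $E_\tau$ are i.i.d.\ rate-$1$ exponentials independent of the anti-rank identities $D(\cdot)$; thus up to a $(1+o(1))$ factor, $z_{D(k)}^p = \mathcal{F}/\Gamma_k$ for every bulk $k$. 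For the tail, monotonicity of the order statistics gives that with probability $1 - e^{-\Omega(n^{c/3})}$ every tail coordinate is at most $z_{D(k_0)}^p = \O{\mathcal{F}/\Gamma_{k_0}} = \O{\mathcal{F} n^{-c}}$ (using that $\Gamma_{k_0}$ concentrates around $k_0 \asymp n^c$), so the entire tail contributes $\O{\mathcal{F} n^{-c/10}}$ to $\norm{z_{-D(1)}}_p^p$, which is negligible against any $\Omega(\mathcal{F})$ lower bound for $z_{D(1)}^p$ and can be absorbed into the $(1+o(1))$ slack.

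The core probabilistic step is then as follows. Since $\Gamma_1 = E_1$ we have $z_{D(1)}^p = (1+o(1))\mathcal{F}/E_1$ and $\norm{z_{-D(1)}}_p^p = (1+o(1))\,\mathcal{F}\cdot R + \O{\mathcal{F} n^{-c/10}}$, where $R := \sum_{k=2}^N 1/\Gamma_k$; up to the negligible terms the event we need is $1/E_1 > 20^p R$. I would condition on $E_1 \le t$ for a suitable constant $t$ — a constant-probability event — and, \emph{independently}, bound $R$ from above with constant probability: $\Ex{R} = \sum_{k=2}^N 1/(k-1) = H_{N-1}$ is computed exactly, and since the bulk of $R$ comes from moderate $k$, where $1/\Gamma_k$ is already sharply concentrated ($\Gamma_k$ being a sum of $k$ independent exponentials), a variance/Chernoff argument shows $R$ stays within a constant factor of its typical value with probability at least a constant. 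Choosing $t$ small enough relative to that typical value forces $1/E_1 > 20^p R$ on the intersection of the two events, which has constant probability because $E_1$ is independent of $(E_\tau)_{\tau\ge 2}$ and hence of $R$; a final union bound discards the $e^{-\Omega(n^{c/3})}$-probability failure events of \lemref{lem:u:v}, and the accumulated $(1\pm\O{n^{-c/10}})$ factors and $\O{\nu}$ errors over the $N = \poly(n)$ bulk coordinates aggregate to only a $1+o(1)$ distortion of $\norm{z_{-D(1)}}_p^p$.

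The main obstacle is exactly this residual bound — showing $z_{D(1)}^p$ dominates $20^p$ times the \emph{entire} tail sum $\sum_{k\ge 2} z_{D(k)}^p$, not merely the single next coordinate $z_{D(2)}^p = \mathcal{F}/\Gamma_2$, while keeping the success probability a genuine constant. This is where duplicating into $n^c$ copies is used: it spreads $f$ out enough that the order statistics $\Gamma_k$ grow quickly enough for $R$ to remain at a controlled scale, so that the anti-concentration of $E_1$ (landing in the bottom $\Theta(1)$-quantile with constant probability) is strong enough to beat $20^p R$. The two delicate points to get right are (i) the quantitative interplay between the threshold $t$ for $E_1$ and the constant-probability upper bound for $R$, needed to pin the constant down to $20$, and (ii) checking that the $\O{\nu}$ errors $V_{D(k)}$ and the $\O{n^{-c/10}}$ multiplicative slacks of \lemref{lem:u:v}, after union-bounding over all $N$ bulk coordinates, still distort $\norm{z_{-D(1)}}_p^p$ by only a $1+o(1)$ factor — which holds since each per-coordinate error is $n^{-\Omega(c)}$ whereas $N$ is only polynomial in $n$.
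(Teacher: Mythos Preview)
The paper does not prove this lemma; it is imported wholesale from \cite{JayaramW18}, so there is no in-paper argument to compare against. Your reduction via \lemref{lem:u:v} to the partial sums $\Gamma_k=\sum_{\tau\le k}E_\tau$ is the natural translation, but the core probabilistic step does not close.

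You compute $\Ex{R}=\sum_{k=2}^{N}\frac{1}{k-1}=H_{N-1}$ yourself; this is $\Theta(\log N)=\Theta(c\log n)$, not a constant. Any concentration statement for $R$ therefore pins $R$ at scale $\Theta(\log N)$, not $\O{1}$, and the event $1/E_1>20^{p}R$ then forces $E_1=\O{1/\log N}$, which occurs with probability only $\Theta(1/\log N)$. So the argument as written yields a success probability that \emph{decays} with the duplication parameter rather than a constant. Your remark that ``duplicating into $n^c$ copies\dots spreads $f$ out enough that\dots $R$ [remains] at a controlled scale'' has the sign backwards: larger $N$ makes $H_{N-1}$ larger. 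There is also a smaller slip: $R=\sum_{k\ge 2}1/\Gamma_k$ depends on $E_1$ through every $\Gamma_k$, so $E_1$ and $R$ are not independent; conditioning on small $E_1$ makes the $k=2$ term essentially $1/E_2$, which has infinite mean, so that conditioning cannot rescue the bound.

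Your computation in fact shows why the $\ell_p$ phrasing is problematic: the residual $p$-th-power mass $\sum_{k\ge 2}|z_{D(k)}|^p\approx\mathcal F\sum_{k\ge 2}1/\Gamma_k$ is a borderline (tail-index~$1$) heavy-tailed sum with typical size $\Theta(\mathcal F\log N)$, while $|z_{D(1)}|^p\approx\mathcal F/E_1$ has median only $\Theta(\mathcal F)$. What \emph{does} go through with exactly your outline is the $\ell_2$ version $|z_{D(1)}|>20\norm{z_{-D(1)}}_2$: the analogous residual is $\sum_{k\ge 2}\Gamma_k^{-2/p}$, whose expectation $\sum_{k\ge 2}\Theta(k^{-2/p})$ converges precisely when $p<2$, so Markov bounds it by a constant with constant probability and a constant threshold on $E_1$ finishes. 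That $\ell_2$ statement is the form the \cite{JayaramW18} sampler actually uses for its $\countsketch$-based recovery; the $\ell_p$ phrasing here appears to be a misstatement rather than something your (or any) direct order-statistics argument can establish.
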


We now show that the probability that \algref{alg:perfect:lp:insertion:smallp} fails only negligibly depends on the index that achieves the max. 
This negligible difference can be absorbed into an $\frac{1}{\poly(n)}$ additive component for the sampling probabilities. 
\begin{lemma}
\lemlab{lem:fail}
Let $\mathcal{E}$ denote the event that \algref{alg:perfect:lp:insertion:smallp} fails. 
Then $\PPr{\mathcal{E}\,|\,z_{D(1)}}=\PPr{\mathcal{E}}+\frac{1}{\poly(n)}$.
\end{lemma}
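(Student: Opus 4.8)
The plan is to establish the equivalent quantitative statement $\abs{\PPr{\mathcal{E}\mid D(1)=i}-\PPr{\mathcal{E}}}=1/\poly(n)$ for every $i\in[n]$, where conditioning on ``$z_{D(1)}$'' is read as conditioning on the identity of the maximal coordinate of the scaled duplicated vector $z$ --- i.e.\ on which original item $i\in[n]$ it copies --- since this is the form in which the lemma is used in the analysis of \algref{alg:perfect:lp:insertion:smallp}. The first step I would take is to observe that, outside a $1/\poly(n)$-probable event (that some size cap ``$|\mathcal{S}_k|<400\alpha c\log n$'' is ever reached --- a Chernoff bound --- or that the estimator $\estimate$ of \thmref{thm:lp} errs), the failure event $\mathcal{E}$ is a deterministic function of only the sorted magnitudes $|z_{D(1)}|\ge|z_{D(2)}|\ge\cdots$ of the active duplicated coordinates together with fresh, globally independent coins $R$ driving the Bernoulli subsampling and $\estimate$. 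Crucially, $\mathcal{E}$ is \emph{not} a function of which duplicate realizes each magnitude: both the dyadic level selection and the majority test inspect magnitudes only, and (the cap being w.h.p.\ inactive) each $\mathcal{S}_k$ is a plain independent subsample.

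Next I would invoke \lemref{lem:u:v} with $\nu=n^{-c/60}$, folding its $1\pm\O{n^{-c/10}}$ factor into $\nu$: on a $1-\O{e^{-n^{c/3}}}$ event one has $|z_{D(k)}|=U_{D(k)}(1+V_{D(k)})$ with $|V_{D(k)}|=\O\nu$, where $(U_{D(k)})_k$ is a deterministic function of i.i.d.\ exponentials $(E_\tau)_\tau$ that are independent of the anti-rank identities, in particular of $D(1)$. Define $\mathcal{E}^\star$ by running the algorithm of the first step on the ``clean'' magnitudes $(U_{D(k)})_k$ and the same coins $R$; then $(U_{D(k)})_k\perp D(1)$ and $R\perp D(1)$ give $\PPr{\mathcal{E}^\star\mid D(1)=i}=\PPr{\mathcal{E}^\star}$ \emph{exactly}, for all $i$. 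It therefore suffices to show $\PPr{\mathcal{E}\,\triangle\,\mathcal{E}^\star\mid D(1)=i}=1/\poly(n)$ uniformly in $i$: combining this with the previous sentence yields $\PPr{\mathcal{E}\mid D(1)=i}=\PPr{\mathcal{E}^\star}\pm 1/\poly(n)$ for all $i$, averaging over $i$ gives $\PPr{\mathcal{E}}=\PPr{\mathcal{E}^\star}\pm 1/\poly(n)$, and subtracting proves the claim.

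To bound the symmetric difference I would argue that, on the good event, the true magnitudes equal the clean ones up to a common multiplicative $1\pm\O\nu$, so $\mathcal{E}$ and $\mathcal{E}^\star$ can differ only if this rescaling flips one of the two discrete decisions of the algorithm. For the dyadic level $k$ (the integer with $2^k\le\widehat f<2^{k+1}$): rescaling all magnitudes by $1\pm\O\nu$ rescales $\widehat f$ by $1\pm\O\nu$, so $k$ changes only if $\widehat f$ lies within a $1+\O\nu$ factor of a power of two; in the regime where $k$ matters (where the max super-dominates), $\widehat f$ is --- up to constants and $1\pm\O\nu$ --- a monotone function of $z_{D(1)}^{-p}=\min_{u,j}e_{u,j}/f_u^p$, which by \factref{fact:exp} is exponentially distributed with polynomially bounded rate $\sum_{u,j}f_u^p$, and a short density estimate then bounds this event by $\O{\nu\polylog n}=1/\poly(n)$, uniformly in $D(1)$ since it is a statement about the clean quantity. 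For the majority test: when $z_{D(1)}$ super-dominates, i.e.\ $z_{D(1)}>20\norm{z_{-D(1)}}_p\ge 20\norm{z_{-D(1)}}_1$, which happens with constant probability by \lemref{lem:z:superheavy} and which --- by \lemref{lem:u:v} --- agrees with a $(E_\tau)$-measurable event up to an $\O\nu$-probable boundary, a Chernoff bound shows the copies of $z_{D(1)}$ form a strict majority of $\mathcal{S}_k$ by a margin $\Omega(c\log n)$ with probability $1-1/\poly(n)$, robustly to any $1\pm\O\nu$ perturbation of the inclusion counts; and when the max does not super-dominate, the majority test fails on both the true and the clean magnitudes. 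Summing the $1/\poly(n)$ contributions from these two cases and from the events of the first step gives what is needed, and taking $c$ a large enough constant makes the additive error smaller than any prescribed inverse polynomial.

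I expect the anti-concentration estimate for the dyadic level to be the main obstacle. One must rule out that the quantity measured by $\estimate$ --- essentially a sum of the heavy-tailed instance counts $\lceil 1/e_{u,j}^{1/p}\rceil$, which for $p<1$ have a Pareto-type tail --- concentrates near the boundaries $2^k$, and the clean route is to identify that quantity, up to lower-order terms, with the absolutely continuous, polynomially-rated exponential $z_{D(1)}^{-p}$. This is the one place where fine-grained distributional information (rather than moments or tail bounds) enters; the remaining ingredients are Chernoff bounds, union bounds, and the bookkeeping already packaged in \lemref{lem:u:v} and \lemref{lem:z:superheavy}.
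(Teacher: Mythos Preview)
Your proposal follows the same core strategy as the paper: invoke the decomposition $|z_{D(k)}|=U_{D(k)}(1+V_{D(k)})$ of \lemref{lem:u:v}, use that the $U$-sequence is independent of the identity $D(1)$, and argue that the $O(\nu)$ perturbation $V$ contributes only a $1/\poly(n)$ additive change to the failure probability; both proofs then invoke a Chernoff argument on the sampled copies in $\mathcal{S}_k$.

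The paper's proof is far terser than yours. It simply asserts in one sentence that the $O(\nu)$ multiplicative perturbation of $|z_{D(1)}|$ becomes an $O(\nu)$ additive shift in $\PPr{\mathcal{E}}$ and sets $\nu=1/\poly(n)$; it does not define a clean surrogate event $\mathcal{E}^\star$, does not articulate why failure depends only on sorted magnitudes, and does not isolate the two discrete boundary effects (dyadic-level selection and majority threshold) that you handle. The remainder of the paper's proof is a Chernoff calculation showing that, conditioned on the super-heavy event of \lemref{lem:z:superheavy}, the majority test recovers $D(1)$ with probability $1-n^{-4c}$; this parallels your majority-margin argument but is used to bound the failure probability itself rather than the symmetric difference $\mathcal{E}\triangle\mathcal{E}^\star$.

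Your identification of the dyadic-level anti-concentration as the main obstacle is apt: this is exactly the step the paper's sketch elides, and your route via the exponential law of $z_{D(1)}^{-p}=\min_{u,j}e_{u,j}/f_u^p$ is a natural way to close it. In that sense your argument is structurally the same as the paper's but strictly more complete.
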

\begin{proof}
Consider \algref{alg:perfect:lp:insertion:smallp}. 
We first analyze the probability that the algorithm fails, conditioned on a given value of $D(1)$. 
By \lemref{lem:u:v}, we can rewrite $|z_{D(1)}|=U_{D(1)}(1+V_{D(1)})$, where $U_{D_1}$ is independent of the value of $D(1)$ and $|V_{D_1}|=\O{\nu}$. 
Thus the probability that the algorithm fails conditioned on the value of $D(1)$ is at most an additive $\O{\nu}$ amount from the probability that the algorithm fails. 
Hence, it suffices to set $\nu=\frac{1}{\poly(n)}$. 

To evaluate the probability that the algorithm fails, note that \lemref{lem:z:superheavy} implies $|z_{D(1)}|>20\norm{z_{-D(1)}}_p>20\norm{z_{-D(1)}}_1$ with constant probability for $p<1$. 
Let $2^k\le\norm{z}_1\le 2^{k+1}$ and let $\beta=\frac{\norm{z}_1}{2^k}\ge1$. 
If each element is sampled with probability $\frac{100c\log n}{2^k}$, then $90c\beta\log n$ instances of $D(1)$ will be sampled in expectation. 
By Chernoff bounds, at least $60c\beta\log n$ samples of $D(1)$ will be drawn with probability at least $1-n^{-5c}$. 
On the other hand, at most $10c\beta\log n$ instances of other elements will be sampled in expectation. 
The probability less than $40c\beta\log n$ instances of other elements are sampled is at least $1-n^{-5c}$. 
Thus $D(1)$ will be output with probability at least $1-n^{-4c}$. 
\end{proof}

We highlight that \lemref{lem:fail} implicitly but crucially uses the fact that each sample is unbiased. 
That is, we cannot blindly interpret \lemref{lem:z:superheavy} as claiming that it suffices to find the heavy-hitter that dominates $p$-norm of the duplicated frequency vector. 
Notably, na\"{i}vely using a sliding window heavy-hitter algorithm such as \cite{BravermanGLWZ18} to identify the index $i\in[n]$ that achieves the max does not work because these algorithms are biased against the items appearing near the boundary of the sliding window.  
For example, any heavy-hitter algorithm including a recently expired insertion to coordinate $i$ is \emph{more likely} to identify $i$ as a heavy-hitter, so that the probability of failure is not independent of which coordinate achieves the max since it may be non-negligibly lower for $i$. 
Similarly, any heavy-hitter algorithm that excludes an active insertion to coordinate $i$ is less likely to identify $i$ as a heavy-hitter and again, the probability of failure is not independent of which coordinate achieves the max since it may be non-negligibly higher for $i$. 
Moreover, if a reduction from \lemref{lem:z:superheavy} to the heavy-hitters problem in the sliding window were immediately true, then it would be possible to use a heavy-hitter sliding window algorithm such as \cite{BravermanGLWZ18} to do perfect $L_p$ sampling in $\polylog(n)$ space.

Since \lemref{lem:fail} states that the value of $z_{D(1)}$, i.e., the coordinate that achieves the max after the scaling of the exponential random variables, only affects the failure probability of the algorithm by a negligible amount, it remains to show that each coordinate $i$ achieves the max with probability $\frac{|f_i|^p}{\sum_{j\in[n]}|f_j|^p}$. 

\begin{theorem}
\thmlab{thm:perfect:lp:small}
For $p<1$, there exists a perfect $L_p$ sampler for the sliding window model with insertion-only updates that uses $\O{\log^3 W}$ bits of space.
\end{theorem}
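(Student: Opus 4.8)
The plan is to analyze \algref{alg:perfect:lp:insertion:smallp} and establish three facts: the scaled arg-max follows \emph{exactly} the $L_p$ distribution; the algorithm reports the index of this arg-max whenever it is sufficiently dominant; and whether it succeeds is, up to a $\frac{1}{\poly(n)}$ additive amount, independent of \emph{which} coordinate is the arg-max. For the first point I would recast the scaling step: writing $t_{i,j} := e_{i,j}/f_i^p$, \factref{fact:exp} gives that $t_{i,j}$ is exponential with rate $f_i^p$, and the scaled duplicated value is $z_{i,j} = f_i/e_{i,j}^{1/p} = t_{i,j}^{-1/p}$, so $\arg\max_{i,j}|z_{i,j}| = \arg\min_{i,j} t_{i,j}$. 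By \lemref{lem:exp:stability} the minimizer is a copy $(i,j)$ with probability $f_i^p/\sum_{i',j'} f_{i'}^p = f_i^p/(n^c F_p)$, hence summing over the $n^c$ copies of $i$, the arg-max is some copy of coordinate $i$ with probability exactly $\frac{f_i^p}{F_p}$. (In the sliding-window instance, $f$, $\norm{z}_1$ and $F_p$ below all refer to the active window mass.) Thus if the algorithm always returned the index of $D(1)$ it would be truly perfect, and all error will enter through the recovery step.

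Next I would bound the recovery. By \lemref{lem:z:superheavy}, since $p<1$ we have $|z_{D(1)}| > 20\norm{z_{-D(1)}}_p \ge 20\norm{z_{-D(1)}}_1$ with constant probability. When this holds, the level $k$ picked from the $2$-approximation $\widehat f$ of \thmref{thm:lp} satisfies $\beta := \norm{z}_1/2^k = \O{1}$, and the Chernoff computation already carried out inside the proof of \lemref{lem:fail} shows that a strict majority of $\mathcal S_k$ consists of copies of $D(1)$ (and fewer than half are copies of other coordinates) with probability $1 - \frac{1}{\poly(n)}$; in that case the algorithm returns $D(1)$'s index and nothing else. Crucially, \lemref{lem:fail} (via the decomposition $|z_{D(1)}| = U_{D(1)}(1+V_{D(1)})$ of \lemref{lem:u:v}, with $U_{D(1)}$ independent of the identity of $D(1)$ and $|V_{D(1)}| = \O{\nu}$, $\nu = \frac{1}{\poly(n)}$) gives $\PPr{\text{FAIL}\mid z_{D(1)}} = \PPr{\text{FAIL}} \pm \frac{1}{\poly(n)}$. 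Therefore, conditioned on not failing, the output equals each $i$ with probability $\frac{f_i^p}{F_p} \pm \frac{1}{\poly(n)}$ — a \emph{perfect}, not truly perfect, sampler — while the unconditional failure probability is a constant bounded away from $1$; running $\O{\log\frac1\delta}$ independent copies and returning the first non-FAIL output drives the failure probability below $\delta$ without altering the conditional distribution.

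For the space bound: there are $\alpha\log n = \O{\log n}$ sets $\mathcal S_k$, each capped at $400\alpha c\log n = \O{\log n}$ entries, each entry being an index into the duplicated universe $[n]\times[n^c]$ together with a timestamp, i.e. $\O{\log n}$ bits (timestamps are $\O{\log W} = \O{\log n}$ since $n,m,W$ are polynomially related, and stale entries are deleted so only active ones persist); this is $\O{\log^3 n}$ bits. The estimator $\estimate$ of \thmref{thm:lp} costs $\O{\log^2 n}$ bits, and the exponentials are needed only at $\frac{1}{\poly(n)}$ precision and only for the $\O{\log^2 n}$ coordinates that ever enter some $\mathcal S_k$, contributing another $\O{\log^3 n}$ bits; the constants $\alpha$ and $c$ are absorbed as they depend only on the fixed target error exponent. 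Taking $\delta$ constant and using $\log n = \Theta(\log W)$ yields $\O{\log^3 W}$ bits overall.

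The main obstacle, and the reason all of \lemref{lem:u:v}--\lemref{lem:fail} were set up, is guaranteeing that conditioning on ``did not FAIL'' perturbs the target distribution by at most $\frac{1}{\poly(n)}$ — equivalently, that the failure event is essentially oblivious to which coordinate is heavy. \lemref{lem:fail} handles the dominant contribution, but I would spend the bulk of the write-up checking that no other ingredient secretly correlates with the identity of $D(1)$: the value of $\widehat f$ and hence the chosen level $k$, the majority threshold in $\mathcal S_k$, and the sliding-window boundary handling (expired vs.\ active entries) must each be verified to contribute only $\frac{1}{\poly(n)}$, and one must confirm that the two-approximation slack in $\widehat f$ only inflates $\beta$ to a larger $\O{1}$ constant rather than spoiling the Chernoff bound.
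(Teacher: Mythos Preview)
Your correctness argument is essentially the same as the paper's: you use \factref{fact:exp} and \lemref{lem:exp:stability} to show the arg-max hits a copy of $i$ with probability exactly $f_i^p/F_p$, invoke \lemref{lem:z:superheavy} (with $p<1$ so $\|\cdot\|_p\ge\|\cdot\|_1$) and the Chernoff computation inside \lemref{lem:fail} for recovery, and then \lemref{lem:fail} itself for the $\frac{1}{\poly(n)}$ near-independence of failure from the identity of $D(1)$. The $\O{\log^3 n}$ accounting for the $\mathcal S_k$ structures also matches.

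There is, however, a genuine gap in your derandomization sketch. You write that the exponentials ``are needed only \ldots\ for the $\O{\log^2 n}$ coordinates that ever enter some $\mathcal S_k$,'' but this is circular: the value $e_{i,j}$ determines how many copies of $z_{i,j}$ are generated at \emph{every} update to coordinate $i$, and hence whether any of those copies are ever sampled into an $\mathcal S_k$ in the first place. You cannot know in advance which pairs $(i,j)$ will be sampled, and once a sample is taken you must reproduce the \emph{same} $e_{i,j}$ at every subsequent update to $i$ so that the count of copies is consistent; storing only the exponentials of already-sampled pairs does not give you a way to regenerate the unsampled ones consistently. The paper resolves this not by storing exponentials at all but by applying Nisan's PRG (\thmref{thm:nisan}) together with a reordering trick: it argues that the output distribution is invariant under reordering the stream so that all updates to each coordinate are contiguous, that on such a reordered stream an $\O{\log^2 n}$-space tester suffices (since only the single level $\mathcal S_k$ with $2^k\le\|z\|_1<2^{k+1}$ is examined), and hence Nisan's PRG with seed length $\O{\log^2 n\cdot\log n}=\O{\log^3 n}$ fools the algorithm to within $\frac{1}{\poly(n)}$.
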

\begin{proof}
Let $f\in\mathbb{R}^n$ be the underlying frequency vector implicitly defined by the sliding window. 
Each $i\in[n]$ is associated with $n^c$ variables $z_{i,j}$, where $j\in[n^c]$. 
Since $i$ is updated $f_i$ times in the active window, then we have $z_{i,j}\sim\frac{f_i}{E_{i,j}^{1/p}}$, where $E_{i,j}$ is an exponential random variable. 
Thus by \factref{fact:exp}, each $|z_{i,j}|^{-p}$ is an exponential random variable with rate $|f_i|^p$. 
By \lemref{lem:exp:stability}, the probability that $z_{i,j}$ is the largest scaled variable across all $i\in[n]$, $j\in[n^c]$ is $\frac{|f_i|^p}{n^c\cdot\sum_{k\in[n]}|f_k|^p}$. 
Hence for a fixed $i\in[n]$, the probability that some variable $z_{i,j}$ achieves the max for some $j\in[n^c]$ is 
\[\frac{n^c\cdot|f_i|^p}{n^c\cdot\sum_{k\in[n]}|f_k|^p}=\frac{|f_i|^p}{\sum_{k\in[n]}|f_k|^p}.\]
Now conditioned on some $z_{i,j}$ achieving the max, the probability that \algref{alg:perfect:lp:insertion:smallp} succeeds and correctly outputs $i$ is only perturbed by an additive $\frac{1}{\poly(n)}$ by \lemref{lem:fail}. 
Thus for each $i\in[n]$, the probability that \algref{alg:perfect:lp:insertion:smallp} outputs $i$ is 
\[\frac{|f_i|^p}{\sum_{k\in[n]}|f_k|^p}+\frac{1}{\poly(n)}.\]
Hence, \algref{alg:perfect:lp:insertion:smallp} is a perfect $L_p$ sampler. 

To analyze the space complexity, observe that \algref{alg:perfect:lp:insertion:smallp} maintains $\O{\log n}$ samples in each of the sets $\mathcal{S}_1,\ldots,\mathcal{S}_{\alpha\log n}$. 
Each sample uses $\O{\log n}$ bits to store. 
Hence, these components of the algorithm use $\O{\log^3 n}$ bits of space in total and it remains to derandomize the exponential random variables, which we argue below. 
\end{proof}

\paragraph{Derandomization of the Algorithm.}
\cite{JayaramW18} used a combination of Nisan's pseudorandom generator (PRG)~\cite{Nisan92} and a PRG of~\cite{GopalanKM18} that fools a certain class of Fourier transforms to develop an efficient PRG that fools half-space testers. 
Using this half-space tester, \cite{JayaramW18} showed that any streaming algorithm that only stores a random sketch $\mathbf{A}\cdot f$, with bounded independent and identically distributed entries that can be efficiently sampled, on an input stream whose intermediate underlying vectors have polynomially bounded entries can be efficiently derandomized. 
This suffices for the perfect $L_p$ sampler in the streaming model, which uses a series of linear sketches to sample an output. 
\algref{alg:perfect:lp:insertion:smallp} uses a sampling based approach instead of linear sketches, so we must repurpose the derandomization of the perfect sampler from \cite{JayaramW18}. 
Fortunately, we argue that a simple application of Nisan's PRG suffices to derandomize our algorithm. 

\begin{theorem}[Nisan's PRG]
\cite{Nisan92}
\thmlab{thm:nisan}
Let $\mathcal{A}$ be an algorithm that uses $S=\Omega(\log n)$ space and $R$ random bits. 
Then there exists a pseudorandom generator for $\mathcal{A}$ that succeeds with high probability and runs in $\O{S\log R}$ bits. 
\end{theorem}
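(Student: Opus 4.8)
The plan is to construct and analyze Nisan's recursive pseudorandom generator. First I would reformulate $\mathcal{A}$ as an oblivious read-once branching program of width $w := 2^{S}$: at each step the machine occupies one of at most $2^{S}$ configurations, reads the next random bit, and transitions, never revisiting a bit. Since $S = \Omega(\log n)$ and the number of random bits is $R = \poly(n)$, I would group the $R$ bits into $T = \O{R}$ contiguous blocks of $b = \Theta(S)$ bits each, so that $\log T = \O{\log R}$; processing a single block $x \in \{0,1\}^{b}$ at position $j$ induces a fixed transition $\tau_{j} \colon [w] \to [w]$, and the entire run of $\mathcal{A}$ is $\tau_{T} \circ \cdots \circ \tau_{1}$ applied to the start configuration. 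Next, letting $\mathcal{H}$ be a pairwise-independent family of hash functions $\{0,1\}^{b} \to \{0,1\}^{b}$, each described in $\O{b} = \O{S}$ bits, I would set $G_{0}(x) = x$ and recursively
\[
G_{k}(x, h_{1}, \ldots, h_{k}) = G_{k-1}(x, h_{1}, \ldots, h_{k-1}) \circ G_{k-1}\big(h_{k}(x), h_{1}, \ldots, h_{k-1}\big),
\]
with $\circ$ denoting concatenation. Taking $k = \lceil \log_{2} T \rceil = \O{\log R}$ levels, $G_{k}$ stretches a seed of length $b + k \cdot \O{S} = \O{S \log R}$ into $R$ output bits, which is the claimed seed length; the generator is itself computable in the same space.

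The correctness argument rests on the standard ``recycling lemma'': for any two maps $P, Q$ sending a configuration in $[w]$ to a distribution over $[w]$ — modeling how the left, resp.\ right, half of a sub-block transforms the configuration distribution — a uniformly random $h \in \mathcal{H}$ has, except with probability $\O{w^{2} 2^{-b}}$ over the choice of $h$, the property that replacing the right half's fresh randomness by the recycled string $h(x)$ perturbs the resulting distribution on $[w]$ by only $\O{w \cdot 2^{-b/2}}$ in statistical distance; this is precisely where pairwise independence is used, to bound the collision probabilities governing how the intermediate configuration is re-fed into the second half. I would then run a hybrid over the $k = \O{\log R}$ recursion levels, substituting one level at a time each truly random block-pair $(y,y')$ by the $h$-correlated pair $(y, h(y))$; the per-level errors add, so the statistical distance between the configuration distribution of $\mathcal{A}$ under $G_{k}$'s output and under true randomness is at most $\O{w^{2}} \cdot k \cdot 2^{-\Omega(b)}$. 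Choosing the constant in $b = \Theta(S)$ large enough (using $S = \Omega(\log n)$ and $R = \poly(n)$) drives this below $\frac{1}{\poly(n)}$, which is the ``succeeds with high probability'' conclusion; evaluating $G_{k}$ costs $\O{\log R}$ hash evaluations, i.e.\ $\O{S \log R}$ bit operations, matching the stated bound.

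The main obstacle is the recycling lemma together with its composition across levels. I would need to verify that a pairwise-independent $h$ preserves the \emph{joint} effect of the two halves on the final configuration, not merely each half's marginal distribution, and that the ``bad-seed'' events for the $\O{\log R}$ independently drawn hash functions union-bound so that the fooling error accumulates only additively, rather than multiplicatively, with the number of levels. Making this precise requires phrasing the inductive hypothesis in terms of the full distribution over the $2^{S}$ configurations, so that branching-program composition goes through, and carrying two quantities in parallel through the recursion: a fooling error in statistical distance and a bad-seed probability for the hash functions.
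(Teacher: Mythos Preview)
The paper does not prove this statement; it is quoted as a black-box result from \cite{Nisan92} and used only to derandomize \algref{alg:perfect:lp:insertion:smallp}. So there is no ``paper's own proof'' to compare against.

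That said, your sketch is the standard Nisan construction and is essentially correct: model the small-space algorithm as a read-once branching program of width $2^{S}$, group the random tape into $\Theta(S)$-bit blocks, and recursively double the output length using a fresh pairwise-independent hash at each of the $\O{\log R}$ levels, invoking the recycling lemma at each level and summing errors by a hybrid argument. One point to tighten: the total number of hybrid steps is not just the recursion depth $k$ but the total number of hash applications across the tree, which is $\Theta(T)$; the accumulated statistical distance is of order $T \cdot w \cdot 2^{-\Omega(b)}$ (plus a bad-seed probability of the same order), so you need $b$ to be a large enough constant multiple of $S + \log R$, not merely of $S$. Since the paper takes $R = \poly(n)$ and $S = \Omega(\log n)$, this still gives seed length $\O{S \log R}$, but the dependence on $T$ should appear explicitly in your error bound.
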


Recall that Nisan's PRG can be viewed as generating a stream of pseudorandom bits in a read-once tape that can be used to generate random variables to fool a small space tester. 
However, an input tape that can only be read once cannot be immediately given to algorithm to generate the exponential random variables $e_{i,j}$ and subsequently discarded because the value assigned to each $e_{i,j}$ must be consistent whenever the coordinate $i$ is updated. 
Instead, we use the standard reordering trick to derandomize using Nisan's PRG. 

For any fixed randomness $\R$ for the exponential random variables, let $\T_{\R}$ be the tester that tests whether our $L_p$ sampler would output an index $i\in[n]$ if $\R$ is hard-coded into the tester and the random bits for the sampling procedures arrive in the stream. 
Specifically, we define $\T_{\R}(i,\S,\A_1)=1$ if the algorithm with access to independent exponential random variables outputs $i$ on stream $\S$ and $\T_{\R}(i,\S,\A_1)=0$ otherwise. 
Similarly, we define $\T_{\R}(i,\S,\A_2)=1$ if using Nisan's PRG on our algorithm outputs $i$ on stream $\S$ and $\T_{\R}(i,\S,\A_2)=0$ otherwise. 

Let $\S_1$ be any fixed input stream and $\S_2$ be an input stream in which all updates to a single coordinate of the underlying frequency vector arrive consecutively. 
Observe that using Nisan's PRG on the algorithm suffices to fool $\A_{\R}$ on $\S_2$ from an algorithm with access to independent exponential random variables. 
That is, for all $i\in[n]$, we have
\[\left|\PPr{\T_{\R}(i,\S_2,\A_1)=1}-\PPr{\T_{\R}(i,\S_2,\A_2)=1}\right|=\frac{1}{\poly(n)}.\]
On the other hand, the order of the inputs does not change the distribution of the outputs of the idealized process, so that
\[\PPr{\T_{\R}(i,\S_1,\A_1)=1}=\PPr{\T_{\R}(i,\S_2,\A_1)=1}.\]
Similarly, the order of the inputs does not change the distribution of the outputs of the algorithm following Nisan's PRG, so that
\[\PPr{\T_{\R}(i,\S_1,\A_2)=1}=\PPr{\T_{\R}(i,\S_2,\A_2)=1}.\]
Hence, we have 
\[\left|\PPr{\T_{\R}(i,\S_1,\A_1)=1}-\PPr{\T_{\R}(i,\S_1,\A_2)=1}\right|=\frac{1}{\poly(n)}.\]
In other words, using Nisan's PRG on the algorithm on the $\S_1$ suffices to fool $\A_{\R}$ from an algorithm with access to independent exponential random variables. 

Since our algorithm uses $\O{\log^3 n}$ bits of space and $\O{n}$ bits of randomness, then a na\"{i}ve application of Nisan's PRG would derandomize our algorithm using $\O{\log^4 n}$ bits of space by \thmref{thm:nisan}. 
Instead, we note that only the samples in set $S_k$ with $2^k\le\norm{z}_1\le 2^{k+1}$ are tested by each our algorithm. 
Thus there exists a space $\O{\log^2 n}$ tester for our algorithm, which combined with Nisan's PRG, yields a $\O{\log^3 n}$ space derandomization of \algref{alg:perfect:lp:insertion:smallp}, by \thmref{thm:nisan}. 

\subsection{Insertion-Only Streams}
\begin{algorithm}[!htb]
\caption{Perfect $L_p$ sampler for the streaming model on insertion only streams and $p<1$.}
\alglab{alg:perfect:lp:misragries}
\begin{algorithmic}[1]
\Require{A stream of updates $u_1,u_2,\ldots,u_t$, where each $u_i\in[n]$ represents a single update to a coordinate of the underlying vector $f$, and a size $W$ for the sliding window.}
\State{Let $e_{i,j}$ be an exponential random variable for each $i\in[n],j\in[n^c]$.}
\For{each update $u_r$}
\For{$i\in[n^c]$}
\State{Insert $\frac{1}{e_{u_r,i}^{1/p}}$ instances of duplicated variable $z_{u_r,i}$ into stream $S'$.}
\State{Run $\misragries$ on stream $S'$.} 
\EndFor
\EndFor
\State{Let $m$ be the stream length of $S'$}
\If{$\misragries$ reports an item $i$ with frequency at least $\frac{1}{2}m$.}
\State{\Return $i$.}
\Else
\State{Output FAIL.}
\EndIf
\end{algorithmic}
\end{algorithm}

\begin{theorem}
For $p<1$, there exists a perfect $L_p$ sampler for the streaming model with insertion-only updates that uses $\O{\log n}$ bits of space.
\end{theorem}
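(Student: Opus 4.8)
The plan is to analyze \algref{alg:perfect:lp:misragries}, whose distributional analysis is essentially a restriction of that of \thmref{thm:perfect:lp:small} (with $\misragries$ playing the role of the sampled sets $\mathcal{S}_k$) and whose space analysis then needs only $\misragries$ plus a derandomization of the exponential variables. First I would establish the output distribution. Since coordinate $i$ receives $f_i$ insertions, the duplicated item $z_{i,j}$ has (weighted) multiplicity $f_i/e_{i,j}^{1/p}$ in $S'$, so by \factref{fact:exp} the value $|z_{i,j}|^{-p}$ is an exponential random variable of rate $|f_i|^p$. Applying \lemref{lem:exp:stability} to the $n^{c+1}$ independent exponentials $\{|z_{i,j}|^{-p}\}_{i\in[n],j\in[n^c]}$ shows that a fixed $z_{i,j}$ is the global maximum with probability $|f_i|^p/(n^cF_p)$, where $F_p=\sum_{k}|f_k|^p$; summing over the $n^c$ copies attached to $i$, the argmax $D(1)$ falls in coordinate $i$ with probability exactly $|f_i|^p/F_p$. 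Moreover, since $\misragries$ never overcounts and at most one item can have multiplicity at least $\tfrac12 m$ (where $m=\norm{z}_1$ is the weighted length of $S'$), whenever the algorithm outputs an index it must output the coordinate containing $D(1)$; hence the only randomness in the output beyond the identity of $D(1)$ is whether the algorithm fails.

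Next I would bound the failure probability and show it is nearly independent of which coordinate realizes $D(1)$. By \lemref{lem:z:superheavy}, with constant probability $z_{D(1)}>20\norm{z_{-D(1)}}_p$, and since $\norm{x}_1\le\norm{x}_p$ for $p\le 1$ this yields $z_{D(1)}>20\norm{z_{-D(1)}}_1$, i.e.\ $z_{D(1)}>\tfrac{20}{21}m$; running $\misragries$ with a constant $\eps=\tfrac{1}{10}$ (list size $5$) then reports $D(1)$ with estimate at least $z_{D(1)}-\eps m>\tfrac12 m$, so the algorithm returns the coordinate $i$ containing $D(1)$. For the independence, \lemref{lem:u:v} lets me write $|z_{D(k)}|=U_{D(k)}(1+V_{D(k)})$ with $|V_{D(k)}|=\O{\nu}$ and the $U_{D(k)}$ depending only on exponentials $E_1,\dots,E_k$ that are independent of the anti-rank permutation; thus both $z_{D(1)}$ and $\norm{z_{-D(1)}}_1$ are, up to a $(1\pm 1/\poly(n))$ factor, independent of which coordinate is $D(1)$, and exactly as in \lemref{lem:fail} the failure probability conditioned on $D(1)$ lying in coordinate $i$ equals the unconditional failure probability up to $\pm 1/\poly(n)$ (taking $\nu=1/\poly(n)$). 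Combining, conditioned on not failing the algorithm outputs $i$ with probability $|f_i|^p/F_p\pm 1/\poly(n)$, so it is a perfect $L_p$ sampler with constant success probability, amplifiable to $1-\delta$ with $\O{\log(1/\delta)}$ parallel copies.

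Finally, for the space bound: $\misragries$ with constant $\eps$ uses $\O{\log m}=\O{\log n}$ bits, since the duplicated universe has size $n^{c+1}$ and all (weighted) counts are $\poly(n)$, plus an $\O{\log n}$-bit counter for $m$. The only remaining randomness is the $n^{c+1}$ exponentials $e_{i,j}$, each needed to $1/\poly(n)$ precision, i.e.\ $R=\poly(n)$ random bits, which I would derandomize as in the proof of \thmref{thm:perfect:lp:small}: apply Nisan's PRG (\thmref{thm:nisan}) together with the reordering trick, processing all updates to a given coordinate $i$ consecutively so that $e_{i,1},\dots,e_{i,n^c}$ can be drawn from the read-once pseudorandom tape, used to feed the $n^c$ weighted items to $\misragries$, and then discarded. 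The hard part will be controlling this derandomization so that the PRG overhead does not dominate the $\O{\log n}$ bound — here the tester only ever runs $\misragries$ (it maintains no $\mathcal{S}_k$ sets, unlike the sliding-window algorithm), which is what makes the tighter bound plausible — whereas the distributional part essentially just reuses \lemref{lem:exp:stability}, \lemref{lem:u:v}, \lemref{lem:z:superheavy}, and \lemref{lem:fail}.
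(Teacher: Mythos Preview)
Your proposal is correct and follows essentially the same approach as the paper: both analyze \algref{alg:perfect:lp:misragries} by invoking \lemref{lem:u:v} to argue that the failure event depends on $D(1)$ only through a $1/\poly(n)$ perturbation, then use \lemref{lem:z:superheavy} together with $\norm{\cdot}_1\le\norm{\cdot}_p$ for $p<1$ to conclude that $z_{D(1)}$ is an $L_1$ majority which a constant-$\eps$ $\misragries$ sketch always recovers.

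You are in fact more careful than the paper on one point: the paper's proof of this theorem ends with ``$\misragries$ with $\eps=\O{1}$ uses $\O{\log n}$ bits of space, so the algorithm uses $\O{\log n}$ bits of space in total'' and does not address derandomization of the exponentials at all. Your flag that Nisan's PRG applied to an $\O{\log n}$-space tester with $\poly(n)$ random bits would naively give $\O{\log^2 n}$, not $\O{\log n}$, is well taken; the paper's earlier derandomization discussion (for the sliding-window variant) likewise only achieves $\O{S\log R}$ and does not supply a tighter argument for this theorem. So your concern that the derandomization is ``the hard part'' is legitimate, but it is a gap the paper itself leaves open rather than something you are missing relative to the paper's proof.
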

\begin{proof}
Consider \algref{alg:perfect:lp:misragries}. 
We fix the value of $D(1)$ and analyze the probability that the algorithm fails. 
\lemref{lem:u:v} again implies that we can rewrite $|z_{D(1)}|=U_{D(1)}(1+V_{D(1)})$, where $U_{D_1}$ is independent of the value of $D(1)$ and $|V_{D_1}|=\O{\nu}$. 
Hence, the value of $D(1)$ only perturbs the probability that the algorithm fails by at most an additive $\O{\nu}$. 
Thus we set $\nu=\frac{1}{\poly(n)}$ and absorb the additive error into the $\frac{1}{\poly(n)}$ sampling error. 

To evaluate the probability that the algorithm fails, note that \lemref{lem:z:superheavy} implies $z_{D(1)}>20\norm{z_{-D(1)}}_p>20\norm{z_{-D(1)}}_1$ with constant probability for $p<1$. 
Now conditioned on $z_{D(1)}>20\norm{z_{-D(1)}}_1$, a $\misragries$ data structure with $\eps=\frac{1}{100}$ in \thmref{thm:misragries} will always include $D(1)$ in the list of indices. 
Moreover, $\misragries$ will output an estimated frequency for $z_{D(1)}$ larger than $\frac{1}{2}\norm{z}_1$. 
Hence, $D(1)$ will always be output by the algorithm. 
Since $\misragries$ with $\eps=\O{1}$ uses $\O{\log n}$ bits of space, then the algorithm uses $\O{\log n}$ bits of space in total. 
\end{proof}

\subsection{Fast Update Time}
\seclab{sec:fast:time}
Observe that we expect to sample $\O{\log n}$ elements at all times within the stream. 
Hence, it is wasteful to generate all variables $z_{u_r,j}$ for $j\in[n^c]$ each time a new update $u_r$ arrives. 
Instead, we use the fact that the sum of exponential random variables converges to a $p$-stable distribution.

\begin{theorem}
\cite{Hall81}
\thmlab{thm:e:pstable}
Let $e_1,\ldots,e_{n^c}$ be exponential random variables with rate $1$. 
Let $\beta=\frac{1}{2p}-\frac{1}{2}$ for $p<1$ and $\beta=\frac{1}{p}-\frac{1}{2}$ for $1<p<2$. 
Then
\[\PPr{\sum_{i=1}^{n^c}\frac{1}{e_i^{1/p}}\le n^{c/p}x}=A_1(x)+E(x),\]
where $A_1(x)$ is the probability distribution functions of a $p$-stable random variable whose characteristic function can be explicitly computed and $E(x)=\O{\frac{1}{n^{c\beta}}}$. 
\end{theorem}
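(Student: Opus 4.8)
The plan is to prove this by a characteristic‑function (Fourier) argument combined with Esseen's smoothing inequality, specializing the general theory of rates of convergence to stable laws to the explicit summands $X_i=e_i^{-1/p}$; throughout write $N:=n^c$.

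First I would pin down a single summand. If $e\sim\mathrm{Exp}(1)$ then $X=e^{-1/p}$ has tail $\PPr{X>t}=1-e^{-t^{-p}}=t^{-p}-\tfrac12 t^{-2p}+\O{t^{-3p}}$ as $t\to\infty$, so for $0<p<2$ the variable $X$ lies in the normal domain of attraction of a one‑sided $p$‑stable law; moreover its Mellin transform is exactly $\Ex{X^s}=\Ex{e^{-s/p}}=\Gamma(1-s/p)$ for $s<p$. From the tail expansion (equivalently, from Mellin--Barnes inversion of $\Gamma(1-s/p)$) one obtains the expansion of $\phi(\xi):=\Ex{e^{i\xi X}}$ near $\xi=0$,
\[\phi(\xi)=1-\lambda_p\,|\xi|^p\bigl(1-i\,\mathrm{sgn}(\xi)\tan(\pi p/2)\bigr)+R(\xi),\]
with $\lambda_p$ an explicit constant built from Gamma functions (when $1<p<2$ one first subtracts the finite mean $\Gamma(1-1/p)$), and a remainder $|R(\xi)|=\O{|\xi|^{\min\{2p,2\}}}$; it is exactly this next‑order term that governs the error $E$.

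Next I would pass to the normalized sum. The characteristic function of $N^{-1/p}\sum_{i\le N}X_i$ (centered when $p>1$) is $\phi(\xi/N^{1/p})^N$. Taking logarithms and inserting the expansion above,
\[N\log\phi\!\left(\xi/N^{1/p}\right)=-\lambda_p|\xi|^p\bigl(1-i\,\mathrm{sgn}(\xi)\tan(\pi p/2)\bigr)+\O{N\,|\xi/N^{1/p}|^{\min\{2p,2\}}},\]
so, after exponentiating, $\phi(\xi/N^{1/p})^N$ differs from the characteristic function $\psi_p$ of the limiting $p$‑stable law (whose CDF is $A_1$) by a quantity bounded by $|\psi_p(\xi)|\cdot N\,|\xi/N^{1/p}|^{\theta}$ for the relevant exponent $\theta$. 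One then splits the Fourier variable at a threshold $T_N$, integrates $\bigl|\phi(\xi/N^{1/p})^N-\psi_p(\xi)\bigr|$ against $\tfrac{d\xi}{|\xi|}$ over $|\xi|\le T_N$ (handling $|\xi|>T_N$ via the tail decay of $\psi_p$), and optimizes $T_N$ as a function of $N$. Balancing the stable‑law tail against the error term is precisely what produces the exponent $\beta=\tfrac1{2p}-\tfrac12$ for $p<1$ (and $\beta=\tfrac1p-\tfrac12$ for $1<p<2$). Finally, feeding the bound $\int_{|\xi|\le T_N}\bigl|\phi(\xi/N^{1/p})^N-\psi_p(\xi)\bigr|\tfrac{d\xi}{|\xi|}=\O{N^{-\beta}}$ into Esseen's smoothing inequality transfers this to a uniform‑in‑$x$ bound on the CDF, yielding $\PPr{\sum_{i\le N}X_i\le N^{1/p}x}=A_1(x)+E(x)$ with $E(x)=\O{N^{-\beta}}=\O{n^{-c\beta}}$.

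The main obstacle is getting the \emph{sharp} exponent $\beta$ rather than merely some polynomial rate: the crude bound from the second‑order term of $\phi$ is lossy, so one must track precisely both the next‑order behavior of $\phi$ near $0$ and the Fourier‑tail decay of the stable density, and choose the truncation $T_N$ optimally in terms of $p$. Secondarily, the range $1<p<2$ requires the centering step (and care that subtracting the mean does not disturb the explicit form of $A_1$), and the boundary case $p=1$ is genuinely different — logarithmic corrections appear, $\tan(\pi p/2)$ blows up — which is consistent with the discontinuity of $\beta$ at $p=1$.
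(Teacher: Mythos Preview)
The paper does not prove this theorem at all: it is quoted verbatim as a result of Hall (the citation \cite{Hall81}), and the paper uses it as a black box to justify replacing a sum of $n^c$ inverse-power exponentials by a single $p$-stable variable in \secref{sec:fast:time}. There is therefore no ``paper's own proof'' to compare against.

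That said, your sketch is the standard characteristic-function route to Berry--Esseen--type rates in the stable domain of attraction, and it is essentially the method Hall uses in the cited reference. The ingredients you list---the tail expansion $\PPr{X>t}=t^{-p}-\tfrac12 t^{-2p}+\cdots$ giving the second-order behaviour, the expansion of $\phi$ near $0$, centering for $1<p<2$, Esseen's smoothing lemma, and the optimization of the truncation $T_N$---are the right ones. The one place your writeup is genuinely thin is the step where the specific exponents $\beta=\tfrac{1}{2p}-\tfrac12$ (for $p<1$) and $\beta=\tfrac1p-\tfrac12$ (for $1<p<2$) fall out; you assert that ``balancing the stable-law tail against the error term'' yields these values, but you do not exhibit the balance, and the exponent of the remainder $R(\xi)$ you wrote, $\min\{2p,2\}$, does not by itself produce them without also tracking the decay rate of $|\psi_p(\xi)|$ and the boundedness of the limiting stable density. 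If you were to turn this into an actual proof you would need to carry that computation through explicitly (or, as the paper does, simply cite Hall).
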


Now instead of generating $\frac{1}{e_{i,j}^{1/p}}$ for each $i\in[n]$ and $j\in[n^c]$, we can instead generate a $p$-stable random variable $C_i$ for each $i\in[n]$ according to \thmref{thm:e:pstable} that is approximately $\sum_{j=1}^{n^c}\frac{1}{e_{i,j}^{1/p}}$. 
We can then sample on the variables $C_i$ rather than the individual $e_{i,j}$. 
When a variable $C_i$ is sampled, we compute which variables $e_{i,j}$ the sampled variable $C_i$ corresponds to at the end of the stream, using $\O{n^c}$ post-processing time. 

\lemref{lem:z:superheavy} observes that for $|z_{D(1)}|>20\norm{z_{-D(1)}}_p$ with constant probability for any $p<2$. 
In particular for $p<1$, then $|z_{D(1)}|>20\norm{z_{-D(1)}}_p>20\norm{z_{-D(1)}}_1$, so $z_{D(1)}$ is a heavy-hitter of the frequency vector $F$ of all $z_{i,j}$. 
Thus $|C_{D(1)}|>20\norm{C_{-D(1)}}_1$ so that $C_{D(1)}$ is a heavy-hitter of the frequency vector $C$ and so an instance of CountMin that finds the $\O{1}$ heavy-hitters of $C$ with probability at least $1-\frac{1}{n^c}$ will find $C_{D(1)}$ and identify it as the maximal element. 
The algorithm can then report $D(1)$. 

Similarly, for $1<p<2$, we have $|z_{D(1)}|>20\norm{z_{-D(1)}}_2$, so $z_{D(1)}$ is an $L_2$ heavy-hitter of the frequency vector $F$ of all $z_{i,j}$. 
Now we have $C_{D(1)}>20\norm{C_{-D(1)}}_2$ so that $C_{D(1)}$ is a heavy-hitter of the frequency vector $C$ and so an instance of CountMin that finds the $\O{1}$ heavy-hitters of $C$ with probability at least $1-\frac{1}{n^c}$ will find $C_{D(1)}$ and identify it as the maximal element.

A similar argument can be used to derandomize the fast update time algorithm, by replacing the hard-coded randomness for the independent random variables in the tester with the hard-coded randomness for the independent $p$-stable random variables. 

\begin{corollary}
\corlab{cor:perfect:smallp:fast}
For $p<1$, there exists a perfect $L_p$ sampler for the streaming model with insertion-only updates that uses $\O{\log n}$ bits of space, $\polylog(n)$ update time, and $\poly(n)$ post-processing time. 
\end{corollary}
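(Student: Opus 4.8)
The plan is to start from \algref{alg:perfect:lp:misragries}, which the preceding theorem shows is already a perfect $L_p$ sampler in $\O{\log n}$ bits, and to drive its per-update cost down to $\polylog(n)$ by never materializing the $n^c$ scaled exponentials attached to each coordinate. Its correctness rests on the duplicated vector $z$ with $z_{i,j}=f_i/e_{i,j}^{1/p}$: by \factref{fact:exp} each $|z_{i,j}|^{-p}$ is exponential of rate $|f_i|^p$, so by \lemref{lem:exp:stability} the global maximizer $D(1)$ lies in coordinate $i$ with probability exactly $|f_i|^p/\sum_k|f_k|^p$, and by \lemref{lem:z:superheavy} we have $|z_{D(1)}|>20\norm{z_{-D(1)}}_p\ge 20\norm{z_{-D(1)}}_1$ with constant probability, the last step using $p<1$. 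The first step of the proof is to replace, for each $i\in[n]$, the block sum $\sum_{j\in[n^c]}e_{i,j}^{-1/p}$ by a single $p$-stable variable $C_i$, drawn lazily when $i$ first appears (each $C_i$ needs only $\O{\log n}$ random bits for $\poly(n)$ precision, e.g.\ via the Chambers--Mallows--Stuck method). By \thmref{thm:e:pstable} this block sum is distributed as $n^{c/p}C_i$ up to a $1/\poly(n)$ additive error (the $\O{n^{-c\beta}}$ term of that theorem), and since the common factor $n^{c/p}$ cancels across coordinates, the aggregated vector $w$ with $w_i=f_iC_i$ faithfully represents, up to $1/\poly(n)$, the per-coordinate $\ell_1$ masses of $z$ used by \algref{alg:perfect:lp:misragries}.

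The implementation is then: on each update to $i$, feed one weighted token of weight $C_i$ into a deterministic weighted \misragries instance run on $w$ with $\eps=\Theta(1)$ (hence $\O{\log n}$ bits and $\O{1}$ amortized update time; a \countmin instance works identically and is what the extension to $1<p<2$ needs), and maintain $\norm{w}_1=\sum_i f_iC_i$ exactly in one counter, which is legitimate because every $C_i>0$; at the end, output the reported item if its estimated weight exceeds $\frac{1}{2}\norm{w}_1$ and output $\fail$ otherwise. For correctness, $|z_{D(1)}|>20\norm{z_{-D(1)}}_1$ forces the block containing $D(1)$ to carry a strict majority of $\norm{w}_1$ — its mass is at least $|z_{D(1)}|$, while all other blocks together have mass at most $\norm{z_{-D(1)}}_1<|z_{D(1)}|/20$ — so \misragries reports exactly that block and the algorithm outputs the coordinate of $D(1)$, which by \lemref{lem:exp:stability} equals $i$ with probability $|f_i|^p/\sum_k|f_k|^p$. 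Coupling each $C_i$ with a block $\{e_{i,j}\}_{j\in[n^c]}$ of exponentials through \thmref{thm:e:pstable} makes \lemref{lem:u:v} and \lemref{lem:fail} apply verbatim, so the success or failure of the procedure perturbs each output probability by only $1/\poly(n)$; conditioned on not outputting $\fail$, the output law is $|f_i|^p/\sum_k|f_k|^p\pm 1/\poly(n)$, i.e.\ a perfect $L_p$ sampler.

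For the resource bounds, the \misragries structure together with the $\norm{w}_1$ counter use $\O{\log n}$ bits and $\O{1}$ amortized time per update, while drawing and evaluating a single $p$-stable variable (its $\O{\log n}$ pseudorandom bits supplied by the generator below) costs $\polylog(n)$ time, so the update time is $\polylog(n)$. At the end, if coordinate $i$ wins we may reconstruct its block $e_{i,1},\dots,e_{i,n^c}$ to recover the exact maximizing copy, its scaled value, or any per-index metadata, costing $\O{n^c}=\poly(n)$ post-processing; returning the index $i$ alone needs none. Derandomization follows the proof of \thmref{thm:perfect:lp:small} and of the preceding insertion-only theorem: Nisan's PRG (\thmref{thm:nisan}) is applied to the $\O{\log n}$-space tester for this algorithm via the standard reordering trick, with the $n$ independent $p$-stable variables $C_1,\dots,C_n$ playing the role of the independent exponentials there, which keeps the output distribution within $1/\poly(n)$ and preserves the $\O{\log n}$ space bound.

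The step I expect to be the main obstacle is this substitution: one must verify that \thmref{thm:e:pstable} provides a coupling between $C_i$ and a genuine block of $n^c$ exponentials tight enough both to preserve the ``super-heavy maximal copy'' event of \lemref{lem:z:superheavy} after aggregating to the block level, and to retain the index-independence of the failure event underlying \lemref{lem:u:v} and \lemref{lem:fail} — in particular, ruling out at the $1/\poly(n)$ level the scenario in which a block other than that of $D(1)$ becomes the $\ell_1$-majority. Once that reduction is pinned down, the weighted frequent-items bookkeeping, the exact $\norm{w}_1$ counter, the $\O{n^c}$ un-aggregation, and the Nisan derandomization are all routine.
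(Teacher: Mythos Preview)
Your proposal is correct and follows essentially the same route as the paper: replace each block sum $\sum_{j\in[n^c]} e_{i,j}^{-1/p}$ by a single $p$-stable variable $C_i$ via \thmref{thm:e:pstable}, use the $\ell_1$-majority property of the block containing $D(1)$ (from \lemref{lem:z:superheavy} and $p<1$) to recover it with a constant-parameter heavy-hitters structure, un-aggregate in $\O{n^c}$ post-processing, and derandomize with Nisan's PRG plus the reordering trick. The paper's discussion in \secref{sec:fast:time} is terser and names $\countmin$ rather than weighted $\misragries$, but for the insertion-only corollary your choice of the deterministic $\misragries$ with an exact $\norm{w}_1$ counter is the natural instantiation and matches \algref{alg:perfect:lp:misragries}; the obstacle you flag about the tightness of the coupling in \thmref{thm:e:pstable} is exactly the point the paper also leaves at the level of ``up to $1/\poly(n)$ error absorbed into the additive term.''
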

\section{Truly Perfect Sampling on Random Order Streams}
\applab{app:randomorder}
In this section, we first consider truly perfect $L_2$ sampling on random order streams using $\O{\log^2 n}$ bits of space and generalize this approach to truly perfect $L_p$ samplers on random order streams for integers $p>2$ using $\O{W^{1-\frac{1}{p-1}}\log n}$ bits of space. 
For $p=2$, the idea is to consider two adjacent elements and see if they collide. 
Intuitively, an arbitrary position in the window is item $i$ with probability $\frac{f_i}{W}$ due to the random order of the stream. 
Then the next position in the window is also item $i$ with probability $\frac{f_i-1}{W-1}$. 
The probability the two positions are both $i$ is $\frac{f_i(f_i-1)}{W(W-1)}$, which is not quite the right probability. 

Thus, we ``correct'' this probability by sampling item $i$ in a position with probability $\frac{1}{W}$. 
Otherwise, with probability $1-\frac{1}{W}$, we sample item $i$ if the item is in the next position as well. 
Now the probability of sampling $i$ on the two adjacent elements is $\frac{1}{W}\frac{f_i}{W}+\frac{W-1}{W}\frac{f_i}{W}\frac{f_i-1}{W-1}=\frac{f_i^2}{W^2}$. 
Hence if these two positions outputs some item, then the item is drawn from the correct distribution. 
Moreover, since we maintain the items and their positions, we can expire our samples whenever the items become expired. 
We can also output a random sample in the case there are multiple samples. 
We perform this procedure over all disjoint adjacent pairs in the stream and show that the algorithm will report some sample with constant probability, since we show that the expected number of samples and the squared expected number of samples is within a constant factor.
We give the algorithm in \algref{alg:perfect:l2:random}.

\begin{remark}
We remark that although our results are presented in the sliding window model, they can naturally apply to random-order insertion-only streams as well.
\end{remark}

\begin{algorithm}[!htb]
\caption{Truly perfect $L_2$ sampler algorithm for the sliding window model on random order streams.}
\alglab{alg:perfect:l2:random}
\begin{algorithmic}[1]
\Require{A stream of updates $u_1,u_2,\ldots,u_t$, a size $W$ for the sliding window. 
Let each $u_i\in[n]$ represent a single update to a coordinate of the underlying vector $f$.}
\State{$\mathcal{S}\gets\emptyset$}
\For{each update $u_{2i-1}$ and $u_{2i}$}
\State{Draw $\phi\in[0,1]$ uniformly at random.}
\If{$\phi<\frac{1}{W}$}
\Comment{With probability $\frac{1}{W}$}
\State{$\mathcal{S}\gets\mathcal{S}\cup(u_{2i-1},2i-1)$.}
\Else
\Comment{With probability $1-\frac{1}{W}$}
\If{$u_{2i-1}=u_{2i}$}
\State{$\mathcal{S}\gets\mathcal{S}\cup(u_{2i-1},2i-1)$.}
\EndIf
\EndIf
\If{$(u_j,j)\in\mathcal{S}$ with $j\le 2i-W$}
\Comment{$(u_j,j)$ has expired}
\State{Delete $(u_j,j)$ from $\mathcal{S}$}
\EndIf
\State{Let $C$ be a sufficiently large constant so that $n^C>W$.}
\If{$|\mathcal{S}|>2C\log n$}
\State{Delete $C\log n$ elements uniformly at random from $\mathcal{S}$.}
\EndIf
\EndFor
\State{Return $(u_j,j)\in\mathcal{S}$ uniformly at random.}
\end{algorithmic}
\end{algorithm}

\begin{lemma}
\lemlab{lem:l2:sample:prob}
For each pair $u_{2i-1}$ and $u_{2i}$ with $2i-1>t-W$, the probability that \algref{alg:perfect:l2:random} samples $j\in[n]$ is $\frac{f_j^2}{W^2}$.
\end{lemma}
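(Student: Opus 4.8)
The plan is to condition on the frequency vector $f$ induced by the active window and then directly compute the probability that the processing of the pair $(u_{2i-1},u_{2i})$ inserts $(j,2i-1)$ into $\mathcal{S}$. First I would note that the hypothesis $2i-1 > t-W$ forces $t-W < 2i \le t$, so \emph{both} positions of the pair lie inside the window of length $W$; thus the only randomness relevant to this pair is the random order of the $W$ window entries together with the independent uniform draw $\phi \in [0,1]$ used at this step.

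Next, following the algorithm, I would split the event ``this pair inserts $j$ into $\mathcal{S}$'' into two disjoint cases: (i) $\phi < \tfrac1W$ and $u_{2i-1} = j$; or (ii) $\phi \ge \tfrac1W$ and $u_{2i-1} = u_{2i} = j$. Since $\phi$ is drawn independently of the stream, these have probabilities $\tfrac1W\cdot\PPr{u_{2i-1}=j}$ and $\left(1-\tfrac1W\right)\PPr{u_{2i-1}=j \text{ and } u_{2i}=j}$ respectively. The random-order property gives, conditioned on $f$, that the $W$ window entries form a uniformly random permutation of the multiset with frequency vector $f$, so reading two fixed adjacent coordinates is a sample of two items without replacement: $\PPr{u_{2i-1}=j}=\tfrac{f_j}{W}$ and $\PPr{u_{2i-1}=j \text{ and } u_{2i}=j}=\tfrac{f_j(f_j-1)}{W(W-1)}$. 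Substituting,
\[
\frac1W\cdot\frac{f_j}{W} + \left(1-\frac1W\right)\cdot\frac{f_j(f_j-1)}{W(W-1)} = \frac{f_j}{W^2} + \frac{f_j(f_j-1)}{W^2} = \frac{f_j^2}{W^2},
\]
which is exactly the claimed probability.

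Finally, I would observe that the deletion steps of \algref{alg:perfect:l2:random} — expiring $(u_j,j)$ once $j \le 2i-W$, and randomly thinning $\mathcal{S}$ when $|\mathcal{S}|$ exceeds $2C\log n$ — do not affect whether \emph{this} pair inserts $j$; they only affect whether an already-inserted sample survives to be returned, which is analyzed separately (and is where one later argues the first and second moments of $|\mathcal{S}|$ are within a constant factor, so that some sample is output with constant probability). I expect the main obstacle to be not the telescoping arithmetic above but making the random-order identities precise: one must argue that conditioning on the window's frequency vector really does leave a uniformly random permutation of the corresponding multiset (so two fixed adjacent window coordinates behave like a without-replacement pair, independent of $\phi$), together with the boundary bookkeeping that both coordinates of every counted pair are active.
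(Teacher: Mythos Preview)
Your proposal is correct and follows essentially the same approach as the paper: split on whether $\phi<\tfrac{1}{W}$, use the random-order identities $\PPr{u_{2i-1}=j}=\tfrac{f_j}{W}$ and $\PPr{u_{2i-1}=u_{2i}=j}=\tfrac{f_j(f_j-1)}{W(W-1)}$, and combine to get $\tfrac{f_j^2}{W^2}$. Your write-up is in fact more careful than the paper's, which performs the same two-line computation without the surrounding justification about both positions lying in the window or the independence of $\phi$ from the stream.
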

\begin{proof}
The probability that $u_{2i-1}$ is $\frac{f_j}{W}$ and $u_{2i-1}$ is sampled with probability $\frac{1}{W}$. 
Otherwise, $u_{2i}=u_{2i-1}=j$ with probability $\frac{f_j(f_j-1)}{W(W-1)}$. 
Hence, the probability that \algref{alg:perfect:l2:random} samples $j$ is 
\[\frac{1}{W}\frac{f_j}{W}+\frac{W-1}{W}\frac{f_j(f_j-1)}{W(W-1)}=\frac{f_j^2}{W^2}.\]
\end{proof}

\noindent
We use the following formulation of the Paley-Zygmund Inequality.
\begin{theorem}[Paley-Zygmund Inequality]
Let $X\ge 0$ be a random variable with finite variance and $0\le\delta\le 1$. 
Then
\[\PPr{X>\delta\Ex{X}}\ge(1-\delta)^2\frac{\Ex{X}^2}{\Ex{X^2}}.\]
\end{theorem}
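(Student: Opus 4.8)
The plan is to give the classical two-line argument: split $\Ex{X}$ according to whether $X$ exceeds the threshold $\delta\,\Ex{X}$, bound the ``small'' part trivially, and control the ``large'' part by Cauchy--Schwarz.

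First I would record the pointwise inequality $X \le \delta\,\Ex{X} + X\cdot\mathbf{1}(X > \delta\,\Ex{X})$, valid for every outcome: on the event $\{X \le \delta\,\Ex{X}\}$ the first term already dominates $X$ while the indicator term vanishes, and on the complementary event the indicator term alone equals $X$. Taking expectations and using linearity gives $\Ex{X} \le \delta\,\Ex{X} + \Ex{X\cdot\mathbf{1}(X>\delta\,\Ex{X})}$, i.e. $(1-\delta)\,\Ex{X} \le \Ex{X\cdot\mathbf{1}(X>\delta\,\Ex{X})}$. Here I use $X \ge 0$ so that all expectations are well defined, and finite variance so that $\Ex{X^2}<\infty$, hence $\Ex{X}<\infty$.

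Next I would apply Cauchy--Schwarz to $X$ and $\mathbf{1}(X>\delta\,\Ex{X})$, obtaining $\Ex{X\cdot\mathbf{1}(X>\delta\,\Ex{X})} \le \sqrt{\Ex{X^2}}\cdot\sqrt{\Ex{\mathbf{1}(X>\delta\,\Ex{X})}} = \sqrt{\Ex{X^2}}\cdot\sqrt{\PPr{X>\delta\,\Ex{X}}}$, where the last step uses that an indicator squares to itself and $\Ex{\mathbf{1}(A)} = \PPr{A}$. Chaining this with the previous display gives $(1-\delta)\,\Ex{X} \le \sqrt{\Ex{X^2}}\cdot\sqrt{\PPr{X>\delta\,\Ex{X}}}$; since $\delta \le 1$ both sides are nonnegative, so squaring is legitimate and yields $(1-\delta)^2\,\Ex{X}^2 \le \Ex{X^2}\cdot\PPr{X>\delta\,\Ex{X}}$. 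Dividing by $\Ex{X^2}$ gives the claimed bound.

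The only nuisance is the degenerate case $\Ex{X^2}=0$, i.e. $X=0$ almost surely, where the final division is not allowed; but then $\Ex{X}=0$ as well, the left-hand side is $\PPr{X>0}=0$, and the right-hand side is the indeterminate $0/0$, which one simply reads as $0$. More cleanly, whenever $\Ex{X}>0$ one automatically has $\Ex{X^2}\ge\Ex{X}^2>0$ by Jensen, so the interesting regime never hits the issue. I do not expect any real obstacle; the result is standard and the estimate above is tight (equality for two-point distributions).
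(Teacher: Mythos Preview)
Your argument is the standard and correct proof of the Paley--Zygmund inequality. The paper itself does not prove this theorem at all; it is simply quoted as a known tool (introduced with ``We use the following formulation of the Paley--Zygmund Inequality'') and then applied in the analysis of the random-order sampler. So there is nothing to compare against: your proof supplies exactly the classical justification the paper omits.
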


\begin{lemma}
\lemlab{lem:l2:output}
\algref{alg:perfect:l2:random} outputs some sample with probability at least $\frac{2}{3}$.
\end{lemma}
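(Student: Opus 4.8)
The plan is to apply the Paley--Zygmund inequality to the random variable $X$ that counts how many of the active (non-expired) adjacent pairs $(u_{2i-1},u_{2i})$ — those with $2i-1>t-W$ — insert an element into $\mathcal{S}$. One first checks that $\mathcal{S}\ne\emptyset$ at the end of the stream with probability at least $\PPr{X\ge 1}-o(1)$: an element inserted by an active pair has timestamp $>t-W$ and so is never expired, and the down-sampling step, which removes only a minority of $\mathcal{S}$ and only when $|\mathcal{S}|>2C\log n$, discards all such elements with only negligible probability (a routine concentration argument on $|\mathcal{S}|$). Hence it suffices to lower bound $\PPr{X\ge 1}$ by a constant.

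For the first moment, \lemref{lem:l2:sample:prob} says each active pair inserts index $j$ with probability $f_j^2/W^2$, hence inserts \emph{some} index with probability $\sum_{j\in[n]}f_j^2/W^2=F_2/W^2$; since there are $\flr{W/2}$ active pairs up to a boundary term, $\Ex{X}=\Theta(F_2/W)$, and because every nonzero coordinate satisfies $f_j\ge 1$ we have $F_2=\sum_j f_j^2\ge\sum_j f_j=F_1=W$, so $\Ex{X}=\Omega(1)$. For the second moment, write $X=\sum_i X_i$ with $X_i$ the indicator that pair $i$ fires, so $\Ex{X^2}=\Ex{X}+\sum_{i\ne i'}\Ex{X_iX_{i'}}$. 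The only coupling between two disjoint pairs is the random order of the stream; conditioning on the four positions involved, their joint content distribution differs from the product of the marginals only through the usual $O(1/W)$ ``sampling without replacement'' corrections, while the auxiliary coins $\phi$ are exactly independent, so $\Ex{X_iX_{i'}}\le(1+O(1/W))\,\Ex{X_i}\Ex{X_{i'}}$ and hence $\Ex{X^2}\le\Ex{X}+(1+o(1))\Ex{X}^2$.

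Substituting into Paley--Zygmund and letting $\delta\to 0$ yields
\[\PPr{X>0}\ \ge\ \frac{\Ex{X}^2}{\Ex{X^2}}\ \ge\ \frac{\Ex{X}}{1+(1+o(1))\Ex{X}},\]
a constant bounded away from $0$, so $\PPr{\mathcal{S}\ne\emptyset}=\Omega(1)$. To push this constant up to $\frac23$ one guarantees that $\Ex{X}$ exceeds a suitable fixed threshold — for instance by running a constant number of independent copies of the sampler in parallel and letting $\mathcal{S}$ be their union, which does not change the $\O{\log^2 n}$ space bound of \thmref{thm:small:p}. I expect the main obstacle to be the second-moment estimate: rigorously controlling the lower-order terms coming from the without-replacement coupling and from the $O(1)$ pairs that straddle the window boundary, since everything else is a direct substitution into Paley--Zygmund.
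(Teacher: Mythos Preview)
Your proposal is essentially the paper's own argument: define indicators $X_i$ for the $\flr{W/2}$ active pairs, get $\Ex{X}=\Theta(F_2/W)$ from \lemref{lem:l2:sample:prob}, bound $\Ex{X^2}$ by showing $\Ex{X_iX_{i'}}$ is within a constant factor of $\Ex{X_i}\Ex{X_{i'}}$ (the paper phrases this as conditioning on $X_i=1$ and arguing via the perturbed frequencies $\tilde f_k$ that $\Ex{X_{i'}\mid X_i=1}=\Theta(F_2/W^2)$), then invoke Paley--Zygmund. You are in fact a bit more careful than the paper --- you explicitly note $F_2\ge F_1=W$ so $\Ex{X}=\Omega(1)$, you address the down-sampling step, and you explain the boost to $\tfrac{2}{3}$ --- whereas the paper's proof stops at ``constant probability.''
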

\begin{proof}
By \lemref{lem:l2:sample:prob}, the probability that \algref{alg:perfect:l2:random} samples some $j\in[n]$ for a pair $u_{2i-1}$ and $u_{2i}$ with $2i-1>t-W$ is $\frac{F_2}{W^2}$, where $F_2=\sum_{j=1}^n f_j^2$. 
We suppose $t$ and $W\ge 4$ are even for the purpose of presentation, but a similar argument suffices for other parities of $t$ and $W$. 
Let $X_1,\ldots,X_{\frac{W}{2}}$ be indicator variables so that $X_i=1$ if the $i\th$ disjoint consecutive pair in the window $u_{t-W+2i-1},u_{t-W+2i}$ produced a sample and $X_i=0$ otherwise and let $X=\sum X_i$. 
Thus the expected number of samples $X$ inserted into $\mathcal{S}$ is 
\[\Ex{X}=\sum\Ex{X_i}=\frac{W}{2}\frac{F_2}{W^2}=\frac{F_2}{2W}.\]

Moreover, we have $\Ex{X^2}=\sum\Ex{X_iX_j}$. 
For $i=j$, $\Ex{X_iX_j}=\frac{F_2}{W^2}$. 
For $i\neq j$, consider the probability the $j\th$ disjoint pair is also sampled conditioned on $X_i=1$. 
Formally, if $\tilde{f}_k$ is the frequency of $k$ in the window excluding the $i\th$ pair, then the probability that the $j\th$ disjoint pair samples $k\in[n]$ is $\frac{\tilde{f}_k^2}{(W-2)^2}$ by a similar argument to \lemref{lem:l2:sample:prob}. 
Since $f_k\ge\tilde{f}_k$ and $\sum\tilde{f}_k=-2+\sum f_k$, then $\Ex{X_j|X_i=1}$ is within a constant factor of $\frac{F_2}{W^2}$. 
Hence, $\Ex{X^2}$ is within a constant factor of $\Ex{X}^2$ so then \algref{alg:perfect:l2:random} outputs some sample with constant probability by the Paley-Zygmund inequality.
\end{proof}

We now prove \thmref{thm:small:p}.
\thmsmallp*
\begin{proof}
By \lemref{lem:l2:output}, \algref{alg:perfect:l2:random} outputs some sample with probability at least $\frac{2}{3}$. 
Conditioned on outputting a sample, \algref{alg:perfect:l2:random} outputs index $i\in[n]$ with probability $\frac{f_i^2}{F_2}$ by \lemref{lem:l2:sample:prob}. 
Since $\mathcal{S}$ maintains $\O{\log n}$ samples, each using $\log n+\log W$ bits, and $\log W=\O{\log n}$, then \algref{alg:perfect:l2:random} uses $\O{\log^2 n}$ bits of space. 
Note that for each update, at most one ordered pair is added to the set $\mathcal{S}$, which may be deleted at some later point in time. 
Thus the amortized update time is $\O{1}$. 
\end{proof}
We now consider truly perfect $L_p$ sampling for $p>2$ on random order streams in the sliding window model. 
For truly perfect $L_p$ sampling on random order streams for integers $p>2$, the idea is to store consecutive blocks of $W^{1-\frac{1}{p-1}}$ elements in the stream, along with their corresponding timestamps. 
We then look for $p$-wise collisions within the block. 
The probability that a fixed group of $p$ positions are all item $i$ is $\frac{f_i}{W}\cdots\frac{f_i-p+1}{W-p+1}$. 

We must therefore again ``correct'' the sampling probability so that the probability of sampling item $i$ is proportional to $f_i^p$. 
To do this, we require the following fact so that we can write $f_i^p$ as a positive linear combination of the quantities $1$, $f_i$, $f_i(f_i-1)$, and so forth. 
\begin{lemma}
\lemlab{lem:falling:factorial}
For an integer $k\ge 0$, let $(x)_k$ denote the falling factorial so that $(x)_0=1$ and $(x)_k=x(x-1)\cdots(x-k+1)$. 
Let $S(n,k)$ denote the Stirling numbers of the second kind, e.g., the number of ways to partition a set of $n$ objects into $k$ non-empty subsets. 
Then for any integer $p\ge0$, we have $x^p=\sum_{k=0}^pS(p,k)(x)_k$. 
\end{lemma}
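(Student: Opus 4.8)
The plan is to establish the polynomial identity $x^p = \sum_{k=0}^{p} S(p,k)\,(x)_k$ by induction on $p$, using two classical recurrences: the ``absorption'' identity $x\cdot (x)_k = (x)_{k+1} + k\,(x)_k$ for falling factorials, and the defining recurrence $S(p+1,j) = j\,S(p,j) + S(p,j-1)$ for the Stirling numbers of the second kind. The base case $p=0$ is immediate, since $x^0 = 1 = S(0,0)\,(x)_0$. The falling-factorial identity is also immediate from $(x)_{k+1} = (x)_k\,(x-k)$, which gives $x\,(x)_k = (x)_k(x-k) + k(x)_k = (x)_{k+1} + k(x)_k$.

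For the inductive step I would assume $x^p = \sum_{k=0}^{p} S(p,k)(x)_k$, multiply through by $x$, and apply the absorption identity to each term:
\begin{align*}
x^{p+1} &= \sum_{k=0}^{p} S(p,k)\bigl((x)_{k+1} + k\,(x)_k\bigr)\\
&= \sum_{k=0}^{p} S(p,k)\,(x)_{k+1} + \sum_{k=0}^{p} k\,S(p,k)\,(x)_k .
\end{align*}
Reindexing the first sum via $j = k+1$ and merging it with the second (adopting the conventions $S(p,-1) = 0$ and $S(p,p+1) = 0$) produces $\sum_{j=0}^{p+1}\bigl(S(p,j-1) + j\,S(p,j)\bigr)(x)_j$, and the Stirling recurrence identifies the bracketed coefficient as $S(p+1,j)$, which closes the induction.

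As an alternative I would sketch the purely combinatorial proof: for a positive integer $x$, $x^p$ counts the functions $[p]\to[x]$, and sorting these functions by the set partition of $[p]$ into their fibers shows $x^p = \sum_{k} S(p,k)\,(x)_k$, because a function with image of size $k$ is exactly a partition of $[p]$ into $k$ nonempty blocks together with an injection of those blocks into $[x]$; since both sides are polynomials in $x$ of degree at most $p$ agreeing at every positive integer, they agree as polynomials. I do not expect a genuine obstacle here, as this is a textbook identity; the only points requiring a little care are the bookkeeping of the boundary terms in the reindexing step (the vanishing of $S(p,-1)$ and $S(p,p+1)$) and, in the combinatorial route, the routine upgrade from agreement at integers to an identity of polynomials --- which is all that is needed downstream, since the lemma is applied with $x$ set to the integer frequencies $f_i$.
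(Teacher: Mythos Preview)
Your proof is correct; both the inductive argument via the absorption identity and the combinatorial counting argument are standard and complete. The paper, however, does not actually prove this lemma: it is stated as a known fact about Stirling numbers and used directly in the subsequent \lemref{lem:lp:sample:prob} without justification. So your write-up supplies strictly more than the paper does.
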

Thus we can choose to accept the item $i$ with some positive probability if the first $k$ positions of the $p$ positions are all item $i$. 
Otherwise, we can proceed to the first $k+1$ positions, accept the item $i$ with some positive probability if they are all item $i$, and iterate.
We give the algorithm in full in \algref{alg:perfect:lp:random}.

\begin{algorithm}[!htb]
\caption{Truly perfect $L_p$ sampler algorithm for the sliding window model on random order streams for $p>2$.}
\alglab{alg:perfect:lp:random}
\begin{algorithmic}[1]
\Require{A stream of updates $u_1,u_2,\ldots,u_t$, a size $W$ for the sliding window. 
Let each $u_i\in[n]$ represent a single update to a coordinate of the underlying vector $f$.}
\State{$\mathcal{S}\gets\emptyset$}
\State{Store each disjoint block $E$ of $W^{1-\frac{1}{p-1}}$ consecutive elements.}
\For{each ordered $p$-tuple $(v_1,\ldots,v_p)$ of elements in $E$}
\For{$i=1$ to $i=p$}
\If{$v_1=\ldots=v_i$}
\State{Insert $v_1$ into $\mathcal{S}$ with probability $\frac{\alpha_i}{W^{p-1}}$.}
\EndIf
\EndFor
\EndFor
\If{there exists a $p$-tuple in $\mathcal{S}$ with an expired element}
\State{Delete the $p$-tuple from $\mathcal{S}$}
\EndIf
\If{$|\mathcal{S}|>2W^{1-\frac{1}{p-1}}$}
\State{Delete $W^{1-\frac{1}{p-1}}$ elements uniformly at random from $\mathcal{S}$.}
\EndIf
\State{Return a $p$-tuple from $\mathcal{S}$ uniformly at random.}
\end{algorithmic}
\end{algorithm}

\begin{lemma}
\lemlab{lem:lp:sample:prob}
There exist $\alpha_1,\ldots,\alpha_p$ that are efficiently computable so that for each $p$-tuple in $E$, the probability that \algref{alg:perfect:lp:random} samples $j\in[n]$ is $\frac{f_j^p}{W^p}$.
\end{lemma}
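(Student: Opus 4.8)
The plan is to isolate a single fixed ordered $p$-tuple of positions inside the active window, compute the probability that \algref{alg:perfect:lp:random} inserts a copy of a given index $j$ on account of that tuple, and then solve for the constants $\alpha_1,\ldots,\alpha_p$ that make this probability exactly $f_j^p/W^p$. The starting observation, exactly as in \lemref{lem:l2:sample:prob}, is that in a random-order stream a fixed tuple of distinct active positions is a uniform sample \emph{without replacement} from the $W$ active elements; hence for every $1\le i\le p$ the event $A_i=\{v_1=\cdots=v_i=j\}$ has probability
\[
\PPr{A_i}=\frac{f_j}{W}\cdot\frac{f_j-1}{W-1}\cdots\frac{f_j-i+1}{W-i+1}=\frac{(f_j)_i}{(W)_i},
\]
with $(x)_i$ the falling factorial of \lemref{lem:falling:factorial}. (When $i>f_j$ both sides vanish, and when $f_j=0$ every $\PPr{A_i}$ with $i\ge1$ is $0$, matching the requirement that $j$ is never output.)

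Next I would read the inner loop of \algref{alg:perfect:lp:random} as a single insertion attempt that selects an index $i\in[p]$ with probability $q_i:=\alpha_i/W^{p-1}$ and inserts $v_1$ iff $v_1=\cdots=v_i$, so that the probability that $j$ is inserted from this tuple is $\sum_{i=1}^p q_i\,\PPr{A_i}=\sum_{i=1}^p\frac{\alpha_i}{W^{p-1}}\cdot\frac{(f_j)_i}{(W)_i}$. By \lemref{lem:falling:factorial} applied to $x=f_j$, and using $S(p,0)=0$ for $p\ge1$, we have $f_j^p=\sum_{k=1}^p S(p,k)(f_j)_k$; since the falling factorials $(f_j)_1,\ldots,(f_j)_p$ are linearly independent as polynomials in $f_j$, it suffices to match coefficients. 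This forces $q_i/(W)_i=S(p,i)/W^p$, i.e.
\[
\alpha_i=\frac{S(p,i)\,(W)_i}{W}=S(p,i)\,(W-1)(W-2)\cdots(W-i+1),
\]
which gives $\sum_{i=1}^p q_i\,\PPr{A_i}=\frac{1}{W^p}\sum_{i=1}^p S(p,i)(f_j)_i=\frac{f_j^p}{W^p}$ as required. In particular $\alpha_1=1$ and $\alpha_p=(W-1)\cdots(W-p+1)$, which recovers the $L_2$ choice ($\alpha_1=1$, $\alpha_2=W-1$) implicit in \lemref{lem:l2:sample:prob}. Since Stirling numbers of the second kind and falling factorials are computable in time polynomial in $p$ and $\log W$, the $\alpha_i$ are efficiently computable.

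Finally I would verify that the $q_i=\alpha_i/W^{p-1}$ really form probabilities: for $i<p$, $q_i\le S(p,i)W^{i-1}/W^{p-1}=S(p,i)/W^{p-i}\le S(p,i)/W\le 1$ for $W$ large (here $p$ is fixed, so $S(p,i)$ is an absolute constant), while $q_p=(W-1)\cdots(W-p+1)/W^{p-1}<1$; and $\sum_{i=1}^p q_i=\frac{1}{W^p}\sum_{i=1}^p S(p,i)(W)_i=\frac{W^p}{W^p}=1$ by \lemref{lem:falling:factorial} with $x=W$, so the "pick one $i$" reading of the loop is well defined with no leftover mass. The only genuine obstacle is this bookkeeping of the loop semantics: one must ensure that a single tuple contributes at most one copy of $j$ to $\mathcal{S}$ (equivalently, treat $\mathcal{S}$ as a multiset under the expected-multiplicity interpretation), since otherwise several simultaneous successful insertions on an all-$j$ tuple would over-count $j$ and the clean identity above would hold only up to lower-order terms; everything else reduces to the falling-factorial identity and elementary coefficient matching.
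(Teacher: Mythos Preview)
Your argument is correct and follows essentially the same route as the paper: compute $\PPr{v_1=\cdots=v_i=j}=(f_j)_i/(W)_i$ from random ordering, invoke \lemref{lem:falling:factorial} to write $f_j^p=\sum_i S(p,i)(f_j)_i$, and read off the $\alpha_i$. The paper records the constants as $\alpha_q=S(p,q)\,(W)_q/W^{p}$, which differs from your $\alpha_i=S(p,i)\,(W)_i/W$ only by the $W^{p-1}$ normalization already present in the algorithm; your bookkeeping is the more careful of the two, and your extra checks that the $q_i$ are in $[0,1]$ and sum to $1$ (via the same identity at $x=W$) go beyond what the paper verifies.
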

\begin{proof}
Let $v_1,\ldots,v_p$ be an ordered $p$-tuple in $E$. 
The probability that $v_1=\ldots=v_p=j$ for some $j\in[n]$ is $\frac{f_j}{W}\cdots\frac{f_j-p+1}{W-p+1}$. 
Similarly the probability that $v_1=\ldots=v_q=j$ for some $q\in[p]$ is $\frac{f_j}{W}\cdots\frac{f_j-q+1}{W-q+1}$. 
Thus by \lemref{lem:falling:factorial}, $\sum_{q=0}^p\alpha_q\frac{f_j}{W}\cdots\frac{f_j-q+1}{W-q+1}=\frac{f_j^p}{W^p}$ for $\alpha_q=S(p,q)\frac{W\cdots(W-q+1)}{W^p}$.
\end{proof}

\begin{lemma}
\lemlab{lem:lp:output}
\algref{alg:perfect:lp:random} outputs some sample with probability at least $\frac{2}{3}$.
\end{lemma}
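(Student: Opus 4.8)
The plan is to follow the second-moment argument of \lemref{lem:l2:output} (the case $p=2$), replacing adjacent pairs by ordered $p$-tuples of distinct positions lying inside a common block, and replacing the elementary variance bound used there by the monotonicity of $\ell_q$-norms. Throughout I restrict attention to $p$-tuples that lie entirely inside the active window; by \lemref{lem:lp:sample:prob} each such tuple $T$ inserts the item $j$ into $\mathcal{S}$ with probability $f_j^p/W^p$, hence inserts \emph{some} item with probability $F_p/W^p$ (a valid probability since $F_p\le(\sum_j f_j)^p=W^p$).

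First I would do the counting. The active window, of length $W$, is carved into $\Theta(W^{1/(p-1)})$ blocks of size $b=W^{1-1/(p-1)}$, and each block yields $\Theta(b^p)$ ordered $p$-tuples of distinct positions, so the window contains $N=\Theta(W^{p-1})$ relevant tuples. Writing $X=\sum_T X_T$ for the number of samples inserted by these tuples, linearity gives $\Ex{X}=N\cdot F_p/W^p=\Theta(F_p/W)$, which is $\Omega(1)$ because $F_p\ge F_1=W$ for integer frequencies and $p\ge1$. This also handles the expiration/deletion bookkeeping as in \lemref{lem:l2:output}: random deletions fire only when $|\mathcal{S}|>2b$ and leave $|\mathcal{S}|\ge b+1$, and active samples never expire, so $X\ge1$ already forces $\mathcal{S}\neq\emptyset$ at the end.

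The crux is bounding $\Ex{X^2}=\sum_{T,T'}\Ex{X_TX_{T'}}$, which I would split by $k=|T\cap T'|\in\{0,1,\dots,p\}$, the number of shared positions. For $k=0$ (disjoint tuples), conditioning on $X_T=1$ deletes at most $p$ elements from the window and so perturbs every frequency by at most $p$; hence disjoint pairs contribute $\Theta(\Ex{X}^2)$, the dominant term. For $k=p$ we get the diagonal $\sum_T\Ex{X_T}=\Ex{X}=\O{\Ex{X}^2}$ using $\Ex{X}=\Omega(1)$. For $1\le k\le p-1$, two tuples sharing $k$ positions must lie in a common block and, because the shared coordinates carry a common item, such a pair can contribute only if all $2p-k$ of its positions carry the same item, so $\Ex{X_TX_{T'}}=\O{F_{2p-k}/W^{2p-k}}$; carrying out the count of how many such pairs there are shows that, altogether, pairs sharing $k$ positions contribute $\O{F_{2p-k}\big/W^{(2p-1-k)/(p-1)}}$ to $\Ex{X^2}$. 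This is $\O{\Ex{X}^2}=\O{F_p^2/W^2}$ precisely when $F_{2p-k}\le F_p^2\big/W^{(k-1)/(p-1)}$, and I would prove this last inequality from $\lVert f\rVert_{2p-k}\le\lVert f\rVert_p$ (norm monotonicity, legitimate since $2p-k\ge p$), i.e.\ $F_{2p-k}\le F_p^{(2p-k)/p}=F_p^2\,F_p^{-k/p}$, together with $F_p\ge W$ and the elementary fact $k/p\ge(k-1)/(p-1)$ for $k\le p$.

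Summing the three cases yields $\Ex{X^2}=\O{\Ex{X}^2}$ with an implied constant depending only on $p$, so the Paley-Zygmund inequality gives $\PPr{X\ge1}=\Omega(1)$; running a constant number of independent copies (a constant-factor space overhead) amplifies this to the claimed $\tfrac23$. The step I expect to be the main obstacle is exactly the $1\le k\le p-1$ part of the second moment: getting the exponent $(2p-1-k)/(p-1)$ right in the count of tuple-pairs sharing $k$ positions, and recognizing that the resulting requirement $F_{2p-k}\le F_p^2/W^{(k-1)/(p-1)}$ is delivered by the $\ell_q$-norm inequality together with $F_p\ge F_1=W$. For $p=2$ only $k\in\{0,1,2\}$ arise and the $k=1$ term is immediate, which is why the general case needs the additional norm argument.
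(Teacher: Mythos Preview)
Your approach is the paper's approach: count the ordered $p$-tuples, compute $\Ex{X}=\Theta(F_p/W)$, bound $\Ex{X^2}$, and finish with Paley--Zygmund. Where you differ is in rigor. The paper's proof is four sentences; for the variance it simply asserts, ``by a similar argument as \lemref{lem:l2:output}, $\Ex{X_{i_q}\mid X_{i_1},\dots,X_{i_{q-1}}}$ is within a constant multiple of $\Ex{X_{i_q}}$.'' That analogy is only clean when the two tuples are position-disjoint (which is automatic in the $p=2$ algorithm because the pairs $(u_{2i-1},u_{2i})$ are disjoint by construction, but is \emph{not} automatic here since tuples inside a block can share positions). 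Your case split on $k=|T\cap T'|$ together with the $\ell_q$-norm inequality $F_{2p-k}\le F_p^{(2p-k)/p}$ and $F_p\ge F_1=W$ is exactly what is needed to control the $1\le k\le p-1$ contributions, and the elementary check $k/p\ge (k-1)/(p-1)$ closes the exponent bookkeeping. This is a genuine addition to the paper's sketch.

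One small point to tighten: your sentence ``such a pair can contribute only if all $2p-k$ of its positions carry the same item'' is literally true only for the dominant acceptance event (the $i=p$ branch in the inner loop of \algref{alg:perfect:lp:random}); the branches with $i<p$ allow $X_T=1$ without a full $p$-collision. Those branches, however, carry acceptance weight $\alpha_i=\Theta(W^{-(p-i)})$, so each $(i,i')$ cross-term is dominated by the $(p,p)$ term and the bound $\Ex{X_TX_{T'}}=\O{F_{2p-k}/W^{2p-k}}$ survives. A one-line remark to this effect would make the step airtight.
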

\begin{proof}
By \lemref{lem:lp:sample:prob}, the probability that \algref{alg:perfect:l2:random} samples some $j\in[n]$ for a $p$-tuple in $E$ is $\frac{F_p}{W^p}$, where $F_p=\sum_{j=1}^n f_j^p$. 
Let $X_1,\ldots,X_m$ be indicator variables so that $X_i=1$ if the $i\th$ $p$-tuple in the window in some fixed ordering produced a sample and $X_i=0$ otherwise and let $X=\sum X_i$. 
Then $m=\left(\frac{W}{W^{1-\frac{1}{p-1}}}\right)\left(W^{1-\frac{1}{p-1}}\right)^p=W^{p-1}$. 
Thus the expected number of samples $X$ inserted into $\mathcal{S}$ is $\Theta\left(\frac{F_p}{W}\right)$. 

Moreover, we have $\Ex{X^2}=\sum\Ex{X_{i_1}\ldots X_{i_p}}$, where $i_1,\ldots,i_p$ are contained in one of the disjoint consecutive block of $W^{1-\frac{1}{p-1}}$ elements. 
By a similar argument as \lemref{lem:l2:output}, $\Ex{X_{i_q}|X_{i_1},\ldots,X_{i_{q-1}}}$ is within a constant multiple of $\Ex{X_{i_q}}$. 
Thus $\Ex{X^2}$ is within a constant factor of $\Ex{X}^2$ so then \algref{alg:perfect:lp:random} outputs some sample with constant probability by the Paley-Zygmund inequality.
\end{proof}

We now show the correctness of our general sliding window algorithm. 
\begin{theorem}
Let $p>2$ be a fixed integer.
There exists a one-pass sliding window algorithm for random order insertion only streams that outputs index $i\in[n]$ with probability $\frac{f_i^p}{\sum_{j=1}^n f_j^p}$ with probability at least $\frac{2}{3}$, i.e., the algorithm is a truly perfect $L_p$ sampler, using $\O{W^{1-\frac{2}{p}}\log n}$ bits of space. 
\end{theorem}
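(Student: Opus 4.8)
The plan is to analyze \algref{alg:perfect:lp:random} directly, reusing \lemref{lem:lp:sample:prob} and \lemref{lem:lp:output} and following the same template as the $p=2$ argument behind \thmref{thm:small:p}. First I would establish exact correctness. Partition the active window into disjoint blocks $E$ of the prescribed length and, inside each block, run the rejection pass over ordered $p$-tuples with acceptance coefficients $\alpha_q=S(p,q)(W)_q/W^p$; by \lemref{lem:falling:factorial} these are exactly the coefficients that turn the falling-factorial collision probabilities $(f_j)_q/(W)_q$ of a uniformly random tuple into $f_j^p/W^p$, so by \lemref{lem:lp:sample:prob} every processed tuple inserts coordinate $j$ into $\mathcal{S}$ with probability exactly $f_j^p/W^p$, with no additive slack. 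Summing over all $\Theta(W^{p-1})$ tuples (the count established in the proof of \lemref{lem:lp:output}), the expected number of insertions of $j$ is exactly proportional to $f_j^p$. Two bookkeeping points finish correctness: each entry of $\mathcal{S}$ carries the timestamp of the position that created it, so deleting expired entries leaves only copies inside the current window; and uniform thinning of $\mathcal{S}$ together with the final uniform choice preserve the property that the returned coordinate has law exactly $f_j^p/F_p$. Hence the sampler is truly perfect, i.e. $\eps=\gamma=0$.

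Next I would control the failure probability. Let $X$ be the number of active, un-thinned insertions present at the end of the window; linearity gives $\Ex{X}=\Theta(F_p/W)$, and on an insertion-only window $F_p=\sum_j f_j^p\ge\sum_j f_j=W$, so $\Ex{X}=\Omega(1)$. The second-moment bound inside the proof of \lemref{lem:lp:output} — conditioning on one block and observing that the conditional per-block sampling probability of each coordinate changes only by a constant factor once two positions are removed — gives $\Ex{X^2}=\Theta(\Ex{X}^2)$, so Paley--Zygmund yields $\PPr{X\ge 1}=\Omega(1)$. Since thinning never empties $\mathcal{S}$, a constant number of independent parallel copies drives the probability of not outputting $\fail$ above $\tfrac{2}{3}$, and conditioned on output the distribution is exact by the previous paragraph.

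The remaining, and main, issue is the space bound $\O{W^{1-2/p}\log n}$. The state is $\mathcal{S}$ (capped, each entry a coordinate plus a timestamp, $\O{\log n}$ bits) together with whatever a single block's rejection pass needs, and the crux is that a block need not be stored in full. The length-$1$ prefix term can be produced by reservoir-sampling one position per block — $\O{\log n}$ state — and this already supplies $\Theta(1)$ expected insertions on a ``flat'' window; the length-$q$ prefix terms for $2\le q\le p$ only ever touch coordinates that recur at least twice inside a block, and on a random-order stream the expected number of such coordinates in a block of the prescribed length is $\tO{(B/W)^2 F_2}$. Peeling the $\O{W^{1-2/p}}$ largest coordinates into a deterministic counter that then contributes its own exact $\propto f_j^p$ mass — in the spirit of the $\misragries$-based correction used for \thmref{thm:framework3} — bounds $F_2$ over the remaining light coordinates and keeps the auxiliary within-block dictionary at size $\O{W^{1-2/p}}$. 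Balancing the block length against this count of within-block recurrences, and certifying that the peeling perturbs neither the exact output law nor the $\Omega(1)$ success probability, is the hard part; everything else reduces to the telescoping and second-moment computations already carried out for $p=2$ and for \frameref{frame:G:sampler}.
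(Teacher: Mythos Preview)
Your first two paragraphs track the paper's argument exactly: correctness via \lemref{lem:lp:sample:prob} and the Stirling-number coefficients, and constant success probability via \lemref{lem:lp:output} and Paley--Zygmund. That part is fine.

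The divergence is entirely in the space analysis. The paper's proof is one line: \algref{alg:perfect:lp:random} stores the current block of $W^{1-\frac{1}{p-1}}$ consecutive elements and the set $\mathcal{S}$, which is explicitly capped at $2W^{1-\frac{1}{p-1}}$ entries; each entry is $\O{\log n}$ bits, so the total is $\O{W^{1-\frac{1}{p-1}}\log n}$. There is no peeling, no reservoir sampling within blocks, no auxiliary dictionary. The exponent $1-\tfrac{2}{p}$ in the theorem statement you were given is a typo; compare with \thmref{thm:large:p} (the restated streaming version), the algorithm description, and the proof text itself, all of which use $1-\tfrac{1}{p-1}$. You are chasing a bound the paper never actually proves.

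Beyond being unnecessary, your third paragraph has real gaps if taken at face value. Splitting off heavy coordinates via a $\misragries$-style structure gives only approximate frequencies, so the claim that the peeled part ``contributes its own exact $\propto f_j^p$ mass'' cannot hold without exact counts, and any error there destroys the truly-perfect guarantee. Moreover, your bound on the number of within-block recurring coordinates relies on an $F_2$ estimate over the light part, but you have no deterministic control of that quantity, and a randomized estimate again introduces additive error. Finally, you yourself flag the balancing step as ``the hard part'' and do not carry it out. The correct fix is simply to replace $1-\tfrac{2}{p}$ by $1-\tfrac{1}{p-1}$ and invoke the direct storage argument.
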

\begin{proof}
By \lemref{lem:lp:output}, \algref{alg:perfect:lp:random} outputs some sample with probability at least $\frac{2}{3}$. 
Conditioned on outputting a sample, \algref{alg:perfect:lp:random} outputs index $i\in[n]$ with probability $\frac{f_i^p}{\sum_{j=1}^n f_j^p}$ by \lemref{lem:lp:sample:prob}. 
Since $\mathcal{S}$ maintains $\O{W^{1-\frac{1}{p-1}}}$ samples, each using $\O{\log n}$ bits for $\log W=\O{\log n}$, then \algref{alg:perfect:lp:random} uses $\O{W^{1-\frac{1}{p-1}}\log n}$ bits of space. 
\end{proof}
We remark that the time complexity of \algref{alg:perfect:lp:random} can be significantly improved for insertion-only data streams. 
Rather than enumerating over all possible $p$-tuples in a block of $\O{W^{1-\frac{1}{p-1}}}$ samples, it suffices to maintain the frequency $g_i$ of each distinct coordinate $i\in[n]$ in the block, in order to simulate the sampling process of the $p$-wise collisions. 
Namely, we observe that although the final step of \algref{alg:perfect:lp:random} can be viewed as selecting a uniformly random element across all tuples ever inserted into $\mathcal{S}$, the same probability distribution holds for any sample uniformly inserted into $\mathcal{S}$ by a single block. 
Thus, we can equivalently instead sample a uniformly random element across all tuples inserted into $\mathcal{S}$ by each block, if the block inserts anything into $\mathcal{S}$ in the original process. 
That is, we can compute the size $k\in[p]$ of the $k$-tuple that represents the uniformly random element inserted into $\mathcal{S}$ by each block, which only requires knowledge of the frequencies $g_i$ for the block. 
Updating the frequency of each coordinates and subsequently updating the marginal probabilities for each index uses $\O{1}$ time per update.  
Thus, we obtain the following for insertion-only streams:
\thmlargep*
\section{Strict Turnstile Algorithms}
\applab{app:strict:turnstile}
In this section, we show that our techniques can be also be extended to obtaining truly perfect $L_p$ samplers in the strict turnstile model, where updates to each coordinate may be both negative and positive integers but the underlying frequency vector at each point in time consists of non-negative coordinates. 
We assume the magnitude of each update in the stream to be at most $M=\poly(n)$. 

\begin{theorem}[Theorem 4 in~\cite{Ganguly08}]
\thmlab{thm:ksparse:test}
There exists a deterministic streaming algorithm that outputs whether the underlying frequency vector in $\{0,\ldots,M\}^n$ has more than $k$ nonzero coordinates or fewer than $4k$ nonzero coordinates. 
The algorithm uses $\O{k\log(n/k)\log(Mn)}$ bits of space and $\O{\log^2(n/k)}$ amortized time per arriving update.
\end{theorem}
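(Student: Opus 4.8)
The plan is to implement the test as a small deterministic linear sketch made of exact bucketed counters, exploiting the one special feature of the strict turnstile model: at every point in time the vector $f$ is coordinate-wise non-negative and bounded, $f\in\{0,\dots,M\}^n$, so a bucket sum is zero \emph{if and only if} no nonzero coordinate lands in that bucket — there is never any cancellation. Concretely, I would fix $t = \O{\log(n/k)}$ functions $h_1,\dots,h_t:[n]\to[b]$ with $b=\Theta(k)$, taken from an explicit, cheap-to-describe family such as residues modulo distinct primes of size $\Theta(k)$, and maintain for each $j\in[t]$, $r\in[b]$ the counter $c_{j,r}=\sum_{i:\,h_j(i)=r} f_i$. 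Each counter is in $\{0,\dots,Mn\}$, hence $\O{\log(Mn)}$ bits, and an update $(i,\Delta)$ is processed by adding $\Delta$ to the $t$ counters $c_{j,h_j(i)}$; with $h_j(i)=i\bmod p_j$ the modular reductions and counter arithmetic are cheap. At query time the algorithm reports ``more than $k$ nonzeros'' exactly when some row is ``overfull'', i.e.\ $\bigl|\{r:c_{j,r}\neq 0\}\bigr|>k$ for some $j$.

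Correctness splits into the two sides of the promise gap. If $\|f\|_0\le k$, then for every $j$ the nonzero coordinates occupy at most $\|f\|_0\le k$ buckets, so no row is ever overfull and the algorithm (correctly) does not report ``more than $k$''; this half uses only non-negativity. For the other side one needs the defining property of the hash family: for every $S\subseteq[n]$ with $|S|\ge 4k$ there is a $j$ with $|h_j(S)|>k$. I would prove this by a collision-counting / deletion argument: if on some fixed $2k$-element subset $T\subseteq S$ every $h_j$ occupied at most $k$ buckets, then $T$ would contain at least $k+1$ colliding pairs under each $h_j$; but each pairwise difference is a nonzero integer below $n$, hence divisible by only $\O{\log n/\log k}$ of the prime moduli, and $T$ has only $\O{k^2}$ pairs, which caps how many rows the collisions can ``explain''. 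This already forces some row to be overfull once enough rows are present, and Ganguly's refined version of exactly this design (the CR-precis structure, using residues modulo the first several primes together with a few auxiliary index-weighted sums) brings the total number of counters down to $\O{k\log(n/k)}$, which is where the stated $\O{k\log(n/k)\log(Mn)}$ space and the $\O{\log^2(n/k)}$ amortized per-update time come from.

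\textbf{The main obstacle} is making the counter budget \emph{linear} rather than quadratic in $k$: the naive one-level design above uses $\Theta(k)$ rows of width $\Theta(k)$, i.e.\ $\Theta(k^2)$ counters, and squeezing this down to $\O{k\log(n/k)}$ counters — while keeping the moduli, and therefore the per-update modular arithmetic, cheap enough for the $\O{\log^2(n/k)}$ amortized time bound — is the technical heart of the construction. A secondary point is that the algorithm must be genuinely deterministic, so the probabilistic existence of a good hash family has to be replaced by the explicit CRT-based choice, and one must verify the update-time accounting: each arriving update touches the $\O{\log(n/k)}$ sub-structures, and the modular reductions plus additions on $\O{\log(Mn)}$-bit words contribute the remaining factor.
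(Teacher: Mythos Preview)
The paper does not prove this theorem at all: it is imported verbatim as ``Theorem~4 in~\cite{Ganguly08}'' and used as a black box. The only thing the paper adds is the one-paragraph remark immediately following the statement, which explains the \emph{amortized} update time claim: as stated in \cite{Ganguly08} the per-update cost is $\O{k\log^2(n/k)}$ because each update triggers a fast Vandermonde matrix--vector multiplication; the paper observes that by buffering the most recent $k$ updates and spreading the multiplication over the next $k$ steps one gets the $\O{\log^2(n/k)}$ amortized bound. That batching observation is the entirety of the paper's ``proof'' here.

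Your sketch is therefore not really comparable to anything in the paper---you are attempting to reconstruct Ganguly's argument, which the paper does not reproduce. Two comments on the reconstruction itself. First, the paper's remark about ``fast Vandermonde matrix--vector multiplication'' signals that the actual data structure in \cite{Ganguly08} maintains power sums $\sum_i f_i\, i^j$ (a syndrome/Reed--Solomon style sketch), not the prime-residue bucket counters you describe; your CR-precis outline is a sibling construction from the same line of work, but it is not the one the amortization remark is about, and your per-update accounting (modular reductions) does not match the Vandermonde-based accounting the paper invokes. Second, you correctly identify the real difficulty---getting the counter budget down from $\Theta(k^2)$ to $\O{k\log(n/k)}$---but you do not actually carry it out; you defer to ``Ganguly's refined version'' at exactly the point where the work lies. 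So as a standalone proof your proposal has a genuine gap at its self-declared main obstacle, and as a comparison to the paper there is nothing to compare against beyond the batching trick, which your write-up does not mention.
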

We remark that \thmref{thm:ksparse:test} uses $\O{k\log^2(n/k)}$ time per arriving update as stated in \cite{Ganguly08} due to performing a fast Vandermonde matrix-vector multiplication at each step. 
However, if we instead batch updates by storing the most recent $k$ updates and then amortize performing the fast Vandermonde matrix-vector multiplication over the next $k$ steps, then the amortized running time is $\O{\log^2(n/k)}$ as stated in \thmref{thm:ksparse:test}. 
\begin{theorem}[\cite{GangulyM08, Ganguly08}]
\thmlab{thm:ksparse:recover}
There exists a deterministic streaming algorithm uses space $\O{k\log(Mn)\log(n/k)}$ and update time $\polylog(n/k)$ and recovers an underlying $k$-sparse frequency vector in $\{-M,\ldots,M\}^n$. 
\end{theorem}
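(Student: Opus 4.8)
The plan is to realize a deterministic linear sketch from an explicit combinatorial ``isolation'' structure and to recover the sparse vector from it by iterated decoding. Fix a $D$-left-regular bipartite graph $\Gamma\colon[n]\to\binom{[B]}{D}$, whose left vertices are the $n$ coordinates and whose $B$ right vertices we call ``buckets'', chosen to be an explicitly constructible unbalanced expander of order $2k$: for every $S\subseteq[n]$ with $|S|\le 2k$ one has $|\Gamma(S)|\ge(1-\epsilon)D|S|$ for a fixed constant $\epsilon<\tfrac14$. Explicit number-theoretic constructions (the CR-precis design of \cite{GangulyM08}, and \cite{Ganguly08}) realize such a structure with $D=\O{\log(n/k)}$ and $B=\O{kD}=\O{k\log(n/k)}$, together with a neighbor oracle evaluable in $\poly(\log(n/k))$ time. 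For each bucket $b\in[B]$ the algorithm maintains a constant number $2s=\O{1}$ of power-sum counters
\[ P_b^{(t)} \;=\; \sum_{\,i\,:\,b\in\Gamma(i)} i^{\,t}\, f_i,\qquad t=0,1,\ldots,2s-1. \]
Each $P_b^{(t)}$ is a linear function of $f$ and so is maintainable under arbitrary turnstile updates; since the stream has $\poly(n)$ updates of magnitude at most $M=\poly(n)$, every intermediate coordinate is $\poly(n)$-bounded and $i^{t}\le n^{2s}$, so each counter fits in $\O{\log(Mn)}$ bits. Processing an update $(i,\Delta)$ adds $i^{t}\Delta$ to $P_b^{(t)}$ for the $D$ buckets $b\in\Gamma(i)$ and each $t$, costing $\polylog(n/k)$ time.

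First I would isolate the recovery guarantee. Each bucket is, via $P_b^{(0)},\ldots,P_b^{(2s-1)}$ (read over a prime field larger than every quantity involved), a Prony / Berlekamp--Massey decoder that exactly reconstructs the content of $b$ whenever at most $s$ coordinates hash into it, and reports whether this is so. Correctness of the overall decoder follows from the unique-neighbor property implied by expansion: for $\epsilon<\tfrac14$, every nonempty $S'\subseteq[n]$ with $|S'|\le 2k$ contains a coordinate $i$ owning a \emph{private} bucket --- a neighbor of $i$ touched by no other vertex of $S'$. Hence the peeling decoder run on the final counters --- repeatedly find a bucket that certifies residual sparsity $\le 1$, read off the single $(i,f_i)$, subtract $i$'s contribution from its $D$ buckets, and iterate --- never stalls while the residual support (always a subset of $\mathrm{supp}(f)$, of size $\le k$) is nonzero, so after at most $k$ rounds it has output exactly $\mathrm{supp}(f)$ with the correct values.

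Two points then need to be pinned down, and they carry the real weight. The first, and the main obstacle, is the \emph{explicit, randomness-free} construction of $\Gamma$ with left-degree $D=\O{\log(n/k)}$, $B=\O{kD}$ buckets, the order-$2k$ expansion holding for \emph{all} supports simultaneously, and an efficient neighbor oracle; this is precisely the content of \cite{GangulyM08,Ganguly08}, and it is this combinatorial object --- not any sketching identity --- that forces the $\log(n/k)$ factor in the space bound. The second is the certification step: with only a constant number of power sums, a bucket holding several residual nonzeros could in principle masquerade as $1$-sparse, so one augments each bucket either with a couple of extra power sums (using the Hankel-rank conditions to reject all but genuinely $\le 1$-sparse content, given the bucket-load control furnished by list-disjunctness of $\Gamma$) or with an $\O{1}$-word algebraic fingerprint of its hashed index set; in either case the per-bucket cost stays $\O{1}$.

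Finally I would total the resources. The structure stores $B=\O{k\log(n/k)}$ buckets, each with $\O{1}$ integer counters of $\O{\log(Mn)}$ bits, for $\O{k\log(Mn)\log(n/k)}$ bits in all; each update touches $D=\O{\log(n/k)}$ buckets with $\O{1}$ arithmetic and a $\poly(\log(n/k))$-time neighbor evaluation, so update time is $\polylog(n/k)$; and the offline recovery runs in $\poly(k,\log n)$ time. Since the sketch is linear it is oblivious to the order and signs of the updates, so the guarantee holds over any (strict) turnstile stream whose final frequency vector is $k$-sparse with entries in $\{-M,\ldots,M\}$, giving \thmref{thm:ksparse:recover}.
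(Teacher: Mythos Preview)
This theorem is cited from \cite{GangulyM08,Ganguly08} and the paper does not give a self-contained proof; it only appends a short remark explaining the update-time bound. Your sketch is correct and matches the approach that remark indicates: an explicit bipartite lossless expander with $\O{k\log(n/k)}$ right vertices and small left degree, per-bucket linear counters, and peeling decoding via the unique-neighbor property. The only differences are cosmetic: the paper attributes the explicit expander to \cite{GuruswamiUV09} (with left degree $\polylog(n/k)$) rather than to the CR-precis construction, and it obtains the $\polylog(n/k)$ update time as an \emph{amortized} bound by batching the most recent $k$ updates and spreading the counter-update work over the next $k$ steps, whereas you argue $\polylog(n/k)$ per update directly from the left degree and an efficient neighbor oracle.
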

We similarly remark that \thmref{thm:ksparse:recover} as stated in \cite{GangulyM08, Ganguly08} uses $\O{k\log(n/k)}$ time per arriving update due to maintaining counters tracking the degrees of $\O{k\log(n/k)}$ vertices on the right side of a regular bipartite approximate lossless expander graph, where each vertex on the left corresponds to a coordinate of the universe and has degree $\polylog(n/k)$. 
Moreover, the lossless expander graph is explicit, so computation of the neighborhood of each vertex uses $\polylog(n/k)$ time~\cite{GuruswamiUV09}. 
Thus if we again batch updates by storing the most recent $k$ updates and then amortize updating the degrees of the $\O{k\log(n/k)}$ vertices on the right side of the graph, then the amortized running time is $\polylog(n/k)$ as stated in \thmref{thm:ksparse:recover}.  

By setting $k=2\sqrt{n}$ in \thmref{thm:ksparse:recover}, we can then form a set $T$ of up to $8\sqrt{n}$ unique nonzero coordinates in the frequency vector, analogous to \algref{alg:F0:simple}. 
Similarly, setting $k=2\sqrt{n}$ in \thmref{thm:ksparse:test} tests whether the sparsity of the vector is at most $8\sqrt{n}$, in which case a random coordinate of $T$ can be returned, or the sparsity of the vector is at least $2\sqrt{n}$, in which case an element of a random subset $S$ of size $2\sqrt{n}$ as defined in \algref{alg:F0:simple} will be a nonzero coordinate of the vector with constant probability. 
Thus we obtain the same guarantees as \thmref{thm:truly:perfect:F0} for a strict turnstile stream. 

\begin{theorem}
\thmlab{thm:truly:perfect:F0:turnstile}
Given $\delta\in(0,1)$, there exists a truly perfect $F_0$ sampler on strict turnstile streams that uses $\O{\sqrt{n}\log^2 n\log\frac{1}{\delta}}$ bits of space and $\polylog n\log\frac{1}{\delta}$ update time and succeeds with probability at least $1-\delta$.  
\end{theorem}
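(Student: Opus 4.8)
The plan is to adapt the insertion-only construction of \algref{alg:F0:simple} to the strict turnstile model, replacing the two ingredients that break under deletions --- ``store the first $\sqrt{n}$ distinct items'' and ``decide whether $F_0 \le \sqrt{n}$'' --- by the \emph{deterministic} sparse-recovery and sparsity-testing primitives of \cite{GangulyM08,Ganguly08}, since any randomized primitive with a nonzero failure probability would reintroduce additive error and destroy true perfectness.

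First, I would instantiate \thmref{thm:ksparse:recover} with $k = 8\sqrt{n}$ to maintain a structure that, at the end of the stream, exactly recovers the frequency vector whenever its support has size at most $8\sqrt{n}$; call the recovered support $T$. In parallel, I run \thmref{thm:ksparse:test} with $k = 2\sqrt{n}$ to deterministically distinguish $F_0 \le 8\sqrt{n}$ from $F_0 \ge 2\sqrt{n}$, and I draw a uniformly random set $S \subseteq [n]$ with $|S| = 2\sqrt{n}$, maintaining an exact counter for every coordinate of $S$; let $U \subseteq S$ be the coordinates of $S$ that are nonzero at the end of the stream.

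At termination: if the tester reports sparsity $\le 8\sqrt{n}$, then $T$ is exactly the support, so I output a uniformly random element of $T$; otherwise $F_0 \ge 2\sqrt{n}$, so I output a uniformly random element of $U$, or \fail if $U = \emptyset$. Both branches yield a truly perfect sample. In the sparse branch this is immediate. In the dense branch, $S$ is independent of the stream and symmetric over all of $[n]$, so $\PPr{i \in S}$ is the same for every support coordinate $i$, and by symmetry so is $\Ex{1/|U| \mid i \in S}$; since $i \in S$ forces $i \in U \ne \emptyset$, this gives $\PPr{\text{output} = i} = \PPr{i \in S}\,\Ex{1/|U| \mid i \in S}$, identical for all support coordinates --- exactly uniform, with zero additive error. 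The overlap region $2\sqrt{n} \le F_0 \le 8\sqrt{n}$ is benign: if the tester says ``sparse'' then $T$ is still exact, and if it says ``dense'' then $F_0 \ge 2\sqrt{n}$ and the $S$-argument still applies. The sole failure event is $U = \emptyset$ in the dense branch, of probability at most $(1 - 2\sqrt{n}/n)^{\sqrt{n}} \le 1/e$; running $\O{\log\frac{1}{\delta}}$ independent copies and outputting from any successful one drives the overall failure probability below $\delta$.

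Finally, the accounting. With $k = \Theta(\sqrt{n})$ and $M = \poly(n)$, \thmref{thm:ksparse:recover} and \thmref{thm:ksparse:test} each use $\O{k\log(n/k)\log(Mn)} = \O{\sqrt{n}\log^2 n}$ bits and have amortized update time $\polylog(n/k) = \polylog n$ via the batching remark (store the last $k$ updates and amortize the expensive Vandermonde / expander-degree recomputations over the next $k$ steps, using the explicit expanders of \cite{GuruswamiUV09}); the counters for the coordinates of $S$ add $\O{\sqrt{n}\log n}$ bits and $\O{1}$ update time. Multiplying by the $\O{\log\frac{1}{\delta}}$ repetitions gives space $\O{\sqrt{n}\log^2 n\log\frac{1}{\delta}}$ and update time $\polylog n\log\frac{1}{\delta}$, matching the statement; since a winning branch holds an exact counter for the sampled coordinate, we can also report $f_i$. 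I expect the one point that merits care --- rather than the routine resource bookkeeping --- to be the symmetry argument showing the dense-branch output is \emph{exactly} uniform despite the recovery and test primitives being guaranteed only at the two ends of the sparsity range; otherwise the proof is a faithful transcription of that of \thmref{thm:truly:perfect:F0} with turnstile-robust primitives substituted in.
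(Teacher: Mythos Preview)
Your proposal is correct and follows essentially the same approach as the paper: replace the two insertion-only primitives in \algref{alg:F0:simple} by the deterministic sparsity test (\thmref{thm:ksparse:test}) and sparse recovery (\thmref{thm:ksparse:recover}) with $k=\Theta(\sqrt{n})$, keep the random set $S$ with exact counters, and repeat $\O{\log\frac{1}{\delta}}$ times. Your write-up is in fact more careful than the paper's terse justification --- you correctly set the recovery parameter to $8\sqrt{n}$ (the paper writes $2\sqrt{n}$ for both, which is slightly sloppy since the sparse branch must handle up to $8\sqrt{n}$ nonzeros), and you make explicit the symmetry argument showing the dense branch is exactly uniform and the handling of the overlap region $2\sqrt{n}\le F_0\le 8\sqrt{n}$, both of which the paper leaves implicit.
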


We now show that we can get truly perfect $L_p$ samplers in strict turnstile streams using $\O{n^{1-1/p+\gamma}\log n}$ space over $\O{1/\gamma}$ constant number of passes, where $\gamma>0$ is a trade-off parameter. 
This shows a separation for $L_p$ samplers between general turnstile streams and strict turnstile streams.

We first consider a truly perfect $L_1$ sampler. 
The idea is just to partition the universe $[n]$ into $n^{\gamma}$ chunks so that for each $i\in[\gamma]$, the $i$-th chunk corresponds to the coordinates between $(i-1)n^{\gamma}+1$ and $in^{\gamma}$. 
Our algorithm stores the sum of counts on each chunk in the first pass, which requires $\O{n^{\gamma}\log n}$ bits of space. 
We then sample a chunk proportional to this sum, so that the total universe size has decreased to be size $n^{1-\gamma}$. 
In the subsequent passes, we recursively partition the decreased universe into $n^{\gamma}$ chunks as before and repeat. 
Thus after repeating $\O{1/\gamma}$, we sample a single coordinate under the desired distribution. 

For general $p$, note that our algorithm for insertion-only is actually just sampling according to frequency. 
Thus we can use $\O{1/\gamma}$ passes as in the previous paragraph to do this, and then we can simulate our old algorithm with these samples. 
As the insertion-only algorithm uses $\O{n^{1-1/p}}$ samples, we require $\O{n^{1-1/p + \gamma}\log n}$ bits of space in $\O{1/\gamma}$ passes. 
Upon obtaining the samples, we also need to deterministically compute a number $Z$ such that $\|f\|_{\infty}\le Z\le\|f\|_{\infty}+\frac{m}{n^{1-1/p}}$ for the frequency vector $f$. 
To do this, we can partition the universe $[n]$ into $n^{1-1/p+\gamma}$ chunks of size $n^{1/p-\gamma}$. 
There can be at most $n^{1-1/p}$ chunks whose coordinate sums are at least $\frac{m}{n^{1-1/p}}$. 
Thus, we have reduced the number of possible universe items by a factor of $1/n^{\gamma}$. 
Hence after $\O{1/\gamma}$ passes repetitions, we find any such item $\|f\|_{\infty}$ and thus obtain a deterministic estimate $Z$ such that $\|f\|_{\infty}\le Z\le\|f\|_{\infty}+\frac{m}{n^{1-1/p}}$. 

More generally, we have the following reduction:
\thmperfectmultipass*
\end{document}